\newtheorem{theorem}{Theorem}
\newtheorem{definition}{Definition}
\newtheorem{lemma}{Lemma}
\newtheorem{assumption}{Assumption}
\newtheorem{proposition}{Proposition}
\newtheorem{corollary}{Corollary}
\newtheorem{remark}{Remark}
\newcommand{\R}{\mathbb{R}}
\newcommand{\N}{\mathbb{N}}
\newcommand{\BlockNorm}[1]{\left[\kern-0.15em\left[ \left. {#1} 
		\right]\kern-0.15em\right]
	\right.\;}
\begin{document}
%
\title{Lyapunov Criterion for Stochastic Systems and Its Applications in Distributed Computation}
%
%
%
\author{Yuzhen~Qin,~\IEEEmembership{Student~Member,~IEEE,}
	 Ming~Cao,~\IEEEmembership{Senior~Member~IEEE,}\\
	 and~Brian D. O. Anderson,~\IEEEmembership{Life~Fellow,~IEEE} 
\thanks{Y. Qin and M. Cao are with the Institute of Engineering and Technology, Faculty of Science and Engineering, University of Groningen, Groningen, the Netherlands (\{y.z.qin, m.cao\}@rug.nl). B.D.O. Anderson is with School of Automation, Hangzhou Dianzi University, Hangzhou, 310018, China, and Data61-CSIRO and Research School of Engineering, Australian	National University, Canberra, ACT 2601, Australia (Brian.Anderson@anu.edu.au).

The work of Cao was supported in part by the European Research Council (ERC-CoG-771687) and the Netherlands Organization for Scientific Research (NWO-vidi-14134). The work of B.D.O. Anderson was supported by the Australian Research Council (ARC) under grants DP-130103610 and DP-160104500, and by Data61-CSIRO.
}
}


\maketitle

\begin{abstract}
This paper presents new sufficient conditions for {convergence and} asymptotic or exponential stability of a stochastic discrete-time system, under which the constructed Lyapunov function always decreases in expectation along the system's solutions after a finite number of steps, but without necessarily strict decrease at every step, in contrast to the classical stochastic Lyapunov theory. As the first application of this new Lyapunov criterion,  we look at the product of any random sequence of stochastic matrices, including those with zero diagonal entries, and obtain sufficient conditions to ensure the product almost surely converges to a matrix with identical rows; we also show that the rate of convergence can be exponential under additional conditions. As the second application, we study a distributed network algorithm for solving linear algebraic equations. We relax existing conditions on the network structures, while still guaranteeing the equations are solved asymptotically.
\end{abstract}

%
\IEEEpeerreviewmaketitle

\section{Introduction}
Stability analysis for stochastic dynamical systems has always been an active research field. Early works have shown that stochastic Lyapunov functions play an important role, and to use them for discrete-time systems, a standard procedure is to show that they decrease in \emph{expectation} at every time step \cite{Kushner_1972,Kushner_1971,Kushner_1965,beutler1973two}.  Properties of supermartingales and LaSalle's arguments are critical to establish the related proofs. However, most of the stochastic stability results are built upon a crucial assumption, which requires that the stochastic dynamical system under study is Markovian (see e.g., \cite{Kushner_1972,Kushner_1971,Kushner_1965,Khasminskii}), and very few of them have reported bounds for the convergence speed.

More recently, with the fast development of network algorithms, more and more distributed computational processes are carried out in networks of coupled computational units. Such dynamical processes are usually modeled by stochastic discrete-time dynamical systems since they are usually under inevitable influences from random changes of network structures {\cite{porfiri2007consensus,Tahbaz-Salehi_2010,lee2017stochastic,nedic2016stochastic}}, communication delay and noise {\cite{nilsson1998stochastic,yang2011filtering,wu2011consensus}}, and asynchronous updating events {\cite{tsitsiklis1986distributed,lee2016asynchronous}}. So there is great need in further developing Lyapunov theory for stochastic dynamical systems, in particular in the setting of network algorithms for distributed computation. And this is exactly the aim of this paper.

We aim at further developing the Lyapunov criterion for stochastic discrete-time systems. Motivated by the concept of  \textit{finite-step Lyapunov functions} for deterministic systems \cite{Aeyels,Geiselhart,Gielen}, we propose to define a \textit{finite-step stochastic Lyapunov function}, which decreases in expectation, not necessarily at every step, but after a finite number of steps. The associated new Lyapunov criterion not only enlarges the range of choices of candidate Lyapunov functions but also implies that the systems that it can be used to analyze do not need to be Markovian. An additional advantage of using this new criterion is that we are enabled to construct conditions to guarantee  exponential convergence and estimate convergence rates.

We then apply the finite-step stochastic Lyapunov function to study two distributed computation problems arising in some popular network algorithmic settings. In distributed optimization \cite{optimization,optimization_1} and other distributed coordination algorithms \cite{Reaching_consensus,Agreeing_asynchronously,Tahbaz-Salehi_2008,Tahbaz-Salehi_2010}, one frequently encounters the need to prove convergence of inhomogeneous Markov chains, or equivalently the convergence of backward products of random sequences of stochastic matrices $\{W(k)\}$.  Most of the existing results assume exclusively that all the $W(k)$ in the sequence have all positive diagonal entries, see e.g., \cite{Wu_Chai,Hendrickx,Liu_ji}. This assumption simplifies the analysis of convergence significantly; moreover, without this assumption, the existing results do not always hold. For example, from \cite{Tahbaz-Salehi_2008,Tahbaz-Salehi_2010} one knows that the product of $W(k)$ converges to a rank-one matrix almost surely if exactly one of the eigenvalues of the expectation of $W(k)$ has the modulus of one, which can be violated if $W(k)$ has zero diagonal elements. Note also that most of  the existing results  are confined to {special} random sequences, e.g., independently distributed sequences \cite{Tahbaz-Salehi_2008}, stationary ergodic sequences \cite{Tahbaz-Salehi_2010}, or independent sequences \cite{Touri,Touri_1}. Using the new Lyapunov criterion in this paper, we work on more general classes of random sequences of stochastic matrices without the assumption of non-zero diagonal entries. We show that if there exists a fixed length such that the product of any successive subsequence of matrices of this length has the \textit{scrambling} property (a standard concept, but it will be defined subsequently) with positive probability, the convergence to a rank-one matrix for the infinite product can be guaranteed almost surely. We also prove that the convergence can be exponentially fast if this probability is lower bounded by some positive number, and the greater the lower bound is, the faster the convergence becomes. For some particular random sequences, we further relax this ``scrambling" condition. If the random sequence is driven by a \textit{stationary} process, the almost sure convergence can be ensured as long as the product of any successive subsequence of finite length has positive probability to be indecomposable and aperiodic (SIA). The exponential convergence rate follows without other assumptions if the random process that governs the evolution of the sequence is a \textit{stationary ergodic} process.

As the second application of the finite-step stochastic Lyapunov functions, we investigate a distributed algorithm for solving linear algebraic equations of the form $Ax=b$. The equations are solved in parallel by $n$ agents, each of whom just knows a subset of the rows of the matrix $[A,b]$. Each agent recursively updates its estimate of the solution using the current estimates from its neighbors. Recently several solutions under different sufficient conditions have been proposed \cite{Liu_Ji_2017,Mou_2016,equation_Mou}, and in particular in \cite{equation_Mou}, the sequence of the neighbor relationship graphs $\mathcal{G}(k)$ is required to be repeated jointly strongly connected. We show that a much weaker condition is sufficient to solve the problem almost surely, namely the algorithm in \cite{equation_Mou} works if there exists a fixed length such that any subsequence of $\{\mathcal{G}(k)\}$ at this length is jointly strongly connected with positive probability.

The remainder of this paper is organized as follows. In Section II, we define the finite-step stochastic Lyapunov functions. Products of random sequences of stochastic matrices are studied in Section III; in Section IV we look into in particular the asynchronous implementation issues as an application of Section III. Finally, we study in Section V a distributed approach for solving linear equations. Brief concluding remarks appear in Section VI. 

{\textit{Notation}: Throughout this paper, $\N_0$ denotes the sets of non-negative integers, $\N$ the collection of positive integers, and $\R^q$ the real $q$-dimensional vector space. Moreover, we let $\mathbf{1}$ be the vector consisting of all ones, and let $\mathbf N=\{1,2,\dots,n\}$. Given a vector $x\in \R^n$, $x^i$ denotes the $i$th element of $x$. Let $\left\|\cdot\right\|$, $p\ge 1$, be any $p$-norm. A continuous function $h(x): [0,a)\to[0,\infty)$ is said to belong to class $\mathcal K$ if it is strictly increasing and $h(0)=0$. For any two events $A,B$, the conditional probability $\Pr[A|B]$ denotes the probability of $A$ given $B$. }

\section{Finite-Step Stochastic Lyapunov Functions}

 Consider a stochastic discrete-time system described by 	
	\begin{align}\label{model:main}
	&x_{k+1}=f(x_k,y_{k+1}),   &k\in \N_0,
	\end{align}
where $x_k\in \R^n$, {and $\{y_k: k \in \N\}$ is a $\R^d$-valued stochastic process on a probability space $(\Omega,\mathcal{F},\Pr )$}. Here $\Omega=\{\omega\}$ is the sample space; $\mathcal{F}$ is a set of events which is a $\sigma$-field; $\Pr : \mathcal{F} \to [0,1]$ is a function that assigns probabilities to events; {$y_k$ is a measurable function mapping $\Omega$ into the state space $\Omega_0 \subseteq \R^d$, and for any $\omega\in\Omega$, $\{y_k(\omega):k\in \N\}$ is a realization of the stochastic process $\{y_k\}$ at $\omega$}. Let $\mathcal F_k=\sigma(y_1,\dots,y_k)$ for $k \ge 1$, $\mathcal F_0=\{ \emptyset,\Omega\}$, so that evidently {$\{\mathcal F_k\}, k=1,2,\dots,$ is an increasing sequence of $\sigma$-fields}. Following \cite{reif1999stochastic}, we consider a constant initial condition $x_0\in \R^n$ with probability one. It then can be observed that the solution to \eqref{model:main}, $\{x_k\}$, is a {$\R^n$-valued} stochastic process adapted to $\mathcal F_k$.
The randomness of $y_k$ can be due to various reasons, e.g., stochastic disturbances or noise. Note that \eqref{model:main} becomes a stochastic switching system if $f(x,y )=g_{y}(x)$, where $y$ {maps $\Omega$ into} the set $\Omega_0:=\{1,\dots,p\}$, and $\{g_p(x):\mathbb R^n \to \mathbb R^n,p\in  \Omega_0\}$ is a given family of functions.
 
A point $x^*$ is said to be an \textit{equilibrium} of system \eqref{model:main} if $f(x^*,y)=x^*$ for any $y\in \Omega_0$. Without loss of generality, we assume that the origin $x=0$ is an equilibrium.
 Researchers have been interested in studying the limiting behavior of the solution $\{x_k\}$, i.e., when and to where $x_k$ converges as $k \to \infty$. Most noticeably, Kushner developed classic results on stochastic stability by employing stochastic Lyapunov functions \cite{Kushner_1972,Kushner_1971,Kushner_1965}. We introduce some related definitions before recalling some Kushner's results. Following \cite[Sec. 1.5.6]{Gallager} and \cite{Bitmead}, we first define convergence and exponential convergence of a sequence of random variables.

\begin{definition}[Convergence]\label{def:convergence}
	 A random sequence $\{x_k\in \R^n\}$  in a sample space $\Omega$ converges to a random variable $x$ \emph{almost surely} if $
		\Pr \left[ \omega \in \Omega:\lim_{k\to \infty}\|x_k(\omega)-x \|=0 \right]=1.
		$
The convergence is said to be exponentially fast with a rate {no slower than} $\gamma^{-1}$ for some $\gamma>1$ independent of $\omega$ if {$\gamma^k \|x_k-x\|$  almost surely converges to $y$ for some finite $y\ge 0$.} Furthermore, let $\mathcal D \subset\R^n$ be a set; a random sequence $\{x_k\}$ is said to converge to $\mathcal D$ almost surely if $\Pr\left[ \omega \in \Omega:\lim_{k\to \infty}{\rm dist} (x_k(\omega), \mathcal{D} )=0\right]=1,$ where ${\rm dist\;}(x,\mathcal D):=\inf_{y\in \mathcal D}\|x-y\|$.
\end{definition} 

Here ``almost surely" is exchangeable with ``with probability one", and we sometimes use the shorthand notation ``a.s.".  {We now introduce some stability concepts for stochastic discrete-time systems analogous to those in \cite{Khasminskii} and \cite{nishimura2016conditions} for continuous-time systems{\footnote[1]{Note that 1) and 2) of Definition \ref{de_stability_inpro} follow from the definitions in \cite[Chap. 5]{Khasminskii}, in which an arbitrary initial time $s$ rather than just $0$ is actually considered.  We define 3) following the same lines as 1) and 2). In Definition \ref{de_stability_as}, 1) follows from the definitions in \cite{nishimura2016conditions}, and we define 2) following the same lines as 1). }}.

\begin{definition}\label{de_stability_inpro}
	The origin of (1) is said to be: 
	
	 \emph{1) stable in probability} if $\lim\nolimits_{x_0\to 0} \Pr \left[ \sup\nolimits_{k\in \N}\|x_k\|>\varepsilon \right]=0$ for any $\varepsilon>0$; 
	
	\emph{2) asymptotically stable in probability} if it is stable in probability and moreover
	$\lim\nolimits_{x_0\to 0} \Pr \left[ \lim\nolimits_{k\to \infty}\|x_k\|=0 \right]=1$;
	
	\emph{3) exponentially stable in probability} if for some $\gamma >1$ {independent of $\omega$}, $\lim\nolimits_{x_0\to 0} \Pr \left[ \lim\nolimits_{k\to \infty}\|\gamma^k x_k\|=0 \right]=1$;
\end{definition}

\begin{definition}\label{de_stability_as}
	For a set $\mathcal Q \subseteq \mathbb R^n$ containing the origin, the origin of (1) is said to be:
	
	 \emph{1) locally a.s. asymptotically stable in $\mathcal Q$ (globally a.s. asymptotically stable, respectively)} if starting from $x_0\in \mathcal Q$ ($x_0\in \mathbb R^n$, respectively) all the sample paths $x_k$ stay in $\mathcal Q$ ($\mathbb R^n$, respectively) for all $k\ge 0$ and converge to the origin almost surely; 
	 
	 \emph{2) locally a.s. exponentially stable  in $\mathcal Q$ (globally a.s.  exponentially stable, respectively)}  if it is locally (globally, respectively) a.s. asymptotically stable and the convergence is exponentially fast.
\end{definition}
}

  Now let us recall some Kushner's results on {convergence and stability}, where stochastic Lyapunov functions have been used.
\begin{lemma}[Asymptotic Convergence and Stability] \label{lemma_asym_stab}
	 For the  stochastic discrete-time system \eqref{model:main}, let $\{x_k\}$ be a Markov process. Let $V:\mathbb{R}^n\to \mathbb{R}$ be a {continuous positive definite and radially unbounded function}. Define the set $\mathcal Q_{\lambda}:=\{x:0 \le V(x)<\lambda\}$ for some $\lambda>0$, and assume that
	 \begin{equation}\label{asym_condi}
	 \mathbb{E}\left[ V\left(x_{k+1}\right)|x_k\right]-V\left(x_k\right)\le -\varphi(x_k), \forall k,
	 \end{equation}
	 where {$\varphi: \mathbb R^n\to \mathbb R$ is continuous and satisfies $\varphi(x)\ge 0$ for any $x \in\mathcal Q_\lambda$}. {Then the following statements apply: 
	 	
	 	\emph{i)} for any initial condition $x_0 \in \mathcal Q_\lambda$, $x_k$ converges to $\mathcal{D}_1:=\{x \in \mathcal Q_{\lambda}:\varphi(x)=0\}$ with probability  at least $1-V(x_0)/\lambda$ \cite{Kushner_1965};
	 	 
	 	\emph{ii)} if moreover $\varphi(x)$ is positive definite on $\mathcal Q_\lambda$, and $h_1\left(\| s \|\right) \le V(s)\le h_2\left(\| s \|\right)$ for two class $\mathcal{K}$ functions $h_1$ and $h_2$, then $x=0$ is asymptotically stable in probability \cite{Kushner_1965}, \cite[Theorem 7.3]{kushner2014partial}.
}
\end{lemma}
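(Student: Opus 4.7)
The plan is to exploit the supermartingale structure of $V(x_k)$ after stopping on first exit from $\mathcal Q_\lambda$. Introduce the stopping time $\tau_\lambda := \inf\{k\ge 0 : V(x_k)\ge\lambda\}$ and set $V_k := V(x_{k\wedge\tau_\lambda})$. Since $\{x_k\}$ is Markov, $\E[V(x_{k+1})\mid\mathcal F_k]=\E[V(x_{k+1})\mid x_k]$, so \eqref{asym_condi} combined with $\varphi\ge 0$ on $\mathcal Q_\lambda$ gives $\E[V_{k+1}\mid\mathcal F_k]\le V_k$; hence $\{V_k\}$ is a nonnegative supermartingale adapted to $\{\mathcal F_k\}$.

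For part \emph{i)}, first apply Doob's maximal inequality: $\Pr[\sup_k V_k\ge\lambda]\le V(x_0)/\lambda$, so $\Pr[\tau_\lambda<\infty]\le V(x_0)/\lambda$. On the complementary event $E=\{\tau_\lambda=\infty\}$, which has probability at least $1-V(x_0)/\lambda$, the trajectory stays in $\mathcal Q_\lambda$ forever. Telescoping \eqref{asym_condi} and taking expectations, together with monotone convergence, yields $\E\bigl[\sum_{k=0}^{\infty}\varphi(x_k)\,\mathbbm{1}_E\bigr]\le V(x_0)<\infty$, so $\sum_k\varphi(x_k)<\infty$ and in particular $\varphi(x_k)\to 0$ almost surely on $E$. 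Doob's convergence theorem further ensures that $V(x_k)$ has a finite almost sure limit; combined with radial unboundedness of $V$, this keeps the trajectory inside a compact subset of $\mathcal Q_\lambda$. A LaSalle-type accumulation argument, using continuity of $\varphi$, then forces every limit point of $\{x_k\}$ to lie in $\mathcal D_1$, giving $x_k\to\mathcal D_1$ on $E$.

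For part \emph{ii)}, pick arbitrary $\varepsilon>0$ and set $\hat\lambda:=\min\{h_1(\varepsilon),\lambda\}$. The sandwich $h_1(\|\cdot\|)\le V\le h_2(\|\cdot\|)$ ensures that $\|x_k\|>\varepsilon$ implies $V(x_k)>\hat\lambda$, so the Doob estimate applied at level $\hat\lambda$ yields $\Pr[\sup_k\|x_k\|>\varepsilon]\le V(x_0)/\hat\lambda\le h_2(\|x_0\|)/\hat\lambda\to 0$ as $x_0\to 0$, proving stability in probability. Positive definiteness of $\varphi$ on $\mathcal Q_\lambda$ collapses $\mathcal D_1$ to $\{0\}$, whence part \emph{i)} gives $\Pr[\lim_{k\to\infty}\|x_k\|=0]\ge 1-V(x_0)/\lambda\to 1$ as $x_0\to 0$, completing asymptotic stability in probability.

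The principal obstacle will be passing from the summability of $\varphi(x_k)$ to set-convergence of $x_k$ to $\mathcal D_1$. Summability only gives that $\varphi$ evaluated along the sequence vanishes; translating this into convergence of the state itself requires the boundedness of $\{x_k\}$ (obtained from radial unboundedness together with the almost-sure limit of the supermartingale $V(x_k)$), the continuity of $\varphi$, and a Bolzano--Weierstrass-style accumulation-point argument to rule out limit points outside $\mathcal D_1$. The remainder of the proof is a standard invocation of Doob's maximal and convergence results.
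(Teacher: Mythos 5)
Your proposal is correct and is essentially the paper's own argument: the paper does not prove this lemma directly (it is recalled from Kushner), but it proves the $T$-step generalization, Theorem~\ref{stability_finite}, in exactly this way — the supermartingale maximal inequality \eqref{Pr_leave}, telescoping the drift inequality to get summability of the expected $\varphi$-values, an appeal to Lemma~\ref{Borel-Cantelli}, and continuity of $\varphi$, followed by the same class-$\mathcal K$ sandwich argument and the $\mathcal D_1=\{0\}$ reduction for part ii). Your stopping time $\tau_\lambda$ and monotone-convergence step are just cleaner packagings of the paper's dummy-state process $\tilde z_k$ and its Borel--Cantelli-type lemma, so the two proofs coincide in substance.
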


\begin{lemma}[Exponential Convergence and Stability] \label{lemma_expone_stab}
	For the  stochastic discrete-time system \eqref{model:main}, let $\{x_k\}$ be a Markov process. Let $V:\mathbb{R}^n\to \mathbb{R}$ be a continuous nonnegative function.
	Assume that
	\begin{align}\label{expone_condi}
	&\mathbb{E}\left[ V\left(x_{k+1}\right)|x_k\right]-V\left(x_k\right)\le -\alpha V(x_k), &0<\alpha <1.
	\end{align}
	Then the following statements apply: 
	
	\emph{i)} {for any given $x_0$}, $V(x_k)$ almost surely converges to $0$ exponentially fast with a rate no slower than $1-\alpha$ \cite[Th. 2, Chap. 8]{Kushner_1971}, \cite{kushner2014partial};
	
	\emph{ii)} if moreover $V$ satisfies $c_1 \|x\|^a \le V(x)\le c_2 \|x\|^a$ for some $c_1,c_2, a >0$, then $x=0$ is globally a.s. exponentially stable \cite[Theorem 7.4]{kushner2014partial}. 
\end{lemma}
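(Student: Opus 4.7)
The plan is to convert the one-step contraction-in-expectation hypothesis \eqref{expone_condi} into a bona fide supermartingale property by means of an exponential rescaling, and then appeal to Doob's nonnegative supermartingale convergence theorem. Concretely, I would define
$$
Y_k := (1-\alpha)^{-k}\, V(x_k), \qquad k\in\N_0.
$$
Because $\{x_k\}$ is a Markov process, $\E[V(x_{k+1})\mid\mathcal F_k] = \E[V(x_{k+1})\mid x_k]$, and substituting \eqref{expone_condi} yields
$$
\E\bigl[Y_{k+1}\mid\mathcal F_k\bigr] \le (1-\alpha)^{-(k+1)}(1-\alpha)\,V(x_k) = Y_k.
$$
Since $Y_k\ge 0$, this exhibits $\{Y_k\}$ as a nonnegative supermartingale adapted to $\{\mathcal F_k\}$, with $\E[Y_k]\le V(x_0)<\infty$ for every $k$.

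Doob's convergence theorem then delivers a nonnegative, almost surely finite random variable $Y_\infty$ with $Y_k\to Y_\infty$ a.s. Setting $\gamma := 1/(1-\alpha)>1$, this is precisely $\gamma^k V(x_k)\to Y_\infty$ almost surely, which is exactly the statement, in the sense of Definition~\ref{def:convergence}, that $V(x_k)\to 0$ exponentially with rate no slower than $1-\alpha$; this establishes i).

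For ii), I would combine i) with the sandwich bound $c_1\|x\|^a\le V(x)\le c_2\|x\|^a$. The lower bound gives $\|x_k\|\le (V(x_k)/c_1)^{1/a}$, and choosing $\tilde\gamma := (1-\alpha)^{-1/a}>1$ yields
$$
\tilde\gamma^{\,k}\|x_k\| \;\le\; \bigl(Y_k/c_1\bigr)^{1/a} \;\xrightarrow[k\to\infty]{\text{a.s.}}\; \bigl(Y_\infty/c_1\bigr)^{1/a},
$$
which is a finite limit. Since this holds for every initial condition $x_0\in\R^n$, the origin is globally a.s.\ exponentially stable in the sense of Definition~\ref{de_stability_as}. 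The only non-routine step I anticipate is aligning the rate exponent when transferring decay from $V(x_k)$ to $\|x_k\|$ through the power $a$; beyond this bookkeeping I do not foresee any serious obstacle, because once the correctly rescaled supermartingale is in hand, the remainder follows directly from the classical convergence theorem.
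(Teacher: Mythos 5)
Your proof is correct: with the Markov property, \eqref{expone_condi} makes $(1-\alpha)^{-k}V(x_k)$ a nonnegative supermartingale, and Lemma \ref{supermartingale} (Doob) yields the a.s.\ finite limit of $\gamma^k V(x_k)$, which is exactly exponential convergence in the sense of Definition \ref{def:convergence}; the sandwich bound then transfers this to $\|x_k\|$ with exponent $1/a$. The paper itself only cites Kushner for this lemma, but your argument is precisely the $T=1$ specialization of the paper's own proof of Theorem \ref{stability_expone_finite}, which applies the same exponential rescaling and supermartingale convergence to the subsampled sequences $Y^{(r)}_m = V(\hat z_{mT+r})$ --- so your approach coincides with the paper's.
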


To use these two lemmas to prove asymptotic (or exponential) stability for a stochastic system, the critical step is to find a stochastic Lyapunov function such that \eqref{asym_condi} (respectively, \eqref{expone_condi}) holds. However, it is not always obvious how to construct such a stochastic Lyapunov function. We use the following toy example to illustrate this point.

\textit{Example 1.} Consider a randomly switching system described by $x_k = A_{y_k} x_{k-1}$, where $y_k$ is the switching signal taking values in a finite set $\mathcal{P}:=\{1,2,3\}$,and
\begin{align*}
A_1=\left[ {\begin{array}{*{20}{c}}
	{0.2}&{0}\\
	{0}&{1}
	\end{array}} \right], A_2=\left[ {\begin{array}{*{20}{c}}
	{1}&{0}\\
	{0}&{0.8}
	\end{array}} \right], A_3=\left[ {\begin{array}{*{20}{c}}
	{1}&{0}\\
	{0}&{0.6}
	\end{array}} \right].
\end{align*}
  The stochastic process $\{y_k\}$ is described by a Markov chain with initial distribution $v=\{v_1,v_2,v_3\}$. The transition probabilities are described by a transition matrix 
  \begin{align*}
  \pi=\left[ {\begin{array}{*{20}{c}}
  	{0}&{0.4}&{0.6}\\
  	{1}&{0}&{0}\\
  	{1}&{0}&{0}
  	\end{array}} \right],
  \end{align*}
  whose $ij$th element is defined by $\pi_{ij}=\Pr[y_{k+1}=j|y_k=i]$. {Since $\{y_k\}$ is not  independent and identically distributed, the process $\{x_k\}$ is not Markovian. Nevertheless, we might }conjecture that the origin is globally a.s. exponentially stable. In order to try to prove this, we might choose a stochastic Lyapunov function candidate $V(x)=\left\| x \right\|_\infty$, but the existing results introduced in Lemma \ref{lemma_expone_stab} cannot be used since $\{x_k\}$ is not Markovian. Moreover, by calculation we observe that  $\mathbb{E}\left[ {\left. {V\left( {x_{k+1}} \right)} \right| x_k,y_k} \right] \le V\left( {x_k} \right)$ for any $y_k$, which implies that  \eqref{expone_condi} {is not necessarily satisfied}.  Thus $V(x)$ is not an appropriate stochastic Lyapunov function for which Lemma \ref{lemma_expone_stab} can be applied. As it turns out however, the same $V(x)$ can be used as a Lyapunov function to establish exponentially stability via the alternative criterion set out subsequently. \hfill   $ \Box $ 

It is difficult, if not impossible, to construct a stochastic Lyapunov function, especially when the state of the system is not Markovian. So it is of great interest to generalize the results in Lemmas \ref{lemma_asym_stab} and \ref{lemma_expone_stab} such that the range of choices of candidate Lyapunov functions can be enlarged. For deterministic systems, Aeyels et al. have introduced a new Lyapunov criterion to study asymptotic stability of continuous-time systems \cite{Aeyels}; {a} similar criterion has also been obtained for discrete-time systems, and the Lyapunov functions satisfying this criterion are called \emph{finite-step Lyapunov functions} \cite{Geiselhart,Gielen}. A common feature of these works is that the Lyapunov function is required to decrease along the system's solutions after a finite number of steps, but not necessarily at every step. We now use this idea to construct stochastic finite-step Lyapunov functions, a task which is much more challenging compared to the deterministic case due to the uncertainty present in stochastic systems. The tools for analysis are totally different from what are used for deterministic systems.  We will exploit supermartingales and their convergence property, as well as the Borel-Cantelli Lemma{; these concepts} are introduced in the two following lemmas.

 \begin{lemma}[{\cite[Sec. 5.2.9]{prob_book}}]\label{supermartingale}
	Let the sequence $\{X_k\}$ be a nonnegative \emph{supermartingale} with respect to $\mathcal{F}_k=\sigma(X_1,\dots,X_k)$, {i.e., suppose:} (i) $\mathbb{E}X_n < \infty$; (ii) $X_k \in \mathcal{F}_k$ for all $k$;  (iii) $\mathbb{E}\left( {\left. {{X_{k + 1}}} \right|{\mathcal{F}_k}} \right)  \le {X_k}$. {Then} there exists some random $X$ such that $X_k\stackrel{a.s.}{\longrightarrow} X, k \to \infty$,
	and $\mathbb{E}X \le \mathbb{E}X_0$. 
\end{lemma}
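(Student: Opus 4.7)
The plan is to invoke the classical \emph{upcrossing} technique of Doob. For any rationals $a<b$ and any $n\in \N$, I would define $U_n^{[a,b]}$ as the number of upcrossings of the interval $[a,b]$ completed by the finite sequence $X_1,\dots,X_n$, delimited by successive stopping times $\sigma_1<\tau_1<\sigma_2<\tau_2<\cdots$ where $\sigma_j$ is the first time after $\tau_{j-1}$ that $X_k\le a$ and $\tau_j$ is the first subsequent time that $X_k\ge b$. The key step is to establish Doob's upcrossing inequality
\begin{equation*}
(b-a)\,\mathbb{E}\bigl[U_n^{[a,b]}\bigr] \;\le\; \mathbb{E}\bigl[(X_n-a)^{-}\bigr] \;\le\; a,
\end{equation*}
where the last bound uses $X_n\ge 0$.

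The derivation of the upcrossing inequality rests on a \emph{predictable gambling strategy}: let $H_k=1$ between a visit below $a$ and the next visit above $b$, and $H_k=0$ otherwise. Since $H_k$ is $\mathcal F_{k-1}$-measurable and bounded, the discrete stochastic integral $Y_n:=\sum_{k=1}^{n}H_k(X_k-X_{k-1})$ satisfies $\mathbb{E}[Y_n]\le 0$ by repeated use of property (iii), while by construction it dominates $(b-a)\,U_n^{[a,b]}-(X_n-a)^{-}$. Rearranging yields the displayed inequality. I expect this to be the main technical obstacle, because the careful bookkeeping of the stopping times, the verification that $H_k$ is indeed predictable, and the pathwise lower bound on $Y_n$ are all easy to get wrong on first pass.

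With the upcrossing bound in hand, monotone convergence in $n$ gives $\mathbb{E}[U_\infty^{[a,b]}]\le a/(b-a)<\infty$, so $U_\infty^{[a,b]}<\infty$ almost surely. The event $\{\liminf_k X_k<a<b<\limsup_k X_k\}$ forces infinitely many upcrossings, hence has probability zero. Taking the countable union over rational pairs $a<b$ shows $\liminf_k X_k=\limsup_k X_k$ almost surely, so the limit $X:=\lim_{k\to\infty} X_k$ exists in $[0,\infty]$ on a set of full measure. Finally, iterating (iii) gives $\mathbb{E}[X_k]\le \mathbb{E}[X_0]$ for every $k$, and Fatou's lemma applied to the nonnegative sequence yields
\begin{equation*}
\mathbb{E}[X]\;=\;\mathbb{E}\bigl[\liminf_{k\to\infty} X_k\bigr]\;\le\;\liminf_{k\to\infty}\mathbb{E}[X_k]\;\le\;\mathbb{E}[X_0],
\end{equation*}
which simultaneously bounds $\mathbb{E}[X]$ and certifies $X<\infty$ a.s. (thus ruling out escape to $+\infty$), completing the argument.
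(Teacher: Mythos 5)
Your proof is correct: the upcrossing inequality is stated and derived properly (the predictable strategy $H_k$ is indeed $\mathcal F_{k-1}$-measurable, the pathwise bound $Y_n\ge (b-a)U_n^{[a,b]}-(X_n-a)^-$ holds, and nonnegativity gives $\mathbb{E}[(X_n-a)^-]\le a$ for the relevant pairs $0\le a<b$), and the passage from finitely many upcrossings to almost sure convergence plus the Fatou step are all standard and sound. Note, however, that the paper offers no proof of this lemma at all — it is imported verbatim from the cited probability text — and your argument is essentially the proof given in that reference: the cited source proves the submartingale convergence theorem via Doob's upcrossing inequality and then obtains this statement by applying it to $-X_k$, whereas you run the upcrossing argument directly on the nonnegative supermartingale; the two routes are the same technique up to a sign.
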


{\begin{lemma}[{Borel-Cantelli Lemma, \cite[P.192]{Kushner_1971}}] \label{Borel-Cantelli}
	{Let $\{X_k\}$ be a  nonnegative random sequence. If $\sum_{k=0}^{\infty} \mathbb{E} X_k<\infty$, then $X_k \stackrel{a.s.}{\longrightarrow}0$.}
\end{lemma}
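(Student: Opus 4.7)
The plan is to reduce the claim to the classical Borel--Cantelli lemma on events by applying Markov's inequality at each threshold, and then to upgrade almost sure non-occurrence for a fixed threshold to almost sure convergence by taking a countable intersection over a sequence of thresholds tending to zero. Alternatively, one can give a slicker one-line argument via Tonelli's theorem; I would present both for completeness and pick the event-based argument as the main line since it is the most transparent.

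First, I would fix an arbitrary $\varepsilon>0$ and define the events $A_k^\varepsilon:=\{\omega\in\Omega:X_k(\omega)>\varepsilon\}$. Since each $X_k$ is nonnegative, Markov's inequality gives $\Pr[A_k^\varepsilon]\le \mathbb{E}[X_k]/\varepsilon$. Summing over $k$ and using the hypothesis $\sum_{k=0}^{\infty}\mathbb{E}[X_k]<\infty$ yields $\sum_{k=0}^\infty \Pr[A_k^\varepsilon]<\infty$. The classical (event-based) Borel--Cantelli lemma then ensures $\Pr[A_k^\varepsilon \text{ infinitely often}]=0$, that is, there is an almost sure event $\Omega_\varepsilon$ on which $X_k(\omega)\le \varepsilon$ for all sufficiently large $k$.

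Next, to convert this into almost sure convergence to $0$, I would take a countable sequence $\varepsilon_m\downarrow 0$ (for instance $\varepsilon_m=1/m$) and set $\Omega_0:=\bigcap_{m\in\N}\Omega_{\varepsilon_m}$. A countable intersection of probability-one events still has probability one, so $\Pr[\Omega_0]=1$. On $\Omega_0$, for every $m$ there exists $K_m(\omega)$ such that $X_k(\omega)\le 1/m$ for all $k\ge K_m(\omega)$, which is exactly the statement $\limsup_{k\to\infty} X_k(\omega)\le 0$. Combined with $X_k\ge 0$, this gives $X_k(\omega)\to 0$ pointwise on $\Omega_0$, hence $X_k\stackrel{a.s.}{\longrightarrow}0$.

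As a sanity check, I would also note the Tonelli-based shortcut: since the $X_k$ are nonnegative, Tonelli's theorem allows exchanging sum and expectation to obtain $\mathbb{E}\!\left[\sum_{k=0}^{\infty}X_k\right]=\sum_{k=0}^{\infty}\mathbb{E}[X_k]<\infty$, which forces $\sum_{k=0}^{\infty}X_k<\infty$ almost surely; the tail of a convergent nonnegative series tends to zero, hence $X_k\to 0$ a.s. There is essentially no technical obstacle here; the only subtlety worth highlighting is the need to take the intersection over a \emph{countable} family of thresholds in the first argument, since an uncountable intersection of probability-one events need not have probability one.
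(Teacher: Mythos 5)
Your proof is correct; note, however, that the paper does not prove this lemma at all---it is stated with a citation to Kushner's book \cite[P.192]{Kushner_1971} and used as a known result, so there is no in-paper argument to compare against. Both of your routes are valid. The main line (Markov's inequality giving $\Pr[X_k>\varepsilon]\le \mathbb{E}[X_k]/\varepsilon$, summability, the classical event-form Borel--Cantelli lemma, and then the countable intersection over thresholds $\varepsilon_m=1/m$) is carefully executed, and you correctly flag the one genuine subtlety, namely that the intersection must be over a \emph{countable} family of thresholds. The Tonelli shortcut is even cleaner: nonnegativity lets you exchange sum and expectation to get $\mathbb{E}\bigl[\sum_{k=0}^{\infty}X_k\bigr]=\sum_{k=0}^{\infty}\mathbb{E}[X_k]<\infty$, so $\sum_{k=0}^{\infty}X_k<\infty$ almost surely and the terms of an a.s.\ convergent nonnegative series tend to zero. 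This second argument is the standard textbook proof of the expectation form of Borel--Cantelli and is the more economical; the event-based argument has the pedagogical virtue of making the link to the classical lemma explicit. Either would serve as a complete proof of the stated lemma.
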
}

We are now ready to present our first main result {on stochastic convergence and stability}.

\begin{theorem}\label{stability_finite}
	For the stochastic discrete-time system \eqref{model:main},	let
	$V:\mathbb{R}^n\to \mathbb{R}$ be a {continuous nonnegative and radially unbounded function}. Define the set $\mathcal Q_{\lambda}:=\{x: V(x)<\lambda\}$ for some $\lambda>0$, and assume that
	
	\emph{a)} $\mathbb{E}\left[ V\left(x_{k+1}\right)|\mathcal F_k\right]-V\left(x_k\right)\le 0$ for any $k$ such that $x_k \in \mathcal {Q}_\lambda$;
		
	\emph{b)}  there is an integer $T \ge 1$, independent of $\omega$, such that for any $k $, $\mathbb{E}\left[ V\left(x_{k+T}\right)|\mathcal{F}_k\right]-V\left(x_k\right)\le -\varphi(x_k)$, {where $\varphi: \mathbb R^n\to \mathbb R$ is continuous and satisfies $\varphi(x)\ge 0$ for any $x \in\mathcal Q_\lambda$}.	\\		
	 Then the following statements apply: 
	 
	 \emph{i)} for any initial condition $x_0 \in \mathcal Q_\lambda$, $x_k$ converges to $\mathcal{D}_1:=\{x \in \mathcal Q_{\lambda}:\varphi(x)=0\}$ with probability  at least $1-V(x_0)/\lambda$; 
	 
	 \emph{ii)} if moreover $\varphi(x)$ is positive definite on $\mathcal Q_\lambda$, and $h_1\left(\| s \|\right) \le V(s)\le h_2\left(\| s \|\right)$ for two class $\mathcal{K}$ functions $h_1$ and $h_2$, then $x=0$ is asymptotically stable in probability.
\end{theorem}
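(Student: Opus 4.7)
The plan is to convert hypothesis (a) into a bona fide nonnegative supermartingale by stopping the trajectory at its first exit from $\mathcal Q_\lambda$, apply a Doob-type maximal inequality to lower bound the probability of never leaving, and then, on the trapping event, telescope hypothesis (b) along each of the $T$ residue classes modulo $T$ and invoke Lemma \ref{Borel-Cantelli} to deduce $\varphi(x_k)\to 0$.

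For part (i), introduce the exit time $\tau:=\inf\{k\ge 0 : x_k\notin\mathcal Q_\lambda\}$ and the stopped process $M_k:=V(x_{k\wedge\tau})$. Hypothesis (a), together with $M_{k+1}=M_k$ on $\{k\ge\tau\}$, makes $\{M_k\}$ a nonnegative $\mathcal F_k$-supermartingale. The standard maximal inequality then gives $\Pr[\sup_k M_k\ge\lambda]\le V(x_0)/\lambda$; continuity of $V$ with $\mathcal Q_\lambda=\{V<\lambda\}$ forces $V(x_\tau)\ge\lambda$ on $\{\tau<\infty\}$, so $\Pr[A]\ge 1-V(x_0)/\lambda$ where $A:=\{\tau=\infty\}$. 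On $A$ every iterate lies in $\mathcal Q_\lambda$, hence $\varphi(x_k)\ge 0$. To extract summability, fix $m\in\{0,\ldots,T-1\}$ and the $\mathcal F_{jT+m}$-measurable events $B_j:=\{\tau>jT+m\}$. Multiplying hypothesis (b) by $\mathbbm 1_{B_j}$, taking expectations, and using $B_{j+1}\subseteq B_j$ together with $V\ge 0$ telescopes to
\[
\sum_{j=0}^{\infty}\mathbb E\bigl[\varphi(x_{jT+m})\mathbbm 1_{B_j}\bigr]\le \mathbb E[V(x_m)\mathbbm 1_{B_0}]\le V(x_0),
\]
the last step using the supermartingale property of $M_k$. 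Since $A\subseteq B_j$ and $\varphi\ge 0$ on $B_j$, the same bound dominates $\sum_{j}\mathbb E[\varphi(x_{jT+m})\mathbbm 1_A]$, so Lemma \ref{Borel-Cantelli} yields $\varphi(x_{jT+m})\mathbbm 1_A\to 0$ a.s.; ranging $m$ over $\{0,\ldots,T-1\}$ gives $\varphi(x_k)\to 0$ a.s.\ on $A$. Radial unboundedness and continuity of $V$ make $\overline{\mathcal Q_\lambda}$ compact, so $\{x_k\}$ is bounded on $A$, and continuity of $\varphi$ upgrades $\varphi(x_k)\to 0$ to $\operatorname{dist}(x_k,\mathcal D_1)\to 0$, proving (i).

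For part (ii), the class-$\mathcal K$ sandwich $h_1(\|x\|)\le V(x)\le h_2(\|x\|)$ translates everything into $\|x\|$-language. Given $\varepsilon>0$, set $\lambda:=h_1(\varepsilon)$ so that $\mathcal Q_\lambda\subseteq\{\|x\|<\varepsilon\}$; positive definiteness of $\varphi$ on $\mathcal Q_\lambda$ collapses $\mathcal D_1$ to $\{0\}$. For any $x_0$ with $h_2(\|x_0\|)<\lambda$, part (i) yields
\[
\Pr\!\left[\,\sup_k\|x_k\|<\varepsilon\text{ and } x_k\to 0\,\right]\ge 1-\frac{h_2(\|x_0\|)}{h_1(\varepsilon)},
\]
and letting $x_0\to 0$ produces both $\lim_{x_0\to 0}\Pr[\sup_k\|x_k\|>\varepsilon]=0$ and $\lim_{x_0\to 0}\Pr[\lim_k\|x_k\|=0]=1$, i.e.\ asymptotic stability in probability.

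I expect the main obstacle to be the telescoping step in the second paragraph: because $\varphi$ is only assumed nonnegative \emph{inside} $\mathcal Q_\lambda$, one cannot sum hypothesis (b) directly but must first dress each term with the $\mathcal F_{jT+m}$-measurable indicator $\mathbbm 1_{B_j}$, and the Borel-Cantelli conclusion must be obtained on each of the $T$ residue classes separately before being stitched together. A secondary subtlety is ruling out limit points on $\partial\mathcal Q_\lambda$ when passing from $\varphi(x_k)\to 0$ to $\operatorname{dist}(x_k,\mathcal D_1)\to 0$, which is handled by combining continuity of $\varphi$ with compactness of $\overline{\mathcal Q_\lambda}$ and the a.s.\ convergence of the supermartingale $M_k$.
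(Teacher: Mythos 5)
Your proposal is correct, and it shares the paper's overall skeleton: a supermartingale maximal inequality giving $\Pr[\sup_k V(x_k)\ge\lambda]\le V(x_0)/\lambda$ for the trapping probability, a telescoping of hypothesis b) along the $T$ residue classes modulo $T$, Lemma \ref{Borel-Cantelli} to conclude $\varphi(x_k)\to 0$ on the trapping event, and a part ii) argument essentially identical to the paper's (class-$\mathcal K$ sandwich for stability in probability, then part i) plus positive definiteness of $\varphi$ for attractivity). Where you genuinely depart from the paper is in how the argument is localized to $\mathcal Q_\lambda$, and your version is the sounder one. The paper replaces $x_k$ by a modified process $\tilde z_k$ frozen at a fictitious state $\epsilon$ with $V(\epsilon)=\tilde\lambda>\lambda$ from time $J-T$ onward ($J$ the first exit time) and declares the inequality $\mathbb{E}[V(\tilde z_{k+T})|\mathcal F_k]-V(\tilde z_k)\le-\tilde\varphi(\tilde z_k)$ to be ``immediately clear''; in fact $\tilde z_k$ is not $\mathcal F_k$-adapted (the defining event $\{k\ge J-T\}=\{J\le k+T\}$ lies in $\mathcal F_{k+T}$, not $\mathcal F_k$), and on the straddling window $J-2T\le k<J-T$ one has $\tilde z_k=x_k$ but $\tilde z_{k+T}=\epsilon$, so the conditional expectation is inflated by at least $(\tilde\lambda-\lambda)\Pr[k+T<J\le k+2T\,|\,\mathcal F_k]$, a quantity hypothesis b) does not control; one can build simple two-state-multiplier examples satisfying a) and b) with equality for which the paper's asserted inequality fails on a positive-probability event. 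Your device---multiplying b) by the honestly $\mathcal F_{jT+m}$-measurable indicators $\mathbbm{1}_{B_j}$ with $B_j=\{\tau>jT+m\}$, telescoping via $B_{j+1}\subseteq B_j$ and $V\ge 0$, and closing the sum with $\mathbb{E}[V(x_m)\mathbbm{1}_{B_0}]\le\mathbb{E}[M_m]\le V(x_0)$---achieves the same summability with no anticipating object, so your route not only proves the theorem but repairs the paper's argument. One caveat you share with the paper: the passage from $\varphi(x_k)\to 0$ to ${\rm dist}(x_k,\mathcal D_1)\to 0$ tacitly excludes limit points on $\partial\mathcal Q_\lambda$, where continuity forces $\varphi$ to vanish but which need not be near $\mathcal D_1\subset\mathcal Q_\lambda$; the paper makes the identical leap, and your appeal to a.s. convergence of $M_k$ does not rule such points out either (the limit of $V(x_k)$ may equal $\lambda$), so the airtight formulation of i) would assert convergence to $\{x\in\overline{\mathcal Q}_\lambda:\varphi(x)=0\}$.
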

\begin{proof}
	{Before proving i) and ii), we first show that starting from $x_0\in \mathcal Q_\lambda$ the sample paths $x_k(\omega)$ stay in $\mathcal Q_\lambda$ with probability at least $1-V(x_0)/\lambda$ if Assumption a) is satisfied. This has been proven in \cite[p. 196]{Kushner_1971} by showing that \begin{align}\label{Pr_leave}
		\Pr\left[ { \mathop {\sup }\nolimits_{k\in \N} {V\left( {{x_k}} \right) \ge \lambda }} \right] \le {{V\left( {{x_0}} \right)}}/{\lambda }.
			\end{align}}			
	 Let $\bar \Omega$ be a subset of the sample space $\Omega$ such that for any $\omega \in \bar \Omega$, $x_k(\omega)\in \mathcal Q_\lambda$ for all $k$.	
	Let $J$ be the smallest $k\in \N$ (if it exists) such that $V(x_k) \ge \lambda$. Note that, this integer $J$ does not exist when $x_k(\omega)$ stays in $\mathcal Q_\lambda$ for all $k$, i.e., when $\omega \in \bar{\Omega}$.
	
	{We first prove i) by showing that the sample paths staying the $\mathcal Q_\lambda$ converge to $\mathcal{D}_1$ with probability one, i.e., $\Pr[x_k\to \mathcal{D}_1| \bar \Omega]=1$.} Towards this end, define a new function $\tilde{\varphi}(x)$ such that $\tilde{\varphi}(x)=\varphi(x)$ for $x \in \mathcal{Q}_{\lambda}$, and $\tilde{\varphi}(x)=0$ for $x \notin \mathcal{Q}_{\lambda}$. Define another random process $\{\tilde z_k\}$. {If $J$ exists}, when $J>T$ let 
	\begin{align*}
	&\tilde z_k=x_k,k<J-T,
	&\tilde z_k=\epsilon, k\ge J-T,
	\end{align*}
	where $\epsilon$ satisfies $V(\epsilon)=\tilde{\lambda}>\lambda$; 
	when $J \le T$, let $\tilde z_k=\epsilon$ for any $k \in \N_0$. {If $J$ does not exist}, we let $\tilde z_k=x_k$ for all $k\in\N_0$.
	Then it is immediately clear that $\mathbb{E}\left[ V \left( \tilde z_{k+T} \right) | \mathcal{F}_k \right] - V \left(\tilde z_k\right) \le -\tilde \varphi(\tilde z_k) \le 0$. By taking the expectation on both sides of this inequality, we obtain
	\begin{align}\label{expectation:T}
	\mathbb{E}\left[ V\big (\tilde z_{k+T} \big) \right] -\mathbb{E} V \big(\tilde z_k\big)\le -\mathbb{E} \tilde \varphi\big(\tilde z_k\big), k\in \N_0.
	\end{align}
	{For any $k\in\N$, there is a pair $p,q\in\N_0$ such that $k=pT+q$. It follows from \eqref{expectation:T} that
		\begin{align*}
		\mathbb{E} \left[ V\big (\tilde z_{pT+j} \big) \right] -\mathbb{E} V \big(\tilde z_{(p-1)T+j}\big)\le -\mathbb{E} \tilde \varphi\big(\tilde z_{(p-1)T+q}\big),\;\;&\\
		j=1,\dots,q;&\\
		\mathbb{E} \left[ V\big (\tilde z_{iT+m} \big) \right] -\mathbb{E} V \big(\tilde z_{(i-1)T+m}\big)\le -\mathbb{E} \tilde \varphi\big(\tilde z_{(i-1)T+m}\big),&\\
		i=1,\dots,p-1,m=0,\dots,T-1	&
		\end{align*}
		By summing up all the left and right sides of these inequalities respectively for all the $i,j$ and $m$, we have 
		\begin{align}
		\sum_{m=0}^{T-1}\Big( \mathbb{E}\big[ V(\tilde z_{(p-1)T+m} - \mathbb{E} V(\tilde z_{m}\big) \big]  \Big) +\sum_{j=1}^{q}\Big( \mathbb{E}\big[ V(&\tilde z_{pT+j} -\nonumber\\
		\mathbb{E} V(\tilde z_{(p-1)T+j}\big) \big]  \Big) \le -\sum_{i=1}^{k-T}\mathbb{E} \tilde \varphi\big(\tilde z_{i}\big).& \label{expectation:nT}
		\end{align}	
		As $V(x)$ is nonnegative for all $x$,  from \eqref{expectation:T} it is easy to observe that the left side of \eqref{expectation:nT} is greater than $-\infty$ even when $k\to \infty$ since $T$ and $q$ are finite numbers, which implies that $\sum_{i=0}^{\infty} \mathbb{E}  \tilde  \varphi\big(\tilde z_{k}\big)< \infty$. By Lemma \ref{Borel-Cantelli}, ones knows that $\tilde  \varphi\big(\tilde z_{k}\big)\stackrel{a.s.}{\longrightarrow}  0$ as $k\to \infty$.} For $\omega \in \bar \Omega$, one can observe that $\tilde \varphi(x_k(\omega)) = \varphi(x_k(\omega))$ and $\tilde z_k \left(\omega\right)=x_k(\omega)$ according to the definitions of $\tilde \varphi$ and $\{\tilde z_k\}$, respectively.	
	Therefore, $\tilde{\varphi}(\tilde z_k(\omega))=\varphi(x_k(\omega))$ for all $\omega \in \bar \Omega$, and subsequently
	\begin{align*}
	\Pr[\varphi \left(x_k\right)\to 0|\bar \Omega]= \Pr[\tilde{\varphi}
	\left(\tilde z_k\right)\to 0|\bar \Omega]=1.
	\end{align*}
	From the continuity of $\varphi(x)$ it can be seen that $\Pr[x_k\to \mathcal{D}_1| \bar \Omega]=1$. The proof of i) is complete since \eqref{Pr_leave} means that the sample paths stay in $\mathcal Q_\lambda$ with probability at least $1-V(x_0)/\lambda$. 
	
	{Next, we prove ii) in two steps. 
	We first prove that the origin $x=0$ is stable in probability. The inequalities $h_1\left(\| s \|\right) \le V(s)\le h_2\left(\| s \|\right)$ imply that $V(x)=0$ if and only if $x=0$. Moreover,  it follows from $h_1\left(\| s \|\right) \le V(s)$ and the inequality \eqref{Pr_leave} that for any initial condition $x_0 \in \mathcal Q_\lambda$, 
\begin{equation*}
\Pr\left[ { \mathop {\sup }\limits_{k\in \N} {h_1\left( {\|x_k\|} \right) \ge \lambda_1 }} \right] \le \Pr\left[ { \mathop {\sup }\limits_{k\in \N} {V\left( {{x_k}} \right) \ge \lambda_1 }} \right] \le \frac{{V\left( {{x_0}} \right)}}{\lambda_1 }
\end{equation*}
for any $\lambda_1>0$. Since $h_1$ is a class $\mathcal K$ function and thus invertible, it can be observed that $\Pr\left[ { \mathop {\sup }_{k\in \N} { {\|x_k\|}  \ge  h_1^{-1}(\lambda) }} \right]\le V(x_0)/\lambda \le h_2(\|x_0\|)/\lambda$. Then for any $\varepsilon >0$, there holds that  $\lim_{x_0\to 0}\Pr\left[ { \mathop {\sup }_{k\in \N} { {\|x_k\|}  >\varepsilon }} \right]\le \Pr\left[ { \mathop {\sup }_{k\in \N} { {\|x_k\|}  \ge \varepsilon }} \right]=0$, which means that the origin is stable in probability.

Second, we show the probability that $x_k\to 0$ tends to $1$ as $x_0\to 0$. One knows that $\mathcal D_1=\{0\}$ since $\varphi$ is positive definite in $\mathcal Q_\lambda$.  From i) one knows that $x_k$ converges to $x=0$ with probability  at least $1-V(x_0)/\lambda$. Since $V(x)\to 0$ as $x_0\to 0$, there holds that $\lim\nolimits_{x_0\to 0}\Pr \left[ \lim\nolimits_{k\to \infty}\|x_k\|=0 \right]\to 1$. The proof is complete. }
\end{proof}

Particularly, if $\mathcal Q_\lambda$ is positively invariant, i.e., starting from $x_0\in \mathcal Q_\lambda$ all sample paths $x_k$ will stay in $\mathcal Q_\lambda$ for all $k \ge 0$, this corollary follows from  Theorem \ref{stability_finite} straightforwardly.

{
\begin{corollary}\label{Coroll_invariant}
	If $\mathcal Q_\lambda$ is positively invariant w.r.t the system \eqref{model:main} and the assumptions a) and b) in Theorem \ref{stability_finite} are satisfied, then the following statements apply:
	
	\emph{i)} for any initial condition $x_0 \in \mathcal Q_\lambda$, $x_k$ converges to $\mathcal{D}_1$ with probability one; 
	
	\emph{ii)} if moreover $\varphi(x)$ is positive definite on $\mathcal Q_\lambda$, and $h_1\left(\| s \|\right) \le V(s)\le h_2\left(\| s \|\right)$ for two class $\mathcal{K}$ functions $h_1$ and $h_2$, then $x=0$ is locally a.s. asymptotically stable in $\mathcal Q_\lambda$. Furthermore, if $\mathcal Q_\lambda=\mathbb R^n$, then $x=0$ is globally a.s. asymptotically stable.
\end{corollary}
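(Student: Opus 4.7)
The plan is to derive this corollary as an immediate consequence of Theorem \ref{stability_finite}, where positive invariance upgrades the probability bound $1-V(x_0)/\lambda$ to $1$ by forcing the escape event to have probability zero.

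First, I would observe that by the definition of positive invariance, every sample path starting from $x_0 \in \mathcal Q_\lambda$ satisfies $x_k(\omega) \in \mathcal Q_\lambda$ for all $k \ge 0$. In the notation used in the proof of Theorem \ref{stability_finite}, this means the event $\bar{\Omega}$ has probability one and the stopping integer $J$ never exists; in particular the left-hand side of \eqref{Pr_leave} vanishes, so paths remain in $\mathcal Q_\lambda$ with probability one.

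For part i), the proof of Theorem \ref{stability_finite} already established $\Pr[x_k \to \mathcal{D}_1 \mid \bar{\Omega}] = 1$ via the auxiliary process $\{\tilde z_k\}$ and the Borel--Cantelli argument from Lemma \ref{Borel-Cantelli}; since $\Pr[\bar{\Omega}] = 1$, the law of total probability gives $\Pr[x_k \to \mathcal{D}_1] = 1$. For part ii), positive definiteness of $\varphi$ on $\mathcal Q_\lambda$ together with the class-$\mathcal K$ bounds on $V$ force $\mathcal{D}_1 = \{0\}$, so part i) yields $x_k \to 0$ almost surely. Combined with sample paths remaining in $\mathcal Q_\lambda$, this matches Definition \ref{de_stability_as}, item 1), for local a.s. asymptotic stability in $\mathcal Q_\lambda$. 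The case $\mathcal Q_\lambda = \mathbb R^n$ is a special instance in which every initial condition lies in the (now trivially) invariant region, yielding global a.s. asymptotic stability.

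The corollary is essentially a bookkeeping consequence of Theorem \ref{stability_finite} and presents no substantive new difficulty; the only thing worth checking is that the conditional convergence argument in that proof transfers verbatim, which it does because positive invariance is strictly stronger than the supermartingale-type control that assumption a) provided through \eqref{Pr_leave}.
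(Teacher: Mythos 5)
Your proposal is correct and matches the paper's intent exactly: the paper states the corollary "follows from Theorem \ref{stability_finite} straightforwardly" once positive invariance is assumed, and your argument supplies precisely the bookkeeping the paper leaves implicit, namely that positive invariance forces $\Pr[\bar{\Omega}]=1$ so the conditional convergence $\Pr[x_k \to \mathcal{D}_1 \mid \bar{\Omega}]=1$ becomes unconditional, and that $\mathcal{D}_1=\{0\}$ under positive definiteness of $\varphi$ gives the stability claims of Definition \ref{de_stability_as}. (One immaterial quibble: positive invariance gives $\Pr[\bar{\Omega}]=1$ directly, but it does not quite make the left-hand side of \eqref{Pr_leave} vanish, since $\sup_k V(x_k)$ could still attain $\lambda$ in the limit; your argument never actually needs that claim.)
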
}

{The next theorem provides a new criterion for exponential convergence and stability of stochastic systems, relaxing the conditions required by Lemma \ref{lemma_expone_stab}. 

\begin{theorem}\label{stability_expone_finite}
	Suppose {the assumptions a) and b)} of Theorem \ref{stability_finite} are satisfied with the inequality of b) strengthened to 
	\begin{align}\label{finite_decre}
	&\mathbb{E}\left[ V\left(x_{k+T}\right)|\mathcal{F}_k\right]-V\left(x_k\right)\le -\alpha V(x_k), &0<\alpha <1.
	\end{align}
	Then the following statements apply: 
	
	\emph{i)} for any given $x_0 \in \mathcal Q_\lambda$, $V(x_k)$ converges to $0$ exponentially at a rate {no slower than} $(1-\alpha)^{{1}/{T}}$, and $x_k$  converges to $\mathcal D_2:=\{x\in \mathcal Q_\lambda:V(x)=0\}$, with probability at least $1-V(x_0)/\lambda$;
	
	\emph{ii)} if moreover $V$ satisfies that $c_1 \|x\|^a \le V(x)\le c_2 \|x\|^a$ for some $c_1,c_2, a >0$, then $x=0$ is exponentially stable in probability.
\end{theorem}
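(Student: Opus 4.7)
The plan is to leverage the framework established in the proof of Theorem~\ref{stability_finite} and to sharpen its conclusion by using the strengthened inequality \eqref{finite_decre} to manufacture a genuine nonnegative supermartingale at the subsampled times $k=mT$, from which Lemma~\ref{supermartingale} will read off the exponential rate.

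First I would observe that \eqref{finite_decre} is a special case of assumption b) of Theorem~\ref{stability_finite} with the choice $\varphi(x)=\alpha V(x)$, which is continuous and nonnegative on $\mathcal{Q}_\lambda$. Hence Theorem~\ref{stability_finite} i) applies verbatim and yields, with probability at least $1-V(x_0)/\lambda$, that $x_k\in\mathcal{Q}_\lambda$ for all $k$ and $x_k\to\{x\in\mathcal{Q}_\lambda:\alpha V(x)=0\}=\mathcal{D}_2$. In particular, $V(x_k)\to 0$ almost surely on the good event $\bar\Omega:=\{\omega: x_k(\omega)\in\mathcal{Q}_\lambda,\ \forall k\}$, whose probability is at least $1-V(x_0)/\lambda$ by \eqref{Pr_leave}; so it remains only to extract the exponential rate.

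For the rate, I would reuse the stopped auxiliary process $\{\tilde z_k\}$ built in the proof of Theorem~\ref{stability_finite} so that \eqref{finite_decre} holds uniformly in $k$ rather than only while $x_k\in\mathcal{Q}_\lambda$. Setting $M_m:=(1-\alpha)^{-m}V(\tilde z_{mT})$, a one-line calculation using \eqref{finite_decre} gives $\mathbb{E}[M_{m+1}\,|\,\mathcal{F}_{mT}]\le M_m$, i.e., $\{M_m\}$ is a nonnegative supermartingale. Lemma~\ref{supermartingale} then provides $M_m\to M_\infty$ almost surely for some finite $M_\infty\ge 0$. On $\bar\Omega$, where $\tilde z_k\equiv x_k$, this says $\gamma^{mT}V(x_{mT})\to M_\infty$ a.s.\ with $\gamma:=(1-\alpha)^{-1/T}$. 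Repeating the very same supermartingale construction at each shift $r\in\{0,1,\dots,T-1\}$ (which is legitimate because \eqref{finite_decre} holds for every base time $k$) shows that $\gamma^k V(x_k)$ converges almost surely along every arithmetic progression of common difference $T$, which is the exponential rate demanded by Definition~\ref{def:convergence}.

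For part ii), the sandwich $c_1\|x\|^a\le V(x)\le c_2\|x\|^a$ converts the exponential decay of $V(x_k)$ into exponential decay of $\|x_k\|$ (at rate $\gamma^{1/a}$). Stability in probability then follows from the maximal bound $\Pr[\sup_k V(x_k)\ge \lambda_1]\le V(x_0)/\lambda_1\le c_2\|x_0\|^a/\lambda_1$ established in the proof of Theorem~\ref{stability_finite}, by sending $\|x_0\|\to 0$ and using that $h_1(s)=c_1 s^a$ is invertible. The step I expect to demand the most care is the gluing move in the preceding paragraph: the per-residue supermartingales furnish $T$ subsequential a.s.\ limits of $\gamma^k V(x_k)$, and certifying that they cohere into a single a.s.\ limit (rather than yielding merely $\limsup$ control) is likely to require a Doob-type maximal argument inside each window $[mT,(m+1)T]$, where assumption a) turns $V(x_k)$ into a conditional supermartingale given $\mathcal{F}_{mT}$ so that its fluctuations around $V(x_{mT})$ can be controlled in probability and summed.
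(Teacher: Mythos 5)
Your route is essentially the paper's: per-residue subsampling at times $mT+r$, the geometric renormalization $(1-\alpha)^{-m}V(\cdot)$, Lemma~\ref{supermartingale}, gluing over the $T$ residue classes, and the sandwich $c_1\|x\|^a\le V(x)\le c_2\|x\|^a$ for part ii). However, one step fails as written: you cannot reuse the stopped process $\{\tilde z_k\}$ from the proof of Theorem~\ref{stability_finite}. That process is absorbed at a point $\epsilon$ with $V(\epsilon)=\tilde\lambda>\lambda$, so on every path that exits $\mathcal Q_\lambda$ one eventually has $V(\tilde z_{mT})\equiv\tilde\lambda$, whence $\mathbb{E}\left[ V(\tilde z_{k+T})|\mathcal F_k\right]=\tilde\lambda>(1-\alpha)\tilde\lambda=(1-\alpha)V(\tilde z_k)$. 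Thus \eqref{finite_decre} does \emph{not} hold uniformly for $\{\tilde z_k\}$, and your $M_m=(1-\alpha)^{-m}V(\tilde z_{mT})$ is not a supermartingale --- it grows geometrically on the exit paths, so Lemma~\ref{supermartingale} cannot be applied (the supermartingale property must hold unconditionally, not merely on the good event $\bar\Omega$). The paper avoids this by introducing a \emph{different} stopped process $\{\hat z_k\}$, absorbed at a point $\varepsilon$ with $V(\varepsilon)=0$; after absorption both sides of \eqref{finite_decre} vanish and the supermartingale property survives stopping. Alternatively, since the theorem's hypothesis asserts \eqref{finite_decre} for \emph{every} $k$ (not only while $x_k\in\mathcal Q_\lambda$) and $V\ge 0$ everywhere, $(1-\alpha)^{-m}V(x_{mT+r})$ is already a nonnegative supermartingale for the original process, so the stopping can simply be dispensed with in this step. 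Either fix is small, but without one the key display of your second paragraph is false.

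Conversely, the step you flagged as the hardest --- cohering the $T$ per-residue limits --- needs no Doob-type maximal argument, and the paper uses none: from $(1-\alpha)^{-m}V(\hat z_{mT+r})\to \bar Y^{(r)}$ a.s.\ one gets $\gamma^{mT+r}V(\hat z_{mT+r})\to\gamma^{r}\bar Y^{(r)}$ with $\gamma=(1-\alpha)^{-1/T}$, and setting $\bar Y=\max_r\gamma^{r}\bar Y^{(r)}$ yields $\limsup_{k\to\infty}\gamma^{k}V(\hat z_k)\le\bar Y<\infty$ almost surely, which already certifies the rate $(1-\alpha)^{1/T}$ in the sense of Definition~\ref{def:convergence}; a single almost-sure limit of $\gamma^k V(x_k)$ is not required. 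Your treatment of part ii) is consistent with the paper's: on $\bar\Omega$ the sandwich gives $\|x_k\|\le(\bar Y/c_1)^{1/a}\gamma^{-k/a}$, and $\Pr[\bar\Omega]\ge 1-V(x_0)/\lambda\ge 1-c_2\|x_0\|^a/\lambda\to 1$ as $x_0\to 0$, which is exactly exponential stability in probability.
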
}

\begin{proof}
	{We first prove i). From the proof of Theorem \ref{stability_finite}, we know that the sample paths $x_k$ stay in $\mathcal Q_\lambda$ with probability at least $1-V(x_0)/\lambda$ for any initial condition $x_0\in\mathcal Q_\lambda$ if the assumption a) is satisfied. We next show that for any sample path that always stays in $\mathcal Q_\lambda$, $V(x_k)$ converges to $0$ exponentially fast. Towards this end, we define a random process $\{\hat z_k\}$. Let $J$ be as defined in the proof of Theorem \ref{stability_finite}.} If $J$ exists, when $J> T$,  let 
	\begin{align*}
	&\hat z_k=x_k,k<J-T,
	&\hat z_k=\varepsilon, k\ge J-T,
	\end{align*}
	where $\varepsilon$ satisfies $V(\varepsilon)=0$, when $J \le T$, let $\hat z_k=\varepsilon$ for any $k \in \N_0$; if $J$ does not exist, we let $\hat z_k=x_k$ for all $k\in\N_0$. 
	
	{If the inequality \eqref{finite_decre} is satisfied, one has  $\mathbb{E}\left[ V\left(\hat z_{k+T}\right)|\mathcal{F}_k\right]-V\left(\hat z_{k}\right)\le -\alpha V(\hat z_{k})$. Using this inequality, we next show that $V\left(\hat z_{k+T}\right)$ converges to $0$ exponentially. To this end, define a subsequence $Y^{(r)}_{m}:=V(\hat z_{mT+r}), m\in \N_0$, for each $0\le r \le T-1$. Let $\mathcal G_m^{(r)}:=\sigma(Y^{(r)}_{0},Y^{(r)}_{1},\dots,Y^{(r)}_{m})$, and one knows that $\mathcal G_m^{(r)}$ is determined if we know $\mathcal F_{mT+r}$. It then follows from the inequality \eqref{finite_decre} that for any $r$, $\mathbb E [Y_{m+1}^{r} | \mathcal G_m^{(r)} ]-Y_{m}^{(r)} \le -\alpha Y_{m}^{(r)}$. We observe from this inequality that 
	\begin{align*}
	\mathbb E \left[(1-\alpha)^{-(m+1)}Y_{m+1}^{r} | \mathcal G_m^{(r)} \right]-(1-\alpha)^{-m}Y_{m}^{(r)} \le 0.
	\end{align*}
	This means that $(1-\alpha)^{-m}Y_{m}$ is a supermartingale, and thus there is a finite random number $\bar Y^{(r)}$ such that $(1-\alpha)^{-m} Y_{m}^{r}\stackrel{a.s.}{\longrightarrow} \bar Y^{(r)}$ for any $r$. Let $\gamma= \sqrt[T]{{1/(1-\alpha)}}$, and then by  definition of $Y^{(r)}_m$ we have  $\gamma^{mT} V(\hat z_{mT+r})\stackrel{a.s.}{\longrightarrow}\bar Y^{(r)}$. Straightforwardly, $\gamma^{mT+r} V(\hat z_{mT+r})\stackrel{a.s.}{\longrightarrow} \gamma^{r} \bar Y^{(r)}$. Let $k=mT+r, \bar Y=\max_r\{\gamma^{r} \bar Y^{(r)}\}$, then it almost surely holds that $\lim_{k\to \infty} \gamma^{k} V(\hat z_{k}) \le  \bar Y$.
	From Definition \ref{def:convergence},  one concludes that $V(\hat z_{k})$ almost surely converges to $0$ exponentially no slower than $\gamma^{-1}=(1-\alpha)^{1/T}$.  From the definition of $\hat z_k$, we know that $V(\hat z_k(\omega))=V( x_k(\omega))$ for all $\omega \in \bar \Omega$, with $\bar \Omega$ defined in the proof of Theorem \ref{stability_finite}. Consequently, it holds that
	\begin{align}\label{expo_conve}
	\Pr [\lim_{k\to \infty} \gamma^{k} V(x_{k})&\le  \bar Y| \bar \Omega] \nonumber\\
	&=	\Pr [\lim_{k\to \infty} \gamma^{k} V(\hat z_{k}) \le  \bar Y| \bar \Omega]=1.
	\end{align} The proof of i) is complete since the sample paths stay in $\mathcal Q_\lambda$ with probability at least $1-V(x_0)/\lambda$. 
	
	Next, we prove ii). If the inequalities $c_1 \|x\|^a \le V(x)\le c_2 \|x\|^a$ are satisfied, and then we know that $V(x)=0$ if and only if $x=0$. Moreover, it follows from \eqref{expo_conve} that for all the sample paths that stay in $\mathcal Q_\lambda$ there holds that $ c_1 \gamma^{k} \|x\|^a \le  \gamma^{k} V(x_{k})\le  \bar Y$	since $c_1\|x_k\|^a\le V(x)$. Hence,
	$\|x_k (\omega)\|\le \left({\bar V}/{c_1}\right)^{1/a}\gamma^{-k/a}$ for any $\omega \in \bar \Omega$, and one can check that this inequality holds with probability at least $1-V(x_0)/\lambda$. If $x_0\to 0$, we know that $1-V(x_0)/\lambda\to 1$, which completes the proof. }
\end{proof}

If $\mathcal Q_\lambda$ is positively invariant, the following corollary follows straightforwardly.
{
\begin{corollary}\label{Coroll_invariant_expo}
	If $\mathcal Q_\lambda$ is positively invariant w.r.t the system \eqref{model:main} and suppose {the assumptions a) and b)} of Theorem \ref{stability_finite} are satisfied with the inequality of b) strengthened to \eqref{finite_decre}, the following statements apply:
	
	\emph{i)} for any given $x_0 \in \mathcal Q_\lambda$, $V(x_k)$ converges to $0$ exponentially  {no slower than} $(1-\alpha)^{{1}/{T}}$ with probability one;

	\emph{ii)} if moreover $V$ satisfies that $c_1 \|x\|^a \le V(x)\le c_2 \|x\|^a$ for some $c_1,c_2, a >0$, then $x=0$ is locally a.s. exponentially stable in $\mathcal Q_\lambda$. Furthermore, if $\mathcal Q_\lambda=\mathbb R^n$, then $x=0$ is globally a.s. exponentially stable.
\end{corollary}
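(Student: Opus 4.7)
My plan is to derive Corollary \ref{Coroll_invariant_expo} as a direct specialization of Theorem \ref{stability_expone_finite}, using the positive invariance of $\mathcal Q_\lambda$ to upgrade all probability bounds to probability one. The core observation is that in the proof of Theorem \ref{stability_expone_finite} the only reason convergence was established with probability only $1 - V(x_0)/\lambda$ is that sample paths might leave $\mathcal Q_\lambda$; once positive invariance is assumed, this possibility is excluded and the whole argument goes through on a probability-one event.

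Concretely, for part i), I would first note that because $\mathcal Q_\lambda$ is positively invariant with respect to \eqref{model:main}, the set $\bar \Omega$ used in the proof of Theorem \ref{stability_expone_finite} (the event that $x_k \in \mathcal Q_\lambda$ for every $k$) has full probability when $x_0\in\mathcal Q_\lambda$, and the stopping time $J$ almost surely fails to exist. Hence the auxiliary process $\hat z_k$ coincides with $x_k$ almost surely, and the supermartingale argument applied to the subsequences $Y^{(r)}_m=V(x_{mT+r})$, $0\le r\le T-1$, yields that $\gamma^k V(x_k)$ converges almost surely to a finite nonnegative random variable $\bar Y$ with $\gamma = (1-\alpha)^{-1/T}$. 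By Definition \ref{def:convergence} this is exactly almost sure exponential convergence of $V(x_k)$ to $0$ with rate no slower than $(1-\alpha)^{1/T}$.

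For part ii), I would combine i) with the two-sided bound $c_1\|x\|^a \le V(x)\le c_2 \|x\|^a$. The lower bound gives $c_1\gamma^k\|x_k\|^a \le \gamma^k V(x_k)$, and since the right-hand side is almost surely bounded as $k\to\infty$, we obtain a bound of the form $\|x_k\|\le (\bar Y/c_1)^{1/a}\gamma^{-k/a}$ holding with probability one. This yields almost sure exponential convergence of $x_k$ to the origin with rate no slower than $(1-\alpha)^{1/(aT)}$, which together with positive invariance of $\mathcal Q_\lambda$ is precisely local a.s.\ exponential stability in $\mathcal Q_\lambda$ in the sense of Definition \ref{de_stability_as}. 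When $\mathcal Q_\lambda=\mathbb R^n$, positive invariance is automatic, $x_0$ is unrestricted, and the conclusion upgrades to global a.s.\ exponential stability.

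There is no genuine obstacle, since this is essentially a rereading of the proof of Theorem \ref{stability_expone_finite} under the stronger hypothesis of positive invariance; the only points that need mild care are (a) checking that the event $\bar\Omega$ has probability one under positive invariance so that the surrogate process $\hat z_k$ can be replaced by $x_k$ throughout without loss, and (b) tracking how the exponential rate $(1-\alpha)^{1/T}$ on $V(x_k)$ transforms into the rate $(1-\alpha)^{1/(aT)}$ on $\|x_k\|$ through the power $a$ in the sandwich inequality. Both are bookkeeping rather than a conceptual difficulty.
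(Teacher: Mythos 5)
Your proof is correct and is exactly the route the paper intends: the paper offers no separate argument for this corollary, stating only that it follows straightforwardly from Theorem \ref{stability_expone_finite} once positive invariance is assumed. Your expansion — under invariance the event $\bar\Omega$ has probability one, the stopping time $J$ a.s.\ fails to exist so $\hat z_k = x_k$, and the supermartingale argument plus the sandwich bound $c_1\|x\|^a \le V(x)\le c_2\|x\|^a$ give the probability-one conclusions, with the rate $(1-\alpha)^{1/T}$ on $V(x_k)$ becoming $(1-\alpha)^{1/(aT)}$ on $\|x_k\|$ — is precisely that specialization.
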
 

{The following corollary, which can be proven following the same lines as Theorems \ref{stability_finite} and \ref{stability_expone_finite}, shares some similarities to LaSalle's theorem for deterministic systems. It is worth mentioning that the function $V$ here does not have to be radially unbounded.
	\begin{corollary}\label{lassale-like}
		Let $\mathbb D\subset \mathbb R^n$ be a compact set that is positively invariant w.r.t the system \eqref{model:main}. Let $V:\mathbb{R}^n\to \mathbb{R}$ be a {continuous nonnegative function}, and $\bar {\mathcal Q}_\lambda:=\{x\in \mathbb D: V(x)< \lambda \}$ for some $\lambda>0$. Assume that
		$\mathbb{E}\left[ V\left(x_{k+1}\right)|\mathcal F_k\right]-V\left(x_k\right)\le 0$ for all $k$ such that $x_k \in \bar {\mathcal {Q}}_\lambda$, then 
		
		\emph{i)}  if there is an integer $T \ge 1$, independent of $\omega$, such that for any $k \in \N_0$, $\mathbb{E}\left[ V\left(x_{k+T}\right)|\mathcal{F}_k\right]-V\left(x_k\right)\le -\varphi(x_k)$, where $\varphi:\mathbb{R}^n\to \mathbb R$ is continuous and satisfies $\varphi(x)\ge 0$ for any $x \in \bar{\mathcal Q}_\lambda$, then for any initial condition $x_0 \in \bar{\mathcal Q}_\lambda$, $x_k$ converges to $\bar{\mathcal{D}}_1:=\{x \in \bar{\mathcal Q}_{\lambda}:\varphi(x)=0\}$ with probability  at least $1-V(x_0)/\lambda$;
		
		\emph{ii)} if the inequality in a) is strengthened to $\mathbb{E}\left[ V\left(x_{k+T}\right)|\mathcal{F}_k\right]$ $-V\left(x_k\right)\le -\alpha V(x_k)$ for some $0<\alpha <1$, then for any given $x_0 \in \bar{\mathcal Q}_\lambda$, $V(x_k)$ converges to $0$ exponentially at a rate no slower than $(1-\alpha)^{{1}/{T}}$,  and $x_k$  converges to $\bar {\mathcal D} _2:=\{x\in \bar{\mathcal Q}_{\lambda}:V(x)=0\}$, with probability at least $1-V(x_0)/\lambda$;
		
		\emph{iii)} if $\bar {\mathcal Q}_\lambda$ is positively invariant w.r.t the system \eqref{model:main}, then all the convergence in both i) and ii) takes place \emph{almost surely}. 
 	\end{corollary}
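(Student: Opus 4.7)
The plan is to follow the proof templates of Theorems \ref{stability_finite} and \ref{stability_expone_finite} almost verbatim, with the role previously played by radial unboundedness of $V$ taken over by the positive invariance and compactness of $\mathbb{D}$. Because $V$ is continuous on the compact set $\mathbb{D}$ it is bounded there, so every subsequential argument that in the earlier proofs relied on level sets of $V$ being bounded is automatic now. Throughout, I would let $J$ denote the first $k$ (if any) such that $V(x_k)\ge\lambda$ and let $\bar\Omega$ be the event that $J$ does not exist.

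For i), I would first apply the supermartingale maximal inequality to the process stopped on exit from $\bar{\mathcal Q}_\lambda$, using only hypothesis a), to get the analog of \eqref{Pr_leave}, namely $\Pr[\sup_k V(x_k)\ge\lambda]\le V(x_0)/\lambda$. Next, define the auxiliary process $\tilde z_k$ exactly as in the proof of Theorem \ref{stability_finite}: equal to $x_k$ before time $J-T$ and frozen at a state $\epsilon$ with $\tilde\varphi(\epsilon)=0$ afterwards (taking $\epsilon$ anywhere in $\mathbb R^n\setminus\bar{\mathcal Q}_\lambda$, which exists since $\bar{\mathcal Q}_\lambda\subset\mathbb D$ is a proper subset of $\mathbb R^n$). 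The finite-step inequality then gives $\mathbb E[V(\tilde z_{k+T})\mid\mathcal F_k]-V(\tilde z_k)\le -\tilde\varphi(\tilde z_k)$; telescoping along the $T$ residue classes modulo $T$ as in \eqref{expectation:nT} yields $\sum_k\mathbb E\,\tilde\varphi(\tilde z_k)<\infty$ and hence $\tilde\varphi(\tilde z_k)\to 0$ a.s.\ by Lemma \ref{Borel-Cantelli}. On $\bar\Omega$ this reads $\varphi(x_k)\to 0$ a.s.; then since $\{x_k\}\subset\mathbb D$ with $\mathbb D$ compact and $\varphi$ continuous, every subsequential limit lies in $\bar{\mathcal D}_1$, so $\mathrm{dist}(x_k,\bar{\mathcal D}_1)\to 0$ a.s.\ conditional on $\bar\Omega$, which has probability at least $1-V(x_0)/\lambda$.

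For ii), I would mirror the argument of Theorem \ref{stability_expone_finite}: on the non-exit event, decompose $V(x_k)$ into $T$ subsequences $Y^{(r)}_m=V(\hat z_{mT+r})$, verify that $(1-\alpha)^{-m}Y^{(r)}_m$ is a nonnegative supermartingale, and invoke Lemma \ref{supermartingale} to obtain almost-sure exponential decay at rate $(1-\alpha)^{1/T}$; taking a maximum over residues yields the claimed global rate, and the maximal inequality again provides the lower bound on the conditional probability. Part iii) is immediate: if $\bar{\mathcal Q}_\lambda$ is itself positively invariant then $\Pr[\bar\Omega]=1$, so the conditional convergence in i) and ii) becomes almost-sure convergence. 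The main obstacle I anticipate is the freezing construction in step one, since $V$ is no longer radially unbounded and a priori there may be no state in $\mathbb D$ at which $V$ exceeds $\lambda$; this is circumvented by noticing that the frozen state $\epsilon$ is pure bookkeeping and need not lie in $\mathbb D$, only in $\mathbb R^n\setminus\bar{\mathcal Q}_\lambda$, which suffices to make both $\tilde\varphi$ vanish and the stopped supermartingale inequality hold trivially after time $J-T$.
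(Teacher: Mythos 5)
Your proposal is correct and takes essentially the same route as the paper: the paper itself "proves" Corollary \ref{lassale-like} only by asserting that it follows the same lines as Theorems \ref{stability_finite} and \ref{stability_expone_finite}, which is precisely what you carry out, with positive invariance and compactness of $\mathbb D$ standing in for radial unboundedness. Your one substantive adjustment---placing the frozen bookkeeping state $\epsilon$ anywhere in $\mathbb{R}^n\setminus\bar{\mathcal Q}_\lambda$ so that $\tilde\varphi(\epsilon)=0$, rather than demanding $V(\epsilon)>\lambda$ as in the original construction---is exactly the modification the paper's remark implicitly requires, and your handling of ii) and iii) (the residue-class supermartingale argument and the observation that invariance of $\bar{\mathcal Q}_\lambda$ forces $\Pr[\bar\Omega]=1$) mirrors the paper's corresponding arguments.
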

}
}

\textit{Example 1 Cont.} Now let us look back at Example 1 and still choose $V(x)=\left\| x\right\|_\infty$ as a stochastic Lyapunov function candidate. It is easy to see that $V(x)$ is a nonnegative supermartingale. To show the stochastic convergence, let $T=2$ and one can calculate the conditional expectations 
\begin{align*}
&\mathbb{E}\left[ {\left. {V\left( {x_{k + T}} \right)} \right|x_k,y_k=1} \right] - V\left( {x_k} \right)\\
&= {0.5}{\left\| {\begin{array}{*{20}{c}}
		{0.2x_k^1}\\
		{0.8x_k^2}
		\end{array}} \right\|_\infty } + {0.5}{\left\| {\begin{array}{*{20}{c}}
		{0.2x_k^1}\\
		{0.6x_k^2}
		\end{array}} \right\|_\infty } - {\left\| {\begin{array}{*{20}{c}}
		{x_k^1}\\
		{x_k^2}
		\end{array}} \right\|_\infty }\\
&\le  - 0.3V\left( {x_k} \right), \forall x_k \in \mathbb R^2.
\end{align*}
When $y_k=2,3$, there analogously hold that
\begin{align*}
	\mathbb{E}[ \left. {V\left( {x_{k + T}} \right)} \right|x_k, y_k] - V\left( {x_k} \right) 
	\le -0.3V(x_k), \forall x_k \in \mathbb R^2.
\end{align*}
From these three inequalities one can observe that starting from any initial condition $x_0$, $\mathbb{E}V(x)$ decreases at an exponential speed after every two steps before it reaches $0$. By Corollary \ref{Coroll_invariant_expo}, one knows that origin is {globally a.s. exponentially stable}, consistent with our conjecture. \hfill   $ \Box $ 

\begin{remark}
 Kushner and other researchers  have used more restricted conditions to construct Lyapunov functions than those appearing in our results to analyze asymptotic or exponential stability of random processes \cite{Kushner_1971,Kushner_1965,beutler1973two}. It is required that $\mathbb{E}[V(x_k)]$ decreases strictly at every  step, until $V(x_k)$ reaches a limit value. However, in our result, this requirement is relaxed.  In addition, Kushner's results rely on the assumption that the underlying random process is Markovian, but we work with more general random processes. 
\end{remark}

In the following sections, we will show how the new Lyapunov criteria can be applied  to distributed computation.

\section{Products of Random Sequences of Stochastic Matrices}\label{sec_pro}

In this section, we study the convergence of products of stochastic matrices, where the obtained results on finite-step Lyapunov functions are used for analysis. {Let $\Omega_0:=\{1,2,\dots,m\}$ be the state space and $\mathcal M:=\{F_1,F_2,\dots,F_m\}$ be the set of $m$ stochastic matrices $F_i\in\R^{n\times n}$. } Consider a random sequence $\{W_\omega(k):k\in\N\}$ {on the probability space $(\Omega,\mathcal{F},\Pr)$, where $\Omega$ is the collection of all infinite sequences $\omega=(\omega_1,\omega_2,\dots)$ with $\omega_k\in \Omega_0$, and we define $W_\omega(k):=F_{\omega_k}$. For notational simplicity, we denote $W_\omega(k)$ by $W(k)$.} For the backward product of stochastic matrices
\begin{equation}\label{product}
{W(t+k,t)}=W({t+k})\cdots W({t+1}),
\end{equation}
where $k \in \N,t \in \N_0$, we are interested in establishing conditions on  $\{W(k)\}$, under which there holds that $\lim_{k \to \infty}W(k,0)=L$
for a random matrix $L=\mathbf{1}\xi^\top$ where $\xi \in \mathbb{R}^n$ satisfies $\xi^\top\mathbf{1}=1$. 

Before proceeding, let us introduce some concepts in probability.  {Let $\mathcal F_k =\sigma(W(1),\dots,W(k))$, so that evidently $\{\mathcal F_k\}$, $k=1,2,\dots,$ is an increasing sequence of $\sigma$-fields.}
 Let $\phi: \Omega\to\Omega$ be the shift operator, i.e., $\phi(\omega_1,\omega_2,\dots)=(\omega_2,\omega_3,\dots)$. A random sequence of stochastic matrices $\{W(1),W(2),\dots,W(k),\dots\}$ is said to be \textit{stationary} if the shift operator is measure-preserving. In other words, the sequences $\{W({k_1}),W({k_2}),\dots,W(k_r)\}$ and $\{W({k_1+\tau}),W({k_2+\tau}),\dots,W({k_r+\tau})\}$ have the same joint distribution for all $k_1, k_2,\dots,k_r$ and $\tau\in \N$. Moreover, a sequence is said to be \textit{stationary ergodic} if it is stationary, and every invariant set $\mathcal{B}$ is trivial, i.e., for every $A\in \mathcal{B}$, $\Pr[A] \in \{0,1\}$. Here by a invariant set $\mathcal{B}$, we mean $\phi^{-1}\mathcal{B}=\mathcal{B}$.


\subsection{Convergence Results}\label{convergence:matrices}

We first introduce three classes of stochastic matrices,  denoted by $\mathcal{M}_1,\mathcal{M}_2$, and $\mathcal{M}_3$, respectively. We say $A\in \mathcal{M}_1$ if $A$ is indecomposable, and aperiodic (such stochastic
matrices are also referred to as SIA for short); $A\in \mathcal{M}_2$ if $A$ is scrambling, i.e., no two rows of $A$ are orthogonal; and $A \in \mathcal{M}_3$ if $A$ is Markov, i.e., there exists a column of $A$ such that all entries in this column are positive \cite[Ch. 4]{Nonnegative_Matrices}. 

Coefficients of ergodicity serve as a fundamental tool in analyzing the convergence of products of stochastic matrices. In this paper, we employ a standard one. For a stochastic matrix $A\in \mathbb{R}^{n\times n}$, the coefficient of ergodicity $\tau(A)$ is defined by
\begin{equation}\label{coeffi_ergo}
\tau \left( A \right) = 1-\min \limits_{i,j}\sum\limits_{s=1}^n \min(a_{is},a_{js}).
\end{equation}
It is known that this coefficient of ergodicity satisfies $0\le \tau(A) \le1$, and $\tau(A)$ is proper since $\tau(A)=0$ if and only if all the rows of $A$ are identical. Importantly, it holds that
\begin{equation}\label{coefficients_scrambling}
\tau(A)< 1
\end{equation}
if and only if $A \in \mathcal{M}_2$ (see {\cite[p.82]{Nonnegative_Matrices}}). For any two stochastic matrices $A,B$, the following property will be critical for the proof in Appendix \ref{appendix_1}:
\begin{equation}\label{coeffi_submulti}
\tau(AB)\le \tau(A)\tau(B).
\end{equation}

To proceed, we make the following assumption for the sequence $\{W(k)\}$.

\begin{assumption}\label{assumption_1}
	Suppose the sequence of stochastic matrices $\{W(k)\}$ is driven by a random process satisfying the following conditions.
	\begin{enumerate}
		\item[a)] There exists an integer $h>0$ such that 
		\begin{equation}\label{assumption_re}
		\Pr \left[ W(k+h,k)\in \mathcal{M}_2 \right]>0	
		\end{equation}
		holds for any $k \in \N_0$, and 
			\begin{equation}\label{pro_to_inf_gener}
						\sum\limits_{i = 1}^\infty  {\Pr \left[ {W\big( {k + ih,k + \left( {i - 1} \right)h} \big)} \in \mathcal{M}_2 \right]}  = \infty, \forall k. 
			\end{equation}			
		\item[b)] There is a positive number $\alpha$ such that $W_{ij}(k)\ge \alpha$ whenever $W_{ij}(k)>0$.
	\end{enumerate}
\end{assumption}

Now we are ready to provide our main result on the convergence of stochastic matrices' products.

\begin{theorem}\label{theorem_geral}
	Under Assumption \ref{assumption_1}, the product of the random sequence of stochastic matrices  $W(k,0)$ converges to a random matrix $L=\mathbf{1}\xi^\top$  almost surely as $k\to \infty$.	
\end{theorem}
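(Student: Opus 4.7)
The natural Lyapunov function to consider is the coefficient of ergodicity of the running product,
\[
V_k \;:=\; \tau\big(W(k,0)\big), \qquad k\in\N_0 ,
\]
viewed as a function of the matrix-valued ``state'' $x_k := W(k,0)$ of a stochastic system of the form \eqref{model:main}. I will try to feed $V_k$ into the finite-step Lyapunov framework developed in Theorem \ref{stability_finite} (or the LaSalle-type Corollary \ref{lassale-like}) and then translate $V_k\to 0$ into $W(k,0)\to\mathbf 1\xi^\top$. The argument is organized in three stages.

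\textbf{Stage 1: monotone decrease and finite-step decrement.} By submultiplicativity \eqref{coeffi_submulti} and $\tau(W(k{+}1))\le 1$, the sequence $V_k$ is pathwise nonincreasing, so condition a) of Theorem \ref{stability_finite} is trivial. For the finite-step decrement, let $A_k := \{W(k+h,k)\in\mathcal M_2\}$. Because every positive entry of $W(k+h,k)$ arises from a length-$h$ positive path, Assumption \ref{assumption_1}(b) forces those positive entries to be at least $\alpha^h$; together with the scrambling property \eqref{coefficients_scrambling} this yields $\tau(W(k+h,k))\le 1-\alpha^h$ on $A_k$. Submultiplicativity then gives the pathwise bound
\[
V_{k+h} \;\le\; V_k\big(1-\alpha^h\mathbf 1_{A_k}\big),
\]
whence, taking conditional expectations,
\[
\mathbb E\big[V_{k+h}\mid\mathcal F_k\big] - V_k \;\le\; -\alpha^h V_k\,\Pr[A_k\mid\mathcal F_k].
\]

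\textbf{Stage 2: a.s.\ convergence of $V_k$ to $0$.} Since $V_k\in[0,1]$ is nonincreasing pathwise, it converges to a random limit $V_\infty\ge 0$. Iterating the pathwise bound in Stage 1 over disjoint blocks gives
\[
V_{Nh} \;\le\; V_0\prod_{i=0}^{N-1}\big(1-\alpha^h\mathbf 1_{A_{ih}}\big),
\]
so $V_\infty = 0$ on the event $\{\sum_{i\ge 0}\mathbf 1_{A_{ih}}=\infty\}$, and the whole problem reduces to showing that this event has probability one. This is the main obstacle: Assumption \ref{assumption_1}(a) only supplies positive (not uniformly bounded) unconditional probabilities $\Pr[A_{ih}]$ together with the divergent-sum condition \eqref{pro_to_inf_gener}, so $\Pr[A_k\mid\mathcal F_k]$ need not admit a deterministic positive lower bound in $x_k$, and Theorem \ref{stability_finite} cannot be applied in the purely exponential form of Corollary \ref{Coroll_invariant_expo}. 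My plan to close the gap is by contradiction: assuming $\Pr[V_\infty>c]>0$ for some $c>0$, I would take expectations in the Stage 1 inequality, telescope, and use $V_{kh}\ge V_\infty$ to obtain $\sum_k\mathbb E[V_\infty\Pr(A_{kh}\mid\mathcal F_{kh})]<\infty$; combining this with L\'evy's martingale-convergence form of the Borel--Cantelli lemma applied to the $\mathcal F_{(k+1)h}$-adapted events $A_{kh}$, and restarting \eqref{pro_to_inf_gener} at a late time on the event $\{V_\infty>c\}$, one obtains a contradiction. I expect this conditional-Borel--Cantelli-plus-restart step to be the main technical hurdle of the whole proof.

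\textbf{Stage 3: from $\tau\to 0$ to matrix convergence.} Once $V_k\to 0$ a.s., the classical min/max argument for products of stochastic matrices applies: for each column $j$ of $W(k,0)$, $\min_i W_{ij}(k,0)$ is nondecreasing in $k$ and $\max_i W_{ij}(k,0)$ is nonincreasing, while $\tau(W(k,0))\to 0$ forces $\max_i W_{ij}(k,0)-\min_i W_{ij}(k,0)\to 0$. The common random limit defines $\xi^j$; row-stochasticity gives $\xi^\top\mathbf 1=1$, and hence $W(k,0)\to \mathbf 1\xi^\top=L$ almost surely, completing the proof.
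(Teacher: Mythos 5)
Your Stages 1 and 3 are sound, and they parallel the paper's own ingredients: the pathwise estimate $V_{k+h}\le V_k(1-\alpha^h\mathbf{1}_{A_k})$ is the same contraction the paper extracts from Proposition \ref{scrambling_bounded} and the submultiplicativity \eqref{coeffi_submulti} when it derives \eqref{lyapunov_decrease} (the paper applies it to the spread $v_k$ of the vector iterate $x_{k+1}=W(k+1)x_k$ rather than to $\tau(W(k,0))$, an immaterial difference), and Stage 3 is routine. The genuine gap is Stage 2, and it is worse than a ``technical hurdle'': the contradiction scheme you outline cannot be made to work. Your telescoping gives $\sum_k\mathbb{E}\big[V_\infty\Pr[A_{kh}\,|\,\mathcal{F}_{kh}]\big]<\infty$, hence $V_\infty\sum_k\Pr[A_{kh}\,|\,\mathcal{F}_{kh}]<\infty$ almost surely, and L\'evy's conditional Borel--Cantelli lemma then yields that on $\{V_\infty>c\}$ only finitely many $A_{kh}$ occur. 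Nothing in this contradicts \eqref{pro_to_inf_gener}: that hypothesis bounds \emph{unconditional} probabilities, and their sum can diverge entirely on account of the part of the sample space where $V_\infty=0$. ``Restarting'' \eqref{pro_to_inf_gener} on $\{V_\infty>c\}$ is not legitimate, since conditioning on that event changes the measure, and \eqref{pro_to_inf_gener} says nothing about the conditioned process. A two-point example makes the obstruction concrete: take $n=2$, let $S$ be the matrix with all entries $1/2$, let $P$ be the permutation matrix with zero diagonal, and let a single fair coin at time $0$ decide whether $W(k)=S$ for all $k$ or $W(k)=P$ for all $k$. Assumption \ref{assumption_1} holds with $h=1$ and $\alpha=1/2$ (each block is scrambling with unconditional probability $1/2$, so \eqref{pro_to_inf_gener} diverges), yet on the permutation branch $W(k,0)=P^k$ oscillates and $V_\infty=1$; all quantities in your Stage-2 chain are finite there, so no contradiction can be extracted from them.

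This example also pinpoints exactly where your route departs from the paper's. In deriving \eqref{lyapunov_decrease}, the paper replaces $\mathbb{E}[\tau(W(k+h,k))\,|\,\mathcal{F}_k]$ by the unconditional quantity $\mathbb{E}[\tau(W(k+h,k))]=\sum_i\Pr[S_k^i]\tau(S_k^i)+\sum_j\Pr[\bar S_k^j]\tau(\bar S_k^j)$ and then telescopes at the level of the deterministic sequence $\mathbb{E}V(x_k)$, to which \eqref{pro_to_inf_gener} applies directly. That identification is exact when each block $W(k+h,k)$ is independent of $\mathcal{F}_k$, but not under general dependence: in the coin example $\tau(W(k+h,k))$ and $V(x_k)$ are positively correlated and the inequality fails (as does the theorem's conclusion on that example). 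Your formulation keeps the conditional probability $\Pr[A_k\,|\,\mathcal{F}_k]$, which is the honest object under dependence, but then \eqref{pro_to_inf_gener} is simply too weak to finish. To close Stage 2 you need one of the following: independence of the blocks from the past, in which case $\Pr[A_{kh}\,|\,\mathcal{F}_{kh}]=\Pr[A_{kh}]$ and your own telescoped bound together with \eqref{pro_to_inf_gener} immediately forces $\mathbb{E}V_\infty=0$; a conditional strengthening of Assumption \ref{assumption_1}, e.g.\ $\sum_k\Pr[A_{kh}\,|\,\mathcal{F}_{kh}]=\infty$ a.s.; or the uniform bound $\Pr[A_k\,|\,\mathcal{F}_k]\ge p>0$ in the spirit of Theorem \ref{convergence_rate_pro_bounded}. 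As written, your proposal does not prove the statement, and no argument from the hypotheses read literally can, since the coin example satisfies them while violating the conclusion; any complete proof must interpret Assumption \ref{assumption_1} in such a conditional or independent-block sense.
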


To prove Theorem \ref{theorem_geral}, consider the stochastic discrete-time dynamical system described by
\begin{equation}\label{system_dyn}
x_{k+1}=W_{y(k+1)}x_{k}:=W(k+1)x_{k},
\end{equation}
for all $k\in\N_0$, where $x_k\in\mathbb{R}^n$,  the initial state $x_0$ is a constant with probability one, ${y(k)}$ is regarded as randomly switching signal, and $\{W(1),W(2),\dots\}$ is the random process of stochastic matrices we are interested in. One knows that $x_k$ is adapted to $\mathcal{F}_k$. Thus, to investigate the limiting behavior of the product \eqref{product}, it is sufficient to study the limiting behavior of system dynamics \eqref{system_dyn}. We say the state of system \eqref{system_dyn} reaches an \textit{agreement} state if $\lim_{k \to \infty}x_k=\mathbf{1} \xi$ for some $\xi \in \mathbb{R}$. Then the agreement of system \eqref{system_dyn} for any initial state $x_0$ implies that $W(k,0)$ converges to a rank-one matrix as $k \to \infty$  \cite{Touri}.

To investigate the agreement problem, we define 
$\left\lceil {x_k} \right\rceil:=\max_{1 \le i \le n}x^i_k,\left\lfloor {x_k} \right\rfloor:=\min_{1 \le i \le n}x^i_k$, and 
\begin{equation}\label{eq:Lyapunov}
v_k=\left\lceil {x_k} \right\rceil-\left\lfloor {x_k} \right\rfloor.
\end{equation}
 For any $k\in\N$, $v_k$ is adapted to  $\mathcal{F}_k$ since $x_k$ is. {The agreement is said to be reached asymptotically almost surely if $v_k \stackrel{a.s.}{\longrightarrow} 0$ as $k\to \infty$, and it is said to be reached exponentially almost surely with convergence rate no slower than $\gamma^{-1}$ if there exists $\gamma>1$ such that $\gamma^k v_k \stackrel{a.s.}{\longrightarrow} y$ for some finite $y\ge 0$.} The random variable $v_k$ has some important properties given by the following proposition.

\begin{proposition}\label{inequality_v}
	Let $x_{k+1}=Ax_k$, where $A$ is a stochastic matrix. Then $v_{k+1} \le v_k$, and $v_{k+1} < v_{k}$ for any $x_{k} \notin {\rm span} (\mathbf{1})$ \textit{if and only if} $A$ is scrambling (i.e., $A\in \mathcal{M}_2)$. 
\end{proposition}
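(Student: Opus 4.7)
The plan is to reduce the proposition to the mass-splitting identity that underlies the coefficient of ergodicity $\tau$ introduced in \eqref{coeffi_ergo}, and then exploit the characterization \eqref{coefficients_scrambling}. Because $A$ is row-stochastic, each $x_{k+1}^i = \sum_s A_{is} x_k^s$ is a convex combination of the entries of $x_k$, so $\lfloor x_k\rfloor \le x_{k+1}^i \le \lceil x_k\rceil$ for every $i$, which already gives the inequality $v_{k+1}\le v_k$ part of the claim.

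For the strict-decrease direction (the ``if'' part), I would sharpen this observation pairwise. Fix indices $i,j$ and write $m_s := \min(A_{is}, A_{js})$, $\delta_{ij} := \sum_s m_s$. Since $A_{is} - m_s \ge 0$, $A_{js} - m_s \ge 0$, and $\sum_s (A_{is}-m_s) = \sum_s (A_{js}-m_s) = 1 - \delta_{ij}$, the decomposition
\begin{equation*}
x_{k+1}^i - x_{k+1}^j \;=\; \sum_s (A_{is}-m_s)x_k^s \;-\; \sum_s (A_{js}-m_s)x_k^s
\end{equation*}
immediately yields $x_{k+1}^i - x_{k+1}^j \le (1-\delta_{ij})\,v_k$. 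Maximizing over $i,j$ and recalling \eqref{coeffi_ergo} gives $v_{k+1}\le \tau(A)\,v_k$. If $A$ is scrambling, then \eqref{coefficients_scrambling} says $\tau(A)<1$, and $v_{k+1}\le \tau(A) v_k < v_k$ whenever $v_k>0$, i.e.\ whenever $x_k\notin\mathrm{span}(\mathbf 1)$.

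For the converse (the ``only if'' part), I would argue by contrapositive: if $A$ is not scrambling, then some two rows $i,j$ are orthogonal, meaning their positive-entry supports $S_i := \{s: A_{is}>0\}$ and $S_j := \{s: A_{js}>0\}$ are disjoint. Define $x_k$ by $x_k^s = 1$ for $s\in S_i$, $x_k^s = 0$ for $s\in S_j$, and $x_k^s$ arbitrary in $[0,1]$ for $s\notin S_i\cup S_j$. Then $x_k\notin \mathrm{span}(\mathbf 1)$ and $v_k = 1$, while $x_{k+1}^i = \sum_{s\in S_i} A_{is}\cdot 1 = 1$ and $x_{k+1}^j = \sum_{s\in S_j} A_{js}\cdot 0 = 0$, so $v_{k+1}\ge 1 = v_k$. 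Combined with the already-established monotonicity $v_{k+1}\le v_k$, this gives $v_{k+1} = v_k$, contradicting strict decrease.

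No single step looks genuinely hard; the main care is in writing the pairwise decomposition so that the bound $v_{k+1}\le\tau(A)v_k$ is transparent (and in invoking \eqref{coefficients_scrambling}, which is already stated in the paper), and in choosing the counterexample for necessity so that $x_k\notin\mathrm{span}(\mathbf 1)$ and the extreme values are realized simultaneously in rows $i$ and $j$. Both are handled by the disjoint-support construction above.
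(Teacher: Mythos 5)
Your proof is correct and follows essentially the same route as the paper: sufficiency via the contraction bound $v_{k+1}\le\tau(A)\,v_k$ combined with the characterization \eqref{coefficients_scrambling}, and necessity via the identical disjoint-support counterexample ($x_k$ equal to $1$ on the support of row $i$, $0$ on the support of row $j$, arbitrary values in between elsewhere). The only difference is that you derive $v_{k+1}\le\tau(A)\,v_k$ from first principles via the pairwise mass-splitting decomposition, whereas the paper simply cites this inequality from \cite{Nonnegative_Matrices}; this makes your argument self-contained but does not change its structure.
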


\begin{proof}
	It is shown in \cite{Nonnegative_Matrices} that $v_{k+1}\le \tau{(A)} v_{k}$ with $\tau(\cdot)$ defined in \eqref{coeffi_ergo}. Therefore, the sufficiency follows from \eqref{coefficients_scrambling} straightforwardly. We then prove the necessity by contradiction. Suppose $A$ is not scrambling, and then there must exist at least two rows, denoted by $i,j$, that are orthogonal. Define the two sets $\mathbf{i}:=\{l:a_{il} > 0,l\in \mathbf N\}$ and $\mathbf{j}:=\{m:a_{jm} > 0,m \in \mathbf N\}$, respectively. It follows then from the scrambling property that $\mathbf{i} \cap \mathbf{j}=\emptyset$.	
	Let $x^q_{k}=1$ for all $q\in \mathbf{i}$, $x^q_{k}=0$ for all $q\in \mathbf{j}$, and let $x^m_{k}$ be any arbitrary positive number less than 1 for all $m\in \mathbf N \backslash (\mathbf{i} \cup \mathbf{j})$ if $\mathbf N \backslash (\mathbf{i} \cup \mathbf{j})$ is not empty. Then the states at time $k+1$ become
	\begin{align*}
	&x^i_{k + 1}  = \sum\limits_{l = 1}^n {a_{il}}{x^l_k}  = \sum\limits_{l \in \mathbf{i}} {a_{il}}{x^l_k}  = 1, \\
	&x^j_{k + 1}  = \sum\limits_{l = 1}^n {{a_{jl}}{x^l_k}}  = \sum\limits_{l \in \mathbf{j}} {{a_{jl}}{x^l_k}}  = 0,
	\end{align*}
	and $0\le x^m_{k+1}\le 1$ for all $m\in \mathbf N \backslash (i \cup j)$.
	This results in $v_{k+1}=v_k=1$. By contradiction one knows that a scrambling $A$ is necessary for $v_{k+1} < v_k$, which completes the proof.
\end{proof}

In order to prove Theorem \ref{theorem_geral}, the following intermediate result is useful.

\begin{proposition}\label{scrambling_bounded}
	For any scrambling matrix $A\in \mathbb{R}^{n\times n}$, the coefficient of ergodicity $\tau(A)$ defined in \eqref{coeffi_ergo} satisfies
	\begin{equation}
	\tau(A)\le 1-\gamma 
	\nonumber
	\end{equation}
	if all the positive elements of $A$ are lower bounded by $\gamma>0$.
\end{proposition}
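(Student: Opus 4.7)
The plan is to work directly from the definition $\tau(A) = 1 - \min_{i,j}\sum_{s=1}^n \min(a_{is}, a_{js})$, reducing the claim to a lower bound on the inner sum. It suffices to prove that
$$\min_{i,j}\sum_{s=1}^n \min(a_{is}, a_{js}) \ge \gamma,$$
so I would fix an arbitrary pair of row indices $i, j$ and show the sum alone is at least $\gamma$, then take the minimum over $i, j$.

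The key step exploits the scrambling hypothesis. Since $A \in \mathcal{M}_2$, by definition no two rows of $A$ are orthogonal. Hence for any fixed pair $i, j$ there exists at least one column index $s^*$ with $a_{is^*} > 0$ and $a_{js^*} > 0$ simultaneously. This is really the only use of the scrambling property; it guarantees a single "common" positive coordinate between any two rows.

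Now I would bring in the hypothesis on the positive entries: since both $a_{is^*}$ and $a_{js^*}$ are strictly positive, by assumption each is at least $\gamma$, so $\min(a_{is^*}, a_{js^*}) \ge \gamma$. Because every term $\min(a_{is}, a_{js})$ in the sum is nonnegative, dropping all terms but $s = s^*$ gives
$$\sum_{s=1}^n \min(a_{is}, a_{js}) \ge \min(a_{is^*}, a_{js^*}) \ge \gamma.$$
Since $i, j$ were arbitrary, the same lower bound holds after minimizing over row pairs, and substituting back into the definition yields $\tau(A) \le 1 - \gamma$.

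I do not foresee any real obstacle: the argument is a one-line consequence of the definitions once the scrambling property is used to extract a column where two prescribed rows are simultaneously positive. The only subtlety worth flagging in the write-up is the case $i = j$, where the sum trivially equals $1 \ge \gamma$ (noting $\gamma \le 1$ since $A$ is stochastic), so it does not degrade the bound.
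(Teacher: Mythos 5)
Your proof is correct and follows essentially the same argument as the paper: both use the scrambling property to extract a column where the two rows are simultaneously positive, apply the lower bound $\gamma$ on positive entries, and drop the remaining nonnegative terms before minimizing over row pairs. The only cosmetic difference is that the paper sums over the whole intersection of positive-support sets while you keep a single common column, which changes nothing.
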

\begin{IEEEproof}
	Consider any two rows of $A$, denoted by $i,j$. Define two sets, $\mathbf{i}:=\{s:a_{is}>0\}$ and $\mathbf{j}:=\{s:a_{js}>0\}$. From the scrambling hypothesis, one knows that $\mathbf{i} \cap \mathbf{j} \ne \emptyset$. Thus it holds that
	\[\sum\limits_{s = 1}^n {\min \left( {{a_{is}},{a_{js}}} \right)}  = \sum\limits_{s \in \mathbf{i} \cap \mathbf{j}} {\min \left( {{a_{is}},{a_{js}}} \right) \ge \gamma }. \]
	Then from the definition of $\tau(A)$, it is easy to see 
	\[\tau \left( A \right) = 1 - \mathop {\min }\limits_{i,j} \sum\limits_{s = 1}^n {\min } \left( {{a_{is}},{a_{js}}} \right) \le 1 - \gamma, \]
	which completes the proof.
\end{IEEEproof}
We are in the position to prove Theorem \ref{theorem_geral} by showing that $v_k \stackrel{a.s.}{\longrightarrow} 0$ as $k\to\infty$, where Theorem \ref{stability_finite} and Corollary \ref{Coroll_invariant} will be used.

\begin{IEEEproof}[Proof of Theorem \ref{theorem_geral}]
	Let $V(x_k)=v_k$ be a finite-step stochastic Lyapunov function candidate for the system dynamics \eqref{system_dyn}. It is easy to see $V(x)=0$ if and only if $x\in {\rm span}(\mathbf{1})$. Since all $W(k)$ are stochastic matrices, we observe that $\mathbb{E}[V(x_{k+1})|{\mathcal{F}_k}]-V(x_k)\le 0$ from Proposition \ref{inequality_v}, which implies that $V(x_k)$ is exactly a supermartingale with respect to $\mathcal F_k$. From Lemma \ref{supermartingale}, we know $V(x_k)\stackrel{a.s.}{\longrightarrow} {\bar V}$ for some ${\bar V}$ because $V(x_k)\ge0$ and $\mathbb{E}V(x_k)<\infty$. From Assumption \ref{assumption_1}, we know that there is an $h$ such that the product $W(k+h,k)$ is scrambling with positive probability for any $k$. Let $\mathcal{W}_k$ be the set of all possible $W(k+h,k)$ at time $k$, and  $n_k$  the cardinality of $\mathcal{W}_k$. Let $n_k^s$ be the number of scrambling matrices in $\mathcal{W}_k$. We denote each of these scrambling matrices and each of non-scrambling matrices by $S_k^i,i=1,\dots,n_k^s$ and $\bar{S}_k^j,j=1,\dots,n_k-n_k^s$, respectively. The probabilities of all the possible $W(k+h,k)$ sum to 1, i.e., 
	\begin{equation}\label{proba_sum}
	\sum\limits_{i = 1}^{n_k^s} {\Pr \left[ {{S^i_k}} \right]}  + \sum\limits_{j = 1}^{{n_k} - n_k^s} {\Pr \left[ \bar{S}_k^j \right]}  = 1.
	\end{equation}
	Then the conditional expectation of $V(x)$ after finite steps for any $k$ becomes
	\begin{align*}
	\mathbb{E}\left[ {\left. {V\left( {x_{k + h} } \right)} \right|{\mathcal{F}_k}} \right] &- V\left( {x_ k} \right)\\
	&= \mathbb{E}\left[ {V\big( {W\left( {k+h,k} \right)x_k} \big)} \right]- V\left( {x_k} \right)\\
	&\le \mathbb{E} \left[\tau \big( {W\left( { k+h,k} \right)} \big)\right] V\big( {x_k} \big) - V\left( {x_k} \right),
	\end{align*}	
	where $\tau(\cdot)$ is given by \eqref{coeffi_ergo}. One can calculate that 
	\begin{align*}
		\mathbb{E}\left[ {\tau \Big(W\left( {k+h,k} \right)\Big)} \right] &- 1\\
	= \sum\limits_{i = 1}^{n_k^s} {\Pr \left[ {S_k^i} \right]} \tau \left( {S_k^i} \right) &+ \sum\limits_{j = 1}^{{n_k} - n_k^s} {\Pr \left[ {\bar S_k^j} \right]} \tau \left( {\bar S_k^j} \right) - 1\\
	&\le \sum\limits_{i = 1}^{n_k^s} {\Pr \left[ {S_k^i} \right]} \Big( {\tau \left( {S_k^i} \right) - 1} \Big),
	\end{align*}
	where Proposition \ref{inequality_v} and equation \eqref{proba_sum} have been used. From Assumption \ref{assumption_1}.b), we know that the positive elements of $W(k)$ are lower bounded by $\alpha$, and thus  the positive elements of $S_k^i$ in \eqref{lyapunov_decrease}  are lower bounded by $\alpha^h$. Thus $\tau(S_k^i)\le1-\alpha^h$ according to Proposition \ref{scrambling_bounded}, and it follows that
	\begin{align}
	\mathbb{E}[ \left. {V\left(x_ {k + h}\right)} \right|&{{\cal F}_k} ]- V\left( {x_k} \right) \nonumber\\
	&\le -\sum\limits_{i = 1}^{n_k^s} {\Pr \left[ {S_k^i} \right]} \alpha^h \mathbb E  V(x_k): = {\varphi _k}\left(x_k\right).\label{lyapunov_decrease}
	\end{align}
	By iterating, one can easily show that
	\begin{align}
	\mathbb{E}\left[ {{V\left(x_ {nh}\right)}} \right]&- V\left( {x_0} \right)\le -\sum\nolimits_{k=0}^{n-1}{\varphi _k}\left(x_k\right) \nonumber\\
	&=-\sum\nolimits_{k = 0}^{n-1}\sum\nolimits_{i = 1}^{n_k^s} {\Pr \left[ {S_k^i} \right]} \alpha^h \mathbb E  V(x_k).
	\end{align}
	It then follows that $V\left( {x_ 0 }\right)-\mathbb{E}\left[ {{V\left(x_{nh} \right)}} \right] <\infty$ even when $n \to \infty$, since $V(x)\ge 0$. According to the condition  \eqref{pro_to_inf_gener}, we know $\sum_{k = 0}^{n-1}\sum_{i = 1}^{n_k^s} {\Pr \left[ {S_k^i} \right]}=\infty$. By contradiction, it is easy to infer that $\mathbb E  V(x_k)\stackrel{a.s.}{\longrightarrow} 0$. Since we have already shown that $V(x_k)\stackrel{a.s.}{\longrightarrow} \bar V$ for some random $\bar V \ge0$, one can conclude that $V(x_k)\stackrel{a.s.}{\longrightarrow} 0$.
For any given $x_0 \in \mathbb{R}^n$, define the compact set $\mathcal Q:=\{x:\left\lceil {x} \right\rceil\le\left\lceil {x_0} \right\rceil,\left\lfloor {x} \right\rfloor\ge\left\lfloor {x_0} \right\rfloor$. For any random sequence $\{W(k)\}$, it follows from the system dynamics \eqref{system_dyn} that
	\begin{equation}	
	\begin{array}{l}
	\left\lceil{x_k}\right\rceil \le \left\lceil{x_{k-1}}\right\rceil \le\cdots\le\left\lceil{x_1}\right\rceil\le \left\lceil{x_0}\right\rceil,\\
	\left\lfloor{x}_k\right\rfloor \ge \left\lfloor{x}_{k-1}\right\rfloor \ge \cdots\ge \left\lfloor x_1\right\rfloor \ge \left\lfloor{x}_0\right\rfloor,
	\end{array}
	\nonumber
	\end{equation}	
	and thus $x_k$ will remain within $\mathcal Q$. From {Corollary \ref{lassale-like}}, we know that $x_k$ asymptotically converges to $\{x \in \mathcal Q :\varphi_k(x)=0\}$, or equivalently, $\{x\in \mathcal Q:V(x)=0\}$ {almost surely} as $k\to \infty$  since $V(x)$ is continuous. In other words, for any $x_0\in\mathbb{R}^n$, $x_k \stackrel{a.s.}{\longrightarrow} \zeta \mathbf{1}$ for some $\zeta \in \mathbb{R}$, which proves Theorem \ref{theorem_geral}.	
\end{IEEEproof}

For a random sequence of stochastic matrices, Theorem \ref{theorem_geral} has provided a quite relaxed condition for the backward product \eqref{product} determined by the random sequence $\{W(k)\}$ to converge to a rank-one matrix: over any time interval of length $h$, i.e., $[h+t,t]$ for any $t \ge 0$, the product $W(t+h)\cdots W({t+1})$ has positive probability to be scrambling. The following corollary follows straightforwardly since any Markov matrix is certainly scrambling.

\begin{corollary}\label{scrambling_cora}
	For a random sequence $\{W_k\}_{k=1}^\infty$, the product \eqref{product} converges to a random matrix $L=\mathbf{1}\xi^\top$  almost surely if there exists an integer $h$ such that $W(t+h,t)$ becomes a Markov matrix for any $k$ with positive probability and $\sum\nolimits_{i = 1}^\infty  {\Pr \left[ {W\left( {k + ih,k + \left( {i - 1} \right)h} \right)} \in \mathcal{M}_3 \right]}  = \infty, \forall k$.
\end{corollary}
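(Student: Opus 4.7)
The plan is to reduce Corollary \ref{scrambling_cora} to Theorem \ref{theorem_geral} by showing that the class $\mathcal{M}_3$ of Markov matrices is contained in the class $\mathcal{M}_2$ of scrambling matrices, so that the hypotheses of the corollary are merely a restriction of those in Assumption \ref{assumption_1}.

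First I would establish $\mathcal{M}_3 \subseteq \mathcal{M}_2$ by a direct elementary argument. Suppose $A\in\mathcal{M}_3$, so there exists a column index $s^*$ with $a_{is^*}>0$ for every row $i$. Then for any pair of rows $i,j$, both $a_{is^*}$ and $a_{js^*}$ are positive, hence $\min(a_{is^*},a_{js^*})>0$. Thus no two rows of $A$ are orthogonal, which is precisely the definition of $A$ being scrambling. This inclusion is the single key observation, and once it is in hand the rest is a containment of events.

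Next I would translate this into probabilities. Since $\{W(t+h,t)\in\mathcal{M}_3\}\subseteq\{W(t+h,t)\in\mathcal{M}_2\}$ as events, monotonicity of $\Pr[\cdot]$ gives
\begin{equation*}
\Pr\bigl[W(t+h,t)\in\mathcal{M}_2\bigr]\;\ge\;\Pr\bigl[W(t+h,t)\in\mathcal{M}_3\bigr]
\end{equation*}
for every $t$. Hence the corollary's hypothesis that $\Pr[W(t+h,t)\in\mathcal{M}_3]>0$ for all $t$ immediately yields \eqref{assumption_re}, and the divergence hypothesis on $\sum_i\Pr[W(k+ih,k+(i-1)h)\in\mathcal{M}_3]$ immediately yields \eqref{pro_to_inf_gener} by comparison of non-negative series. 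This verifies part a) of Assumption \ref{assumption_1}.

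Finally I would dispose of part b) of Assumption \ref{assumption_1}: since the excerpt's Section III takes $W(k)$ to be drawn from the finite set $\mathcal{M}=\{F_1,\dots,F_m\}$, the positive entries of $W(k)$ are uniformly bounded below by $\alpha:=\min_{i,j,\ell}\{(F_\ell)_{ij}:(F_\ell)_{ij}>0\}>0$, so b) is automatic. With both parts of Assumption \ref{assumption_1} now satisfied, Theorem \ref{theorem_geral} applies and $W(k,0)\to\mathbf{1}\xi^{\top}$ almost surely, giving the corollary. I do not anticipate a real obstacle here; the only substantive step is the inclusion $\mathcal{M}_3\subseteq\mathcal{M}_2$, which is essentially a one-line observation.
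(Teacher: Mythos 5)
Your proposal is correct and follows exactly the paper's route: the paper justifies this corollary with the one-line remark that any Markov matrix is certainly scrambling, so the hypotheses reduce to Assumption \ref{assumption_1} and Theorem \ref{theorem_geral} applies. Your write-up simply fills in the details of that inclusion (common positive column implies no two rows orthogonal), the monotonicity of the probabilities, and the automatic uniform lower bound from the finite matrix set, all of which are consistent with the paper's setup.
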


Next we assume that the sequence $\{W(k)\}$ is driven by an underlying \textit{stationary} process. Then the condition in Theorem \ref{theorem_geral} can be further relaxed. 
 
 \begin{assumption}\label{assumption_2}
 	Suppose the random sequence of stochastic matrices $\{W(k)\}$ is driven by a stationary process satisfying the following conditions.
 	\begin{enumerate}
 		\item[a)] There exists an integer $h>0$ such that 
 		\begin{equation}\label{assumption_re2}
 		\Pr \left[ W(k+h,k)\in \mathcal{M}_1 \right]>0	
 		\end{equation}
 		holds for any $k \in \N_0$. 		
 		\item[b)] There is a positive number $\alpha$ such that $W_{ij}(k)\ge \alpha$ whenever $W_{ij}(k)>0$.
 		\end{enumerate}
 \end{assumption}
In other words, Assumption \ref{assumption_2} suggests that any corresponding matrix product of length $h$ becomes an SIA matrix with positive probability, and the positive elements for all $W(k)$ are uniformly lower bounded away from some positive value.
 
\begin{theorem}\label{theorem_stationary}
		Under Assumption \ref{assumption_2}, the product of the random sequence of stochastic matrices  $W(k,0)$ converges to a random matrix $L=\mathbf{1}\xi^\top$  almost surely.	
\end{theorem}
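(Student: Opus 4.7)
The plan is to reduce Theorem \ref{theorem_stationary} to the already-established Theorem \ref{theorem_geral} by verifying that Assumption \ref{assumption_1} holds under Assumption \ref{assumption_2}, with the integer $h$ there replaced by a suitable multiple $H=Nh$. The decisive simplification from stationarity is that the probability $\Pr[W(k+H,k)\in\mathcal{M}_2]$ is independent of $k$; hence once I exhibit a single value of $k$ for which this probability is strictly positive, the positivity condition \eqref{assumption_re} holds uniformly in $k$, and the summability condition \eqref{pro_to_inf_gener} reduces to an infinite sum of a positive constant and so holds automatically. The whole task therefore boils down to upgrading the positive-probability SIA statement at scale $h$ in Assumption \ref{assumption_2}(a) into a positive-probability scrambling statement at some larger scale $Nh$.

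First, Assumption \ref{assumption_2}(b) forces the underlying matrix alphabet $\{F_1,\ldots,F_m\}$ to be finite, so the collection $\mathcal{P}$ of possible $h$-step products $W(k+h,k)$ is a finite set, and by Assumption \ref{assumption_2}(a) the subcollection $\mathcal{P}_{\mathrm{SIA}}\subseteq\mathcal{P}$ of SIA matrices occurring with positive probability is nonempty. The key deterministic input would be Wolfowitz's theorem on finite classes of SIA stochastic matrices: if $\mathcal{S}$ is a finite set of SIA matrices such that every finite product of elements of $\mathcal{S}$ is again SIA, then there exists an integer $N$ for which every $N$-fold product drawn from $\mathcal{S}$ is scrambling. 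In the simplest instance, if a single $M\in\mathcal{P}_{\mathrm{SIA}}$ is chosen, the SIA property forces $M^j\to\mathbf{1}\xi^\top$, so some power $M^N$ is scrambling; whenever the $N$ consecutive blocks $W(jh,(j-1)h)$, $j=1,\ldots,N$, all equal $M$, the product $W(Nh,0)=M^N$ is scrambling.

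The main obstacle is the probabilistic step: showing that the relevant joint configuration of $N$ consecutive blocks has positive probability under a stationary (not necessarily ergodic or independent) process, since stationarity alone controls marginals but not joints. I would attack this via the ergodic decomposition $\Pr=\int\Pr_\theta\,d\nu(\theta)$ into ergodic components. Assumption \ref{assumption_2}(a) guarantees $\Pr_\theta[W(h,0)\in\mathcal{M}_1]>0$ on a $\nu$-positive set of components, and on each such component Birkhoff's theorem applied to the stationary block process $Y_j:=W(jh,(j-1)h)$ gives the combinatorial grip needed to produce a positive-probability occurrence of a specific scrambling configuration of length $N$. Integrating back over $\theta$ yields $\Pr[W(Nh,0)\in\mathcal{M}_2]>0$, at which point Theorem \ref{theorem_geral} applied at $H=Nh$ delivers the almost-sure convergence of $W(k,0)$ to a random rank-one matrix $L=\mathbf{1}\xi^\top$. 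The truly delicate part is ensuring that this joint-occurrence step goes through for a general stationary process without any explicit mixing hypothesis, and this is where I expect the main technical effort of the proof to lie.
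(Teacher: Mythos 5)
Your overall strategy coincides with the paper's: reduce Theorem \ref{theorem_stationary} to Theorem \ref{theorem_geral} by verifying Assumption \ref{assumption_1} at a larger block length, note that stationarity makes the divergence condition \eqref{pro_to_inf_gener} automatic once the positivity condition \eqref{assumption_re} holds, and invoke the deterministic fact that repeated SIA structure eventually yields a scrambling product. However, your proposal has a genuine gap at precisely the step you yourself flag as ``the truly delicate part'': you never establish that some product $W(Nh,0)$ is scrambling with positive probability, and the route you sketch would not deliver it. First, the joint configuration you aim for --- $N$ consecutive blocks $W(jh,(j-1)h)$ all \emph{equal} to one fixed SIA matrix $M$ --- can have probability zero under a stationary law even though each single block equals $M$ with positive probability: take the process supported on cyclic repetitions of a fixed pattern with a uniformly random phase, where $M$-blocks are never adjacent in the pattern; this process is stationary and satisfies Assumption \ref{assumption_2}, yet your target event never occurs (while the theorem's conclusion may still hold). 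So equality of consecutive blocks is too strong a configuration to chain. Second, the ergodic-decomposition/Birkhoff patch is only named, not carried out, and Birkhoff's theorem is the wrong instrument: it produces almost-sure asymptotic \emph{frequencies} of single-block events, not positive probability of a prescribed joint configuration of consecutive blocks; moreover, on a single ergodic component of the one-step shift, the induced $h$-block process need not be ergodic for the $h$-step shift, so the ``combinatorial grip'' you invoke is not available. (A minor further slip: the finiteness of the matrix alphabet is part of the paper's setup, $\mathcal M=\{F_1,\dots,F_m\}$; it is not a consequence of Assumption \ref{assumption_2}(b), since entries bounded below by $\alpha$ still allow a continuum of matrices.)

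The paper resolves the step you left open by a different and weaker requirement: it chains consecutive blocks of the same \emph{type} (same zero pattern, written $\sim$) rather than literally equal blocks. Concretely, using stationarity it asserts that, conditioned on the event that $W(t+h,t)$ equals a particular SIA matrix $A$ (an event of positive probability), the next block $W(t+2h,t+h)$ has the same type as $A$ with positive probability, and it iterates this conditioning to obtain $m$ consecutive same-type SIA blocks with positive probability. Since all positive entries are bounded below by $\alpha$ (Assumption \ref{assumption_2}(b)), the zero pattern of a product of same-type matrices agrees with that of the corresponding power of $A$, so for $m$ large enough such a product is scrambling; hence $\Pr\left[ W(t+mh,t)\in\mathcal{M}_2 \right]>0$, stationarity supplies \eqref{pro_to_inf_gener}, and Theorem \ref{theorem_geral} finishes the argument. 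The missing idea in your proposal is therefore the relaxation from ``identical consecutive blocks'' to ``same-type consecutive blocks'' together with this chaining-by-conditioning device; without it, or a substitute for it, your outline does not constitute a proof.
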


If two stochastic matrices $A_1$ and $A_2$ have zero elements in the same positions, we say these two matrices are of the same type, denoted by $A_1\sim A_2$. Obviously, there holds the trivial case  $A_1 \sim A_1$. One knows that for any SIA matrix $A$, there exists an integer $l$ such that $A^l$ is scrambling; it is easy to extend this to the inhomogeneous case, i.e., any product of $l$ stochastic matrices of the same type of $A$ is scrambling if all the matrices are element-wise lower bounded.

\begin{IEEEproof}[Proof of Theorem \ref{theorem_stationary}]
Since $\{W(k)\}$ is driven by a stationary process, we know that $\{W \left( t+h\right),\dots, W \left(t+1 \right) \}$ has the same joint distribution as $\{W \left( t+2h\right),\dots,$ $W \left(t+h+1 \right) \}$  for any $t \in \N_0,h \in \N$. For the $h$ given in Assumption \ref{assumption_2}, there exists an SIA matrix $A$ such that $\Pr [W\big(t+kh+h,t+kh+1\big)=A]>0$. Thus it follows that $\Pr [W\big(t+kh+2h,t+kh+1\big)=A]>0$ for any $k\in \N_0$. Thus
	\[{\Pr \left[ {\left. {\begin{array}{*{20}{c}}
				{W\big({t} + (k + 2)h,{t} + (k + 1)h\big)}\\
				{\sim W\big({t} + (k + 1)h,{t} + kh\big)}
				\end{array}} \right|W\left( {h,{t} + kh} \right)} \right] > 0}.\]
	When $W(t+h,t)\in \mathcal{M}_1$, which happens with positive   probability, we have
	\begin{align*}	
	\Pr& \left[ {\begin{array}{*{20}{l}}
			{W(t + 2h,t + h)\sim W(t + h,t),}\\
			{\;\;\;\;\;\;\;\;\;\;\;\;\;\;\;\;\;\;\;\;\;\;\;\;\;\;\;\;\;\;\;\;\;\;W(t + h,t) \in {{\cal M}_1}}
			\end{array}} \right]\\
	 &= \Pr \left[ {\left. \begin{array}{l}
			W(t + 2h,t + h)\\
			\;\;\;\;\;\;\;\;\sim W(t + h,t)
			\end{array} \right|\Pr \left[ {W(t + h,t) \in {{\cal M}_1}} \right]} \right]\\
	&\;\;\;\;\;\;\;\;\;\;\;\;\;\;\;\;\;\;\;\;\;\; \;\;\;\;\;\;\;\;\;\;\;\;\;\;\; \cdot \Pr \left[ {W(t + h,t) \in {{\cal M}_1}} \right] > 0.
	\end{align*}
	 By  recursion one can conclude that all the $m$ products $W({t} + (k+1)h,{t} + kh), k \in \{0,\dots,m-1\}$, occur as the same SIA type with positive probability.
	Since all the products $W({t} + (k+1)h,{t} + kh)$ are of the same type, one can  choose  $m$ such that $W(t +m h,t)$ is scrambling. This in turn implies that $\Pr \left[ W(t +mh,t)\in \mathcal{M}_2 \right]>0$, and the property of stationary process makes sure that \eqref{pro_to_inf_gener} holds. The conditions in Assumption \ref{assumption_1} are therefore all satisfied, and then Theorem \ref{theorem_stationary} follows from Theorem \ref{theorem_geral}.
	\end{IEEEproof}

\begin{remark}Theorems \ref{theorem_geral} and  \ref{theorem_stationary} have established some sufficient conditions for the convergence of a random sequence of stochastic matrices to a rank-one matrix. A further question is how these results can be applied to control distributed computation processes. To answer this question, let us consider a finite set of stochastic matrices $\mathcal L=\{F_1\dots,F_m\}$, from which each $W(k)$ in the random sequence $\{W(k)\}$ is sampled. 
It is defined in \cite{blondel2014decide} that $\mathcal{L}$ is a consensus set if the arbitrary product $\prod_{i=1}^{k}W(i), W(i)\in\mathcal{L}$, converges to a rank-one matrix. However, it has also been shown that to decide whether $\mathcal{L}$ is a consensus set is an NP-hard problem \cite{blondel2014decide,xia_2}. For a non-consensus set $\mathcal{L}$, it is always not obvious how to find a deterministic sequence that converges, especially when $\mathcal{L}$ has a large number of elements and $F_i$ has zero diagonal entries. However, the convergence can be ensured almost surely by introducing some randomness in the sequence, provided that there is a convergent deterministic sequence intrinsically.
\end{remark}


\subsection{Estimation of Convergence Rate}
In Section \ref{convergence:matrices}, we have shown how the product $W(k,0)$ determined by a random process asymptotically converges to a  rank-one matrix $W$ a.s. as $k\to\infty$. However, the convergence rate for such a randomized product is not yet clear. It is quite challenging to investigate how fast the process converges, especially when each $W(k)$ may have zero diagonal entries. In this subsection, we address this problem by employing finite-step stochastic Lyapunov functions.
Now let us present the main result on the convergence rate.

\begin{theorem}\label{convergence_rate_pro_bounded}
	In addition to Assumption \ref{assumption_1}, if there exist a number $p$, $0<p<1$, such that
	\[\Pr \left[ W(h,t)\in \mathcal{M}_2 \right] \ge p>0,\]
	 then the almost sure convergence of the product of $W(k,0)$ to a random matrix $L=\mathbf{1}\xi^\top$ is exponential, and the rate is {no slower than}  $\left( {1 - p{\alpha ^h}} \right)^{1/h}$.
\end{theorem}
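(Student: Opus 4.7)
The plan is to recycle the Lyapunov function and the one–step estimates from the proof of Theorem \ref{theorem_geral}, and to show that the additional uniform lower bound on $\Pr[W(h,t)\in\mathcal M_2]$ upgrades the ``decrease after $h$ steps'' inequality into a strict geometric contraction, so that Corollary \ref{Coroll_invariant_expo} (the finite–step exponential criterion) applies directly. Throughout I would work with the consensus dynamics $x_{k+1}=W(k+1)x_k$ and the same scalar $V(x_k)=v_k=\lceil x_k\rceil-\lfloor x_k\rfloor$ used before; as observed in the proof of Theorem \ref{theorem_geral}, the compact box $\mathcal Q=\{x:\lfloor x_0\rfloor\le x^i\le \lceil x_0\rceil\}$ is positively invariant for the dynamics, and $V$ is nonnegative and continuous on $\mathcal Q$, so all the hypotheses of Corollary \ref{Coroll_invariant_expo} can be checked on $\mathcal Q$ rather than on all of $\mathbb R^n$.

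First, I would reuse the chain of estimates already derived: by Proposition \ref{inequality_v} and the submultiplicativity of $\tau(\cdot)$, one has
\begin{equation*}
\mathbb E\bigl[V(x_{k+h})\mid \mathcal F_k\bigr]-V(x_k)\le \Bigl(\mathbb E\bigl[\tau(W(k+h,k))\bigr]-1\Bigr)V(x_k),
\end{equation*}
and by splitting the expectation into scrambling and non–scrambling products and bounding $\tau(\bar S^j_k)\le 1$ one obtains
\begin{equation*}
\mathbb E\bigl[\tau(W(k+h,k))\bigr]-1\le \sum_{i=1}^{n_k^s}\Pr[S_k^i]\bigl(\tau(S_k^i)-1\bigr).
\end{equation*}
Now the new ingredient kicks in: Assumption \ref{assumption_1}\,b) guarantees that every positive entry of any $S_k^i$ (being a product of $h$ matrices with positive entries at least $\alpha$) is at least $\alpha^h$, so Proposition \ref{scrambling_bounded} gives $\tau(S_k^i)\le 1-\alpha^h$, and the new hypothesis gives $\sum_i\Pr[S_k^i]=\Pr[W(k+h,k)\in\mathcal M_2]\ge p$. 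Combining these,
\begin{equation*}
\mathbb E\bigl[V(x_{k+h})\mid \mathcal F_k\bigr]-V(x_k)\le -p\alpha^h\,V(x_k),
\end{equation*}
which is exactly the hypothesis of Corollary \ref{Coroll_invariant_expo} with $T=h$ and contraction constant $\alpha_{\star}:=p\alpha^h\in(0,1)$.

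Applying Corollary \ref{Coroll_invariant_expo} on the invariant compact set $\mathcal Q$ then yields $V(x_k)\xrightarrow{a.s.}0$ exponentially fast with rate no slower than $(1-\alpha_{\star})^{1/h}=(1-p\alpha^h)^{1/h}$. Since $V(x_k)\to 0$ is equivalent to $x_k\to \zeta\mathbf 1$ for some random $\zeta$, and this holds for every initial condition $x_0$, taking $x_0$ to be each standard basis vector yields that every column of $W(k,0)$ converges to a multiple of $\mathbf 1$ at the same exponential rate, so $W(k,0)\to \mathbf 1\xi^\top$ almost surely at rate at least $(1-p\alpha^h)^{1/h}$. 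The main subtle point, and the step I would be most careful with, is the condition $p\alpha^h<1$ needed to apply Corollary \ref{Coroll_invariant_expo}: this is automatic because $\alpha\le 1$ and $p<1$, but one should also verify monotonicity of the bound (larger $p$, larger $\alpha$, or smaller $h$ giving faster convergence) as claimed informally in the introduction. No $\sigma$–field manipulation beyond what was already done for Theorem \ref{theorem_geral} is required, so the proof is essentially a one–line strengthening of the previous estimate plus an appeal to the finite–step exponential Lyapunov theorem.
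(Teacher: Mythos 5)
Your proposal is correct and follows essentially the same route as the paper's proof: the same Lyapunov function $V(x_k)=v_k$, the same scrambling/non-scrambling decomposition of $\mathbb{E}\left[\tau\big(W(k+h,k)\big)\right]$ combined with Propositions \ref{inequality_v} and \ref{scrambling_bounded} to reach $\mathbb{E}\left[V(x_{k+h})\mid\mathcal F_k\right]\le(1-p\alpha^h)V(x_k)$, and then an appeal to the paper's finite-step exponential criterion on the invariant box. The one technical slip is the citation: Corollary \ref{Coroll_invariant_expo} does not literally apply, since it inherits from Theorem \ref{stability_finite} the requirement that $V$ be radially unbounded (which $v_k$ is not, as it vanishes on all of ${\rm span}(\mathbf{1})$) and it is stated for the sublevel set $\mathcal Q_\lambda=\{x:V(x)<\lambda\}$, which here is unbounded and is not your compact box; the tool that formally covers your setup --- and the one the paper invokes --- is Corollary \ref{lassale-like}, formulated on a compact positively invariant set precisely to dispense with radial unboundedness, so your argument is repaired simply by citing that corollary instead.
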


\begin{IEEEproof}
	Choosing $V\left(x_k\right)=v_k$ as a finite-step stochastic Lyapunov function candidate, from \eqref{lyapunov_decrease} we have 
	\begin{align}\label{lyapunov_decrease_2}
	\mathbb{E}\left[ {\left. {V\left( {x_ {k + h}} \right)} \right|{\mathcal{F}_k}} \right] &- V\left( {x_k} \right) \nonumber  \\
	&\le-\sum\limits_{i = 1}^{n_k^s} {\Pr \left[ {S_k^i} \right]}  \alpha^h V\left( {x_k} \right).
	\end{align}
	Furthermore, it is easy to see that 
	\[\sum\limits_{i = 1}^{n_k^s} {\Pr \left[ {S_k^i} \right]}=\Pr \left[ W(h,t)\in \mathcal{M}_2 \right] \ge p,\]
	Substituting it into \eqref{lyapunov_decrease_2} yields 
	\begin{equation}\label{lyapunov_decrease_exponentially}	
	\mathbb{E}\left[ {\left. {V\left( {x_{k + h}} \right)} \right|{\mathcal{F}_k}} \right]
	\le \left(1-p  \alpha^h \right) V\left( {x_k} \right). \nonumber
	\end{equation}
	It follows from {Corollary \ref{lassale-like}} that ${V\left( {x_{k + h}} \right)}\stackrel{a.s.}{\longrightarrow} 0$, with an convergence rate {no slower than} $\left( {1 - p{\alpha ^h}} \right)^{1/h}$. In other words, the agreement is reached exponentially almost surely, which implies Theorem \ref{convergence_rate_pro_bounded}.
\end{IEEEproof}

Theorem \ref{convergence_rate_pro_bounded} has established the almost sure exponential convergence rate for the product of  $\{W(k)\}$. If any subsequence $\{W(k+1), \dots,W(k+2),W(k+h)\}$ can result in a scrambling product $W(k+h,k)$ with positive probability and this probability is lower bounded away by some positive number, and then the convergence rate is exponential. \textit{Interestingly, the greater this lower bound is, the faster the convergence becomes.} If we consider a special random sequence which is driven by a stationary ergodic process, the exponential convergence rate follows without any other conditions apart from Assumption \ref{assumption_2}, and an alternative proof is given in Appendix \ref{appendix_1}.

\begin{corollary}\label{stationary_ergodic}
	If the random process governing the evolution of the sequence $\{W(k)\}$ is \emph{stationary ergodic}, the product $W(k,0)$ converges to a random rank-one matrix at an exponential rate almost surely if the conditions of Assumption \ref{assumption_2} are satisfied.
\end{corollary}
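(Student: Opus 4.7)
\smallskip
\noindent\textbf{Proof plan for Corollary \ref{stationary_ergodic}.}
The plan is to combine the block construction used in the proof of Theorem \ref{theorem_stationary} with Birkhoff's (or more generally, Kingman's subadditive) ergodic theorem applied to the logarithm of the coefficient of ergodicity $\tau(\cdot)$. First, I would invoke the argument from the proof of Theorem \ref{theorem_stationary} to promote the ``SIA with positive probability'' hypothesis of Assumption \ref{assumption_2} into a ``scrambling with positive probability'' statement at a larger block length: there exists an integer $m\ge 1$ such that
\begin{equation*}
p \;:=\; \Pr\bigl[\,W(mh,0)\in\mathcal{M}_2\,\bigr] \;>\; 0,
\end{equation*}
and, by stationarity, $\Pr[W((k+1)mh,kmh)\in\mathcal{M}_2]=p$ for every $k\in\N_0$.

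Next, I would form the block sequence $B_k := W((k+1)mh,kmh)$ and exploit the submultiplicativity \eqref{coeffi_submulti} of $\tau$, which gives
\begin{equation*}
\log \tau\bigl(W(nmh,0)\bigr) \;\le\; \sum_{k=0}^{n-1}\log \tau(B_k).
\end{equation*}
The sequence $\{\log\tau(B_k)\}$ is a stationary functional of the underlying stationary ergodic process $\{W(k)\}$. Since the process is ergodic, the Birkhoff/Kingman ergodic theorem (applied either directly to the block averages, or more safely to the subadditive array $a_{m,n}=\log\tau(W(nmh,mmh))$) yields
\begin{equation*}
\frac{1}{n}\log \tau\bigl(W(nmh,0)\bigr) \;\xrightarrow{a.s.}\; \mu \;\le\; \E\bigl[\log\tau(B_0)\bigr].
\end{equation*}
Then I would show $\mu<0$ using Assumption \ref{assumption_2}.b) together with Proposition \ref{scrambling_bounded}: on the event $\{B_0\in\mathcal{M}_2\}$ (which has probability $p$) we have $\tau(B_0)\le 1-\alpha^{mh}$, while on the complementary event $\tau(B_0)\le 1$; hence $\E[\log\tau(B_0)]\le p\log(1-\alpha^{mh})<0$. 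This delivers an exponentially decaying almost sure upper bound on $\tau(W(nmh,0))$.

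Finally, I would translate this back to the state variable by recalling that $v_k=\lceil x_k\rceil-\lfloor x_k\rfloor$ satisfies $v_{k+j}\le \tau(W(k+j,k))\,v_k$ and is monotonically non-increasing in $k$ (Proposition \ref{inequality_v}). The exponential decay of $\tau(W(nmh,0))$ along the subsequence $k=nmh$ therefore forces exponential decay of $v_k$ along that subsequence, and monotonicity of $v_k$ between consecutive subsequence indices extends the exponential rate, with only a benign constant factor, to all $k$. Convergence of $v_k$ to $0$ exponentially fast a.s.\ is equivalent (as in Theorem \ref{theorem_geral}) to $W(k,0)$ converging a.s.\ at an exponential rate to a rank-one matrix $L=\mathbf{1}\xi^\top$. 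The main obstacle I anticipate is the ergodic-theoretic step: the $m h$-shift need not remain ergodic, so one must be careful to apply the ergodic theorem either to the full shift (using $\tfrac1N\sum_{k=0}^{N-1}$ rather than block sums) or to invoke Kingman's subadditive ergodic theorem on the array $\log\tau(W(k,0))$; verifying subadditivity is immediate from \eqref{coeffi_submulti}, and the strict negativity of the limit $\mu$ still follows from the positive-probability scrambling estimate above.
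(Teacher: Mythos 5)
Your proposal follows essentially the same route as the paper's proof in Appendix A: promote the SIA hypothesis of Assumption \ref{assumption_2} to a scrambling-with-positive-probability statement at a larger block length $T=mh$ (via the argument of Theorem \ref{theorem_stationary}), decompose $W(k,0)$ into blocks of length $T$, use submultiplicativity \eqref{coeffi_submulti} to bound $\log\tau\bigl(W(k,0)\bigr)$ by a sum of block terms, apply an ergodic theorem to get a strictly negative almost-sure limit for the normalized sum, and conclude exponential decay. Two differences are worth recording, both to your credit. First, where the paper bounds $\E\bigl[\log\tau(W(T,0))\bigr]\le\log\E\bigl[\tau(W(T,0))\bigr]$ by Jensen's inequality and then argues $\E\bigl[\tau(W(T,0))\bigr]<1$, you bound the expectation directly by $p\log(1-\alpha^{mh})<0$ using Proposition \ref{scrambling_bounded}; this is equally valid and has the advantage of producing an explicit rate in terms of $p$ and $\alpha$. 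Second, and more substantively, your worry about the ergodic step is well-founded: the paper simply asserts that the non-overlapping block sequence $\{W(kT+T,kT)\}$ is ``also ergodic,'' but ergodicity of the $1$-shift does not in general imply ergodicity of the $T$-shift (for example, a deterministic alternating sequence with uniformly random phase is stationary ergodic, while its $2$-shift splits into two invariant components), so Birkhoff applied to the block process only yields convergence to a conditional expectation on the $T$-shift invariant $\sigma$-field, which need not be constant. Your proposed fixes close this genuine gap: Kingman's subadditive ergodic theorem applied to $X_{s,t}=\log\tau\bigl(W(t,s)\bigr)$ works immediately (subadditivity is exactly \eqref{coeffi_submulti}, and the integrability condition holds since $\log\tau\le 0$, with the limit bounded above by $\E[X_{0,T}]/T<0$), as does Birkhoff for the ergodic $1$-shift applied to overlapping block terms followed by averaging over the $T$ possible offsets. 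Your final translation back to $v_k$ via Proposition \ref{inequality_v} and monotonicity is sound but not strictly necessary, since the paper's notion of exponential convergence of the product is already phrased in terms of $\tau\bigl(W(k,0)\bigr)$.
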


\subsection{Connection to Markov Chains}
In this subsection, we show that Theorems \ref{theorem_stationary}, and \ref{convergence_rate_pro_bounded} are the generalizations of some well known results for Markov chains in \cite{Wolfowitz,Nonnegative_Matrices}. A fundamental result on inhomogeneous Markov chains is as follows. 
\begin{lemma}[{\cite[Th. 4.10]{Nonnegative_Matrices}}, \cite{Wolfowitz}]
	If the product $W(k,t)$, formed from a sequence $\{W(k)\}$, satisfies $W(t+k,t)\in \mathcal{M}_1$ for any $k\ge 1,t\ge 0$, and $W_{ij}(k)\ge \alpha$ whenever $W_{ij}(k)>0$, then $W(k,0)$ converges to a rank-one matrix.
\end{lemma}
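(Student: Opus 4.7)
My plan is to recast this classical statement as a deterministic instance of the finite-step Lyapunov framework developed above, by treating the given sequence $\{W(k)\}$ as a degenerate ``random'' sequence concentrated on one sample path. I would study the linear dynamics $x_{k+1} = W(k+1) x_k$ with the candidate function $V(x_k) = v_k = \lceil x_k \rceil - \lfloor x_k \rfloor$. Because every $W(k)$ is stochastic, Proposition \ref{inequality_v} immediately yields $V(x_{k+1}) \le V(x_k)$ at every step. The crux is then to produce a uniform integer $h \ge 1$ and a constant $c < 1$, both independent of $t$, such that $\tau(W(t+h, t)) \le c$ for every $t \ge 0$; this is the deterministic counterpart of the scrambling condition exploited in Theorem \ref{theorem_geral}.

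With such a pair $(h, c)$ in hand, the conclusion follows routinely. By Proposition \ref{inequality_v} one gets $v_{t+h} \le c\, v_t$ for all $t$, so $v_k \to 0$ geometrically. Applying this estimate with $x_0$ taken successively to be the standard basis vectors $e_1, \dots, e_n$ shows that each column of $W(k, 0)$ collapses to a multiple of $\mathbf 1$; since $W(k, 0)$ is stochastic for every $k$, the limit necessarily takes the form $\mathbf 1 \xi^\top$ with $\xi^\top \mathbf 1 = 1$. Alternatively one may invoke the deterministic specialization of Corollary \ref{Coroll_invariant_expo} on the single admissible sample path.

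The principal obstacle is producing the uniform pair $(h, c)$, and this is where all the real work will live. My strategy would exploit the fact that the zero/nonzero pattern of any $n \times n$ stochastic matrix lies in the finite set $\{0,1\}^{n \times n}$, so the patterns of the factors $W(k)$ take only finitely many values, and by purely combinatorial propagation so do the patterns of the finite products $W(t+h, t)$ as $(t, h)$ vary. The hypothesis forces every such pattern to correspond to an SIA matrix. The classical Wolfowitz--Seneta theory of inhomogeneous Markov chains (see \cite[Ch.~4]{Nonnegative_Matrices}) asserts that for any finite collection of SIA stochastic patterns of size $n$ whose finite compositions remain SIA, there exists an integer $h^\star$ such that every composition of $h^\star$ factors from the collection is scrambling in pattern. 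Taking $h = h^\star$ and invoking the entry-wise lower bound $\alpha$, Proposition \ref{scrambling_bounded} then supplies $\tau(W(t+h, t)) \le 1 - \alpha^h =: c < 1$ uniformly in $t$, and the Lyapunov argument concludes as above.
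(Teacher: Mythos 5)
Your proposal is correct and follows essentially the same route as the paper: the paper also defers the combinatorial core (that all consecutive products of a fixed finite length $h$ must be scrambling, by finiteness of zero patterns and the classical Wolfowitz--Seneta argument) to the literature, and then concludes by treating the scrambling condition as holding with probability $p=1$ in its Theorem~\ref{convergence_rate_pro_bounded}, whose proof is exactly the coefficient-of-ergodicity contraction $\tau(W(t+h,t))\le 1-\alpha^h$ via Proposition~\ref{scrambling_bounded} and the finite-step Lyapunov function $V(x_k)=v_k$ that you use directly. Your deterministic specialization (and the column-wise application to basis vectors to pass from agreement to rank-one convergence) is the same argument stripped of the probabilistic wrapper.
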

Let $h$ be the number of distinct types of scrambling matrices of order $n$. It is known that the product $W(t+h,t)$ is scrambling for any $t$. In this case, we may take the probability of each product $W(t+h,t)$ being scrambling as $p=1$, and as an immediate consequence of Theorem \ref{convergence_rate_pro_bounded}, we know that $W(k,0)$ converges to a rank-one matrix at a exponential rate that is {no slower than} $(1-\alpha^h)^{{1}/{h}}$. This convergence rate is consistent with what is estimated in {\cite[Th. 4.10]{Nonnegative_Matrices}}. This also applies to the homogeneous case where $W(k)=W_1$ for any $k$ with $W_1$ being scrambling. Moreover, it is known that the condition can be relaxed by just requiring $W_1$ to be SIA to ensure the convergence, which is an immediate consequence of Theorem \ref{theorem_stationary}.

In next section, we discuss how the results can be further applied to the context of asynchronous computations.

\section{Asynchronous Agreement over Possibly Periodic Networks}
In this section, we take each component $x^j$ in $x$ from \eqref{system_dyn} as the state of agent $i$ in an $n$-agent system. Define the distributed coordination algorithm 
\begin{equation}\label{eq:syst_in}
{x^i}\left( t_{k + 1} \right) = \sum\limits_{j = 1}^n {{w_{ij}}x^j\left(t_k \right)}, k \in \N_{ 0},i \in \mathbf{N},
\end{equation}
where the averaging weights $w_{ij} \ge 0$, $\sum_{j=1}^{n}w_{ij}=1$, and $t_k$ denote the time instants when updating actions happen. Here we assume the initial state $x(t_0)$ is given. It is always assumed that $T_1\le t_{k+1}-t_k\le T_2$, where $t_0=0$ and $T_1,T_2$ are positive numbers.  We say the states of system \eqref{eq:syst_in} reach \textit{agreement} if $\lim_{k \to \infty}x(t_k)=\mathbf{1} \zeta $, mentioned in Section \ref{sec_pro}.
Let $W=[w_{ij}] \in \mathbb{R}^{n\times n}$, and obviously $W$ is a stochastic matrix. The algorithm \eqref{eq:syst_in} can be rewritten as $x(t_{k+1})=Wx(t_k)$. In fact, the matrix $W$ can be associated with a directed, weighted graph $\mathcal{G}_W=\left( \mathcal{V,E} \right)$, where $\mathcal{V}:=\{1,2,\cdots,n\}$ is the vertex set and $\mathcal{E}$ is the edge set for which $(i,j) \in \mathcal{E}$ if $w_{ji}>0$. The graph $\mathcal{G}_W$ is called a \textit{rooted} one if there exists at least one vertex, called a \textit{root}, from which any other vertex can be reached. It is known that agents are able to reach agreement for all $x(0)$ if  $W$ is SIA (\cite{Wolfowitz,Nonnegative_Matrices}). {However, the situations when  $W$ is not SIA have not been studied before, although they appear often in real systems, such as social networks. As we are interested in studying the agreement problem when $W$ is possibly periodic, let us define periodic stochastic matrices.} 
\begin{definition} \label{periodic}
	A stochastic matrix $A \in  \mathbb{R}^{n \times n}$ is said to be \emph{periodic with period $d>1$} if $d$ is the common divisor of all the $t$ such that $A^{m+t}\sim A^{m}$ for a sufficiently large integer $m$. 
\end{definition}

Definition \ref{periodic} is a generalization of the definition of an irreducible periodic matrix {\cite[Def. 1.6]{Nonnegative_Matrices}}. In this definition, a periodic stochastic matrix is not necessarily irreducible. {With a slight abuse of terminology, we say the graph $	\mathcal{G}_W$ is \textit{periodic} if the associated matrix $W$ is periodic.}

In the context of distributed computation, it is always assumed that each individual computational unit in the network has access to its own latest state while implementing the iterative update rules \cite{optimization_1,Agreeing_asynchronously}. A class of situations that have received considerably less attention in the literature arise when some individuals are not able to obtain their own state, a case which can result from memory loss. Similar phenomena have also been observed in social networks while studying the evolution of opinions. Self-contemptuous people change their opinions solely in response to the opinions of others. The existence of computational units or individuals who are not able to access their own states sometimes might result in the computational failure or opinions' disagreement. 
As such an example, a periodic matrix $W$, which must has all zero diagonal entries (no access to their own states for all individuals), always leads the system \eqref{eq:syst_in} to oscillation. {This is because for a periodic $W$, $W^k$ never converges to a matrix with identical rows as $k\to\infty$. Instead, the positions of $W^{k}$ that have positive values are periodically changing with $k$, resulting in a periodically changing value of $W^kx(0)$}. This motivates us to investigate the particular case where $W$ is possibly periodic.

In this section, we show that agreement can be reached even when $W$ is periodic, \textit{just by introducing asynchronous updating events to the coupled agents}. In fact, perfect synchrony is hard to realize in practice as it is difficult for all agents to have access to a common clock according to which they coordinate their updating actions, while asynchrony is more likely. Researchers have studied how agreement can be preserved with the existence of asynchrony, see e.g., \cite{Chen_yao,xia_scrambling}. Unlike these works, we approach the same problem from a different aspect, where agreement occurs just because of asynchrony. A counterpart of this problem where $W$ is irreducible and periodic has been covered in our earlier work \cite{Qin2017}. We consider a more general case in this section where $W$ can be reducible.

To proceed, we define a framework of randomly asynchronous updating events. It is usually legitimate to postulate that on occasions more than one, but not all, agents may update.  Assume that each agent is equipped with a clock, which need not be synchronized with other clocks. The state of each agent remains unchanged except when an activation event is triggered by its own clock. Denote the set of event times of the $i$th agent by $\mathcal{T}^i=\{0,t^i_1,\cdots,t^i_k,\cdots\},k \in\N$. At the event times, agent $i$ updates its state obeying the asynchronous updating rule 
\begin{equation}\label{eq:indi_update}
{x_i}\left( {t_{k + 1}^i} \right) = \sum\limits_{j = 1}^n {{w_{ij}}{x_j}\left( {t_k^i} \right)} ,
\end{equation}
where $i\in\mathbf{N}$. We assume that the clocks which determine the updating events for the agents are driven by an underlying random process. The following assumption is important for the analysis.

\begin{assumption}\label{asynchronous_updating}
	 For any agent $i$, the intervals between two event times, denoted by $h^i_k=t^i_k-t^i_{k-1}$, are such that
	\begin{enumerate}
		\item[(i)] $h^i_k$ are upper bounded with probability 1 for all $k$ and all $i$;
		\item[(ii)] $\{h^i_k:k\in\N_0\}$ is a random sequence, with $\{h^1_k\}$, $\{h^2_k\}$, $\dots$, $\{h^n_k\}$ being \textit{mutually independent}. 
	\end{enumerate}
\end{assumption}

Assumption \ref{asynchronous_updating} ensures that an agent can be activated again within finite time after it is activated at $t^i_{k-1}$ for all $k \in \N$, which implies that all agents will update their states for infinitely many times in the long run.	
In fact, Assumption \ref{asynchronous_updating} can be satisfied if the agents are activated by mutually independent Poisson clocks or at rates determined by mutually independent Bernoulli processes {(\cite[Ch. 6]{Cassandras}, {\cite[Ch. 2]{Gallager}})}.

Let $\mathcal{T}=\{t_0,t_1,t_2,\cdots,t_k,\cdots\}$ denote all event times of all the $n$ agents, in which the event times have been relabeled in a way such that $t_0=0$ and $t_{\tau}<t_{\tau+1},\tau=\{0,1,2,\cdots\}$. This idea has been used in \cite{Bertsekas} and \cite{Agreeing_asynchronously} to study asynchronous iterative algorithms. One situation may occur in which there exist some $k$ such that $t_k \in \mathcal{T}^i$ and $t_k \in \mathcal{T}^j$ for some $i,j$, which implies more than one agent is activated at some event times. Although this is not likely to happen when the underlying process is some special random ones like Poisson, our analysis and results will not be affected. For simplicity, we rewrite the set of event times as $\mathcal{T}=\{0,1,2,\cdots,k,\cdots\}$. Then the system with asynchronous updating can be treated as one with discrete-time dynamics in which the agents are permitted to update only at certain event times $k,k\in \N$, according to the updating rule \eqref{eq:indi_update}  at each time $k$. Since each $k \in \mathcal{T}$ can be the event time of any subset of agents, we can associate any set of event times $\{k+1,k+2,\dots,k+h\}$ with the updating sequence of agents $\{\lambda(k+1),\lambda(k+2),\dots,\lambda(k+h)\}$ with $\lambda(i)\in\mathcal{V}$. Under Assumption \ref{asynchronous_updating}, one knows that this updating sequence can be \textit{arbitrarily ordered}, and each possible sequence can occur with positive probability, though the particular value is not of concern.

Assume at time $k$, $m\ge 1$ agents are activated, labeled by $k_1, k_2, \dots,$ $k_m$,  then we define the following matrices 
\begin{equation} \label{eq:Up_Ma}
{W(k)} = {\left[ {{u_1}, \cdots ,w_{{k_1}}^\top,{u_{k + 1}}, \cdots ,w_{{k_m}}^\top, \cdots ,{u_n}} \right]^\top}, 
\end{equation}
where $u_i \in \mathbb{R}^n $ is the $i$th column of the identity matrix $I_n$ and $w_k \in \mathbb{R}^{ n} $ denotes the $k$th row of $W$. We call $W(k)$ the \textit{asynchronous updating matrix} at time $k$. Then the asynchronous updating rule \eqref{eq:indi_update} becomes
\begin{equation} \label{eq:asy}
x_k=W(k)x_{k-1}, k\in\N,
\end{equation}
where $\{W(k)\}$ is a random sequence of asynchronous updating matrices which are stochastic, and $x_0\in\mathbb{R}^n$ is a given initial state. We say the \textit{asynchronous agreement} is   reached if $x_k$ converges to a scaled all-one vector when the agents update asynchronously.
It suffices to study the convergence of the product $W(k)\dots W(2)W(1)$ to a rank-one matrix. We now show the asynchronous agreement is reached almost surely even when the graph is periodic. A necessary and sufficient condition for the graph is obtained, under which the agreement can always be reached. 
%
%

\begin{theorem}\label{thm_indecom}
	If the agents coupled by a network update asynchronously under Assumption \ref{asynchronous_updating}, they reach agreement \textit{almost surely} if and only if the network is rooted, i.e., the matrix $W$ is indecomposable.
\end{theorem}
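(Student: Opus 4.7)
The plan is to treat the two implications separately: for sufficiency I apply Corollary \ref{scrambling_cora} to the random sequence $\{W(k)\}$ of asynchronous updating matrices after exhibiting a single deterministic activation pattern that renders the product Markov, and for necessity I use a closed-class argument.

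\textit{Sufficiency.} Assume $\mathcal{G}_W$ is rooted with some root $r$, and fix a BFS tree of $\mathcal{G}_W$ from $r$, ordering its vertices $v_1 = r, v_2, \ldots, v_n$ so that each $v_k$ with $k \ge 2$ has its tree parent $p(v_k) = v_j$ for some $j < k$, and in particular $w_{v_k,\,p(v_k)} > 0$. Consider the deterministic activation pattern of length $h = n-1$ in which only $v_{s+1}$ activates at step $s \in \{1,\ldots,n-1\}$ while $r$ is not activated in the window. By induction on $s$, I would show that $P_s := W(s)\cdots W(1)$ satisfies $(P_s)_{v_j,r} > 0$ for all $j \le s+1$: when $v_{s+1}$ activates, $(P_s)_{v_{s+1},r} \ge w_{v_{s+1},\,p(v_{s+1})} (P_{s-1})_{p(v_{s+1}),r} > 0$ by the inductive hypothesis, while rows of previously-activated vertices stay frozen since they are not reactivated. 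Thus $P_{n-1}$ has an all-positive column indexed by $r$, so $P_{n-1} \in \mathcal{M}_3$. By Assumption \ref{asynchronous_updating}, this specific single-agent-per-step pattern has positive probability uniformly bounded below in every window of length $h$, furnishing both the probability hypothesis and the divergent-series condition in Corollary \ref{scrambling_cora}; Assumption \ref{assumption_1}.b) holds since every positive entry of any $W(k)$ is either $1$ or a positive entry of the fixed matrix $W$. Corollary \ref{scrambling_cora} then gives $W(k,0) \to \mathbf{1}\xi^\top$ almost surely, whence $x_k \to \mathbf{1}(\xi^\top x_0)$, i.e., asynchronous agreement.

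\textit{Necessity.} If $\mathcal{G}_W$ is not rooted, its condensation DAG has at least two source strongly connected components $S_1, S_2$, and no agent in $S_\ell$ has an in-neighbor outside $S_\ell$. Initializing $x_0^i = 0$ on $S_1$ and $x_0^i = 1$ on $S_2$, the dynamics \eqref{eq:asy} keep $x_k^i = 0$ on $S_1$ and $x_k^i = 1$ on $S_2$ for all $k$ along every sample path, which precludes agreement.

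\textit{Main obstacle.} The delicate step is the inductive Markov construction in the presence of possibly zero diagonals of $W$: once $(P_{s-1})_{v,r}$ has turned positive, reactivating $v$ prematurely could destroy this positivity unless one of $v$'s in-neighbors already has a positive column-$r$ entry. The BFS ordering together with single-activation-per-step is engineered to propagate positivity outward from $r$ without any such interference; although only one explicit schedule is realized, Assumption \ref{asynchronous_updating} guarantees it has positive probability in every window of length $h$, which is exactly what Corollary \ref{scrambling_cora} needs.
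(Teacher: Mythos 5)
Your proof is correct, and its skeleton matches the paper's: exhibit a deterministic activation schedule of bounded length whose product of asynchronous updating matrices is Markov, argue this schedule occurs with positive probability in every window under Assumption \ref{asynchronous_updating}, invoke Corollary \ref{scrambling_cora}, and prove necessity by a closed-class argument with two different constant initial values. Where you differ is in the key construction and how its Markov property is verified. The paper's Proposition \ref{Markov_hierarchical} uses a \emph{hierarchical} sequence of length $n$ in which the root activates \emph{first} and then each layer $\mathcal{H}_1,\dots,\mathcal{H}_{L-1}$ of a spanning tree activates in order of increasing distance; the Markov property is proven with the set-valued neighbor operator $\mathcal{N}(\cdot,\cdot)$, and the resulting all-positive column(s) are those indexed by $\mathcal{N}(W,\kappa)$. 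You instead use a length-$(n-1)$, one-agent-per-step BFS schedule in which the root is \emph{never} activated, so its row stays equal to the identity row, and the all-positive column is column $r$ itself; the Markov property follows from an elementary entry-wise induction, with the single-activation property guaranteeing that established positivity is never destroyed by reactivation (the exact hazard posed by zero diagonals). Your version is shorter and more self-contained; the paper's buys a reusable named concept (hierarchical sequences, Proposition \ref{Markov_hierarchical}) that it refers back to in later remarks. Two further small points in your favor: you explicitly verify the uniform lower bound on positive entries needed by Theorem \ref{theorem_geral} (which Corollary \ref{scrambling_cora} implicitly inherits and the paper's proof of Theorem \ref{thm_indecom} does not mention), and your necessity argument via two source components of the condensation, each closed under in-neighbors, is stated more precisely than the paper's claim that the two vertex sets are ``isolated from each other.''
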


To prove this theorem, we need to introduce some additional concepts and results. It is equivalent to say the associated graph $\mathcal{G}_W$ is rooted if $W$ is indecomposable. Denote the set of all the roots of $\mathcal{G}_W$ by $\mathbf{r}\subseteq\mathcal{V}$. We can partition the vertices of $\mathcal{G}_W$ into some hierarchical subsets as follows. 
For any $\kappa\in \mathbf{r}$, there must exist at least one directed spanning tree rooted at $\kappa$, see e.g., Fig. \ref{spanning_tree} (a). We select any of these directed spanning trees, denoted by $\mathcal{G}^s_W$. There exists a directed path from $\kappa$ to any other vertex $i \in \mathcal{V}\backslash \kappa$, see e.g., Fig. \ref{spanning_tree} (b). Let $l_i$ be the length of the directed path from $\kappa$ to $i$, and  there exists an integer $L\le n$ such that $l_i < L$ for all $i$. Define
\begin{align*}{\mathcal{H}_r}: = \left\{ {i:{l_i} = r} \right\},r = 1, \cdots ,L-1,\end{align*}
and $\mathcal{H}_0=\{\kappa\}$. From this definition, one can partition the vertices of $\mathcal{G}^s_W$ into $L$ hierarchical subsets, i.e., $\mathcal{H}_0,\mathcal{H}_1,\cdots,\mathcal{H}_{L-1}$, according to the vertices' distances to the root $\kappa$.  Let $n_r$ be the number of vertices in the subset $\mathcal{H}_r$, $0 \le r \le L-1$ (see the example in Fig. \ref{spanning_tree} (b)). Note that given a spanning tree, its corresponding hierarchical subsets $H_r$'s are uniquely determined.

\begin{figure}[!t]
	\centering
	
	\subfigure[The original graph.]{	
		\begin{tikzpicture} [->,>=stealth',shorten >=1pt,auto,node distance=1.2cm,
		main node/.style={circle,fill=blue!10,draw,minimum size=0.4cm,inner sep=0pt]}]
		
		\node[main node]          (3)     at (0,0)                     {$3$};
		\node[main node]          (2) at (-1,-1)          {$2$};
		\node[main node]          (6) at (1,-1)         {$6$};
		\node[main node]          (4) at (0.2,-1.8)          {$4$};
		\node[main node]          (5) at (1,-2.6)           {$5$};
		\node[main node]          (1) at (0,-3.5)          {$1$};
		
		\tikzset{direct/.style={->,line width=1pt}}
		\path (3)     edge [->,in=75,out=190,dashed,line width=1pt,red]    node   {} (2) 
		(2)     edge [->,in=250,out=355,line width=1pt,black]    node   {} (3) 
		(3)     edge [direct,dashed,line width=1pt,red]    node   {} (6) 
		(6)     edge [->,in=65,out=190,dashed,line width=1pt,red]    node   {} (4) 
		(4)     edge [->,in=155,out=280,,dashed,line width=1pt,red]    node   {} (5) 
		(5)     edge [->,in=330,out=45,line width=1pt]    node   {} (6) 
		(5)     edge [direct,line width=1pt]    node   {} (1) 
		(2)     edge [direct,dashed,line width=1pt,red]    node   {} (1) ;
		\end{tikzpicture}}$\;\;\;\;\;$
	\subfigure[Partition of the vertices.]{	\label{Fig:spaning}
		\begin{tikzpicture} [->,>=stealth',shorten >=1pt,auto,node distance=1.2cm,
		main node/.style={circle,fill=blue!10,draw,minimum size=0.4cm,inner sep=0pt]}]
		
		\filldraw[dashed,fill=orange!5] (-1.5,-1.3)   rectangle (1.5cm,-0.7cm); 
		\filldraw[dashed,fill=orange!5] (-1.5,-2.5)   rectangle (1.5cm,-1.8cm);
		\node[main node]          (3)  at (0,0)                      {$3$};
		\node[main node]          (2) at (-1,-1)           {$2$};
		\node[main node]          (6) at (1,-1)           {$6$};
		\node[main node]          (4) at (1,-2.2)           {$4$};
		\node[main node]          (5) at (1,-3.5)          {$5$};
		\node[main node]          (1) at (-1,-2.2)          {$1$};
		
		\node[] ()  at (2.0,0) {$\mathcal{H}_0$};
		\node[] ()  at (2.0,-1) {$\mathcal{H}_1$};
		\node[] (7)  at (2.0,-2.2) {$\mathcal{H}_2$};
		\node[] (7)  at (2.0,-3.5) {$\mathcal{H}_3$};
		
		\tikzset{direct/.style={->,line width=1pt}}
		\path (3)     edge [direct,dashed,line width=1pt,red]    node   {} (2) 
		(3)     edge [direct,dashed,line width=1pt,red]    node   {} (6) 
		(6)     edge [direct,dashed,line width=1pt,red]    node   {} (4) 
		(4)     edge [direct,dashed,line width=1pt,red]    node   {} (5)
		(2)     edge [direct,dashed,line width=1pt,red]    node   {} (1) ;
		
		\end{tikzpicture}}
	
	\caption{An illustration of the graph partition; the hierarchical subsets: $\mathcal{H}_0=\{3\}$,$\mathcal{H}_1=\{2,6\}$,$\mathcal{H}_2=\{1,4\}$,$\mathcal{H}_3=\{5\}$; for example, \{3,2,6,1,4,5\} is a hierarchical updating vertex sequence. } 
	\label{spanning_tree}
\end{figure}
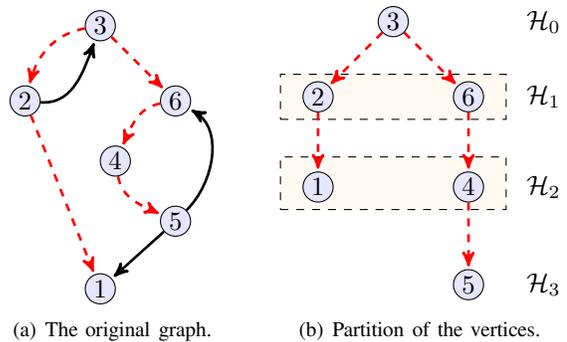

\begin{definition}\label{hierarchical_sequence}
	An {updating vertex sequence} of length $n$ is said to be hierarchical if it can be partitioned into some successive subsequences, denoted by $\{\mathcal{A}_0,\dots,\mathcal{A}_{L-1}\}$ with $\mathcal{A}_r=\{\lambda_r(1),\lambda_r(2),\cdots,\lambda_r(n_r)\}$, such that $\bigcup\nolimits_{k = 1}^{{n_r}} {{\lambda _r}\left( k \right) = {\mathcal{H}_r}}$ for all $r=0,\cdots,L-1$, where $\mathcal{H}_r$'s are the hierarchical subsets of some spanning tree $\mathcal{G}^s_W$ in $\mathcal{G}_W$ .
\end{definition}

\begin{proposition}\label{Markov_hierarchical}
	If agents coupled by $\mathcal{G}_W$ update in a hierarchical sequence $\{a_1,\cdots,a_n\},a_i \in \mathcal{V}$ for all $i$, the product of the corresponding asynchronous updating matrices, $\tilde{W}:={W_{{a_n}}} \cdots {W_{{a_2}}}{W_{{a_1}}}$,	is a Markov matrix.
\end{proposition}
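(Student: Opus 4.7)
The plan is to exhibit a specific column of $\tilde{W}$ whose entries are all strictly positive, which is exactly the Markov condition. Since the spanning tree $\mathcal{G}_W^s$ is rooted at some $\kappa$, the row of $W$ indexed by $\kappa$ is a probability vector, so there exists (at least one) $j^*$ with $w_{\kappa j^*}>0$. I will show that the $j^*$-th column of $\tilde{W}$ is entrywise positive, using the hierarchical ordering to trace the influence of the initial state $x^{j^*}(0)$ through the spanning tree.

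The central observation is that each $W_{a_k}$ acts as identity on every row except row $a_k$, which is replaced by $w_{a_k}^\top$. Consequently, if vertex $v$ is updated at time $t$ (and since the sequence is hierarchical, exactly once), then after time $t$ its coordinate stays frozen for the rest of the product, so $(\tilde W)_{v,\cdot}$ equals its value immediately after $v$'s update. All entries of every partial product are non-negative, so in what follows a positive coefficient on $x^{j^*}(0)$, once created, can never be destroyed by subsequent updates.

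The proof proceeds by induction on the hierarchy level $r\in\{0,1,\dots,L-1\}$ of the vertex under consideration. For the base case, $\mathcal{H}_0=\{\kappa\}$ is updated first, and the $\kappa$-th row of $\tilde W$ after this first step equals $w_\kappa^\top$, whose $j^*$-th entry is $w_{\kappa j^*}>0$. For the inductive step, take any $v\in\mathcal{H}_r$ with parent $p\in\mathcal{H}_{r-1}$ along the spanning tree (so $w_{vp}>0$). Because the ordering is hierarchical, $p$ updates strictly before $v$, and by the inductive hypothesis the coordinate $x^p$ at the moment just before $v$'s update already has strictly positive coefficient, say $c_p>0$, on $x^{j^*}(0)$. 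When $v$ updates, its new value becomes $\sum_j w_{vj} x^j$, and the contribution from $j=p$ alone injects a coefficient of at least $w_{vp}\,c_p>0$ on $x^{j^*}(0)$; non-negativity of all other terms ensures this is not cancelled. Iterating this argument along the tree path $\kappa=p_0\to p_1\to\cdots\to p_r=v$ yields $(\tilde W)_{v,j^*}\ge w_{vp_{r-1}}\,w_{p_{r-1}p_{r-2}}\cdots w_{p_1\kappa}\,w_{\kappa j^*}>0$.

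Since every vertex of $\mathcal{G}_W$ belongs to some $\mathcal{H}_r$, the inequality holds for every row index, so column $j^*$ of $\tilde W$ is entrywise positive and $\tilde{W}\in\mathcal{M}_3$. The main subtlety to be careful about is the interaction between different siblings in the same layer $\mathcal{H}_r$: the induction must use that all of $\mathcal{H}_{r-1}$ is processed before any of $\mathcal{H}_r$, and \emph{not} the internal order within $\mathcal{H}_r$, because siblings might update in any order. Fortunately, Definition~\ref{hierarchical_sequence} gives precisely the between-layer ordering that the argument needs, and the within-layer order is immaterial. A minor second point worth spelling out in the write-up is that $\{j:w_{\kappa j}>0\}$ is non-empty because $W$ is row-stochastic, so the choice of $j^*$ is always possible.
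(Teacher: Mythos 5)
Your proof is correct and is essentially the paper's argument run in the forward direction: the paper fixes a row $\lambda_r(t)$ of $\tilde W$ and recurses backward through the product using the identity $\mathcal{N}(A_2A_1,\mathcal{S})=\mathcal{N}(A_1,\mathcal{N}(A_2,\mathcal{S}))$ to conclude $\mathcal{N}(W,\kappa)\subseteq\mathcal{N}(\tilde W,\lambda_r(t))$ for every vertex, whereas you fix a single column $j^*\in\mathcal{N}(W,\kappa)$ and induct forward over the hierarchy levels, but both arguments hinge on exactly the same two facts --- each vertex's row of the partial product is frozen after its unique update, and every $v\in\mathcal{H}_r$ has a tree parent $p\in\mathcal{H}_{r-1}$ with $w_{vp}>0$. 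The only (immaterial) difference is that the paper's set formulation yields positivity of all columns indexed by $\mathcal{N}(W,\kappa)$, while you exhibit one such column with an explicit lower bound, which is all that membership in $\mathcal{M}_3$ requires.
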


To prove this proposition, we define an operator $\mathcal{N}(\cdot ,\cdot)$ for any stochastic matrix and any subset $\mathcal{S}\in\mathcal{V}$
\[
\mathcal{N}(A,\mathcal{S}):=\{j:A_{ij}>0,i\in \mathcal{S}\},
\]
and we write $\mathcal{N}(A,\{i\})$ as $\mathcal{N}(A,i)$ for brevity. It is easy to check then for any two stochastic matrices $A_1,A_2 \in \mathbb{R}^{n\times n}$ and for any subset  $\mathcal{S} \in \mathcal{V}$, it holds that
\begin{equation}\label{property_neighbor}
	{\cal N}\left( {{A_2}{A_1},{\cal S}} \right) = {\cal N}\left( {{A_1},{\cal N}\left( {{A_2},{\cal S}} \right)} \right).
\end{equation}

\begin{IEEEproof}[Proof of Proposition \ref{Markov_hierarchical}] 
	It suffices to show that all $i\in\mathcal{V}$  share at least one common neighbor in the graph $\mathcal{G}_{\tilde{W}}$, i.e.,
	\begin{equation}\label{intersection_set}
		\bigcap\nolimits_{i = 1}^n {{\cal N}\left( {\tilde W,i} \right)}  \ne \emptyset .\end{equation}
	We rewrite the product of asynchronous updating matrices into
	\[\tilde W = \left\{ {{W_{{\lambda _{L - 1}}\left( 1 \right)}} \cdots {W_{{\lambda _{L - 1}}\left( {{n_{L - 1}}} \right)}} \cdots {W_{{\lambda _{L - 2}}\left( 1 \right)}} \cdots {W_{{\lambda _0}\left( 1 \right)}}} \right\}.\]
	For any distinct $i,j\in\mathcal{V}$, we know that $\mathcal{N}(W_j,i)=\{i\}$ from the definition of asynchronous updating matrices. 
	Then for any $\lambda_r(t)\in \mathcal{H}_r, t\in\{1,\cdots,n_r\}, r\in\{1,\cdots,L-1\}$, it holds that
	\[\begin{array}{*{20}{l}}
	{{\cal N}\left( {\tilde W,{\lambda _r}(t)} \right)}\\
	{ = {\cal N}\left( {{W_{{\lambda _r}\left( t \right)}}{W_{{\lambda _r}\left( {t + 1} \right)}} \cdots {W_{{\lambda _r}\left( {{n_r}} \right)}} \cdots {W_{{\lambda _0}\left( 1 \right)}},{\lambda _r}\left( t \right)} \right) }\\
	={{\cal N}\left( {{W_{{\lambda _r}\left( {t + 1} \right)}} \cdots {W_{{\lambda _r}\left( {{n_r}} \right)}} \cdots {W_{{\lambda _0}\left( 1 \right)}},{\cal N}\left( {{W_{{\lambda _r}\left( t \right)}},{\lambda _r}\left( t \right)} \right)} \right),}
	\end{array}\]
	where the property \eqref{property_neighbor} has been used.	From Definition \ref{hierarchical_sequence}, one knows that there exists at least one vertex ${\lambda _{r-1}}\left( t_1 \right)\in \mathcal{H}_{r-1}$ that can reach ${\lambda _r}\left( t \right)$  in  $\mathcal{G}_W$ and subsequently in $\mathcal{G}_{W_{{\lambda _r}\left( t \right)}}$, which implies 
	\[{\lambda _{r - 1}}\left( {{t_1}} \right) \in {\cal N}\left( {{W_{{\lambda _r}\left( t \right)}},{\lambda _r}\left( t \right)} \right).\]
	It then follows 
	\[\begin{array}{l}
	\mathcal{N}\left( {{W_{{\lambda _r}\left( {t + 1} \right)}} \cdots {W_{{\lambda _r}\left( {{n_r}} \right)}} \cdots {W_{{\lambda _0}\left( 1 \right)}},{\lambda _{r - 1}}\left( {{t_1}} \right)} \right)\\
	\subseteq \mathcal{N}\left( {\tilde W,{\lambda _r}(t)} \right).
	\end{array}\]
	Similarly, there hold that 
	\[\begin{array}{l}
	{\cal N}\left( {{W_{{\lambda _r}\left( {t + 1} \right)}} \cdots {W_{{\lambda _r}\left( {{n_r}} \right)}} \cdots {W_{{\lambda _0}\left( 1 \right)}},{\lambda _{r - 1}}\left( {{t_1}} \right)} \right)\\
	= {\cal N}\left( {{W_{{\lambda _{r - 1}}\left( {{t_1}} \right)}} \cdots {W_{{\lambda _r}\left( {{n_r}} \right)}} \cdots {W_{{\lambda _0}\left( 1 \right)}},{\lambda _{r - 1}}\left( {{t_1}} \right)} \right)\\
	= {\cal N}\left( {{W_{{\lambda _{r - 1}}\left( {{t_1} + 1} \right)}} \cdots {W_{{\lambda _0}\left( 1 \right)}},{\cal N}\left( {{W_{{\lambda _{r - 1}}\left( {{t_1}} \right)}},{\lambda _{r - 1}}\left( {{t_1}} \right)} \right)} \right)\\
	\supseteq {\cal N}\left( {{W_{{\lambda _{r - 1}}\left( {{t_1} + 1} \right)}} \cdots {W_{{\lambda _0}\left( 1 \right)}},{\lambda _{r - 2}}\left( {{t_2}} \right)} \right).
	\end{array}\]
	As a recursion, it must be true that 
	\begin{equation}\label{subset_1}
		{\cal N}\left( {{W_{{\lambda _0}\left( 1 \right)}},\kappa} \right) \subseteq {\cal N}\left( {\tilde W,{\lambda _r}(t)} \right),	
	\end{equation}
	where $\kappa$ is a root of $\mathcal{G}_W^s$. In fact, it holds that $\lambda_0(1)=\kappa$, and then we know
	\begin{equation}\label{subset_f}
		{\cal N}\left( {{W_{{\lambda _0}\left( 1 \right)}},\kappa} \right) = {\cal N}\left( {{W_\kappa},\kappa} \right) = {\cal N}\left( {W,\kappa} \right).
	\end{equation}
	Substituting \eqref{subset_f} into \eqref{subset_1} leads to
	\[{\mathcal N}\left( {W,\kappa} \right) \subseteq {\mathcal N}\left( {\tilde W,{\lambda _r}(t)} \right)\]
	for all ${\lambda _r}(t)$. Since $\bigcup\nolimits_{r,t} {\left\{ {{\lambda _r}\left( t \right)} \right\}}  = \mathcal{V}$, we know
	\[{\cal N}\left( {W,\kappa} \right) \subseteq \bigcap\nolimits_{r,t} {{\cal N}\left( {\tilde W,{\lambda _r}(t)} \right)}. \]
	Straightforwardly, \eqref{intersection_set} follows, which completes the proof.
\end{IEEEproof}

Since the hierarchical sequences will appear with positive probability in any sequence of length $n$, one can easily prove the following proposition by letting $l=n$.
\begin{proposition}\label{periodic_pro}
	There exist an integer $l$ such that the product $W(k+l)\cdots W(k+1)$, where $W(k)$ is given in \eqref{eq:asy}, is a Markov matrix with positive probability for any $k\ge0$. 
\end{proposition}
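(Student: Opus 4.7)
The plan is to take $l=n$ and reduce the statement to Proposition \ref{Markov_hierarchical}. Since the agreement setting of Section IV presumes the graph $\mathcal{G}_W$ to be rooted (equivalently, $W$ indecomposable), I would first fix a root $\kappa$, a spanning tree $\mathcal{G}_W^s$ of $\mathcal{G}_W$ rooted at $\kappa$, and the induced hierarchical partition $\mathcal{H}_0,\mathcal{H}_1,\ldots,\mathcal{H}_{L-1}$ of $\mathcal{V}$. I would then pick one concrete hierarchical updating vertex sequence $(a_1,a_2,\ldots,a_n)$ as in Definition \ref{hierarchical_sequence}, listing the vertices of $\mathcal{H}_0$ first, then those of $\mathcal{H}_1$, and so on down to $\mathcal{H}_{L-1}$.

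Next, I would argue that for any $k \ge 0$, conditional on $\mathcal{F}_k$, the event
\begin{equation*}
E_k := \{\text{at the event times } k+1,\ldots,k+n,\ \text{exactly agents } a_1,\ldots,a_n \text{ fire, one per event time, in that order}\}
\end{equation*}
has strictly positive probability. This is precisely the informal observation already recorded in the paragraph preceding \eqref{eq:Up_Ma}: under Assumption \ref{asynchronous_updating} the per-agent interval processes $\{h^i_j\}_j$ are mutually independent across $i$ and almost-surely upper-bounded, so every specific ordering of activations inside a bounded window occurs with positive probability.

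On $E_k$, the product $W(k+n)\cdots W(k+1)$ equals $W_{a_n}\cdots W_{a_1}$, and Proposition \ref{Markov_hierarchical} declares this matrix to be Markov. Hence
\begin{equation*}
\Pr\bigl[\,W(k+n)\cdots W(k+1)\in\mathcal{M}_3 \,\big|\, \mathcal{F}_k\,\bigr] \;\ge\; \Pr[E_k \mid \mathcal{F}_k] \;>\; 0
\end{equation*}
for every $k\ge 0$, which is the conclusion with $l=n$.

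The main obstacle is turning the informal ``every ordering has positive probability'' remark into a clean quantitative lower bound using only what Assumption \ref{asynchronous_updating} actually gives. One concrete way is to choose disjoint nonempty sub-intervals $(\sigma_i,\sigma_{i+1})$, $i=1,\ldots,n$, inside the bounded horizon after time $k$, and to prescribe that agent $a_i$'s next firing falls in $(\sigma_i,\sigma_{i+1})$ while the remaining agents' next firings fall outside $(\sigma_1,\sigma_{n+1})$; mutual independence across agents turns this joint event into a product, and the upper-bounded-interval hypothesis forces each factor to be strictly positive for a suitable choice of the $\sigma_i$'s. Once this positivity is granted, the remainder of the argument is purely combinatorial and is handled in its entirety by Proposition \ref{Markov_hierarchical}.
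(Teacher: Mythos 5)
Your proposal is correct and takes essentially the same route as the paper: the paper's own proof simply sets $l=n$, invokes the observation recorded just before \eqref{eq:Up_Ma} that under Assumption \ref{asynchronous_updating} any updating order---in particular a hierarchical one---occurs with positive probability within a window of $n$ event times, and then applies Proposition \ref{Markov_hierarchical}. Your extra sketch making the positive-probability claim quantitative (disjoint sub-intervals plus mutual independence) only fleshes out what the paper leaves informal, so the two arguments coincide in substance.
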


\begin{IEEEproof}[Proof of  Theorem \ref{thm_indecom}] We prove the necessity by contradiction. Suppose the matrix $W$ is decomposable. Then there are at least two sets of vertices that are isolated from each other. Then agreement will never happen between these two isolated groups if they have different initial states.
Let $l=n$, in view of Corollary \ref{scrambling_cora}, the sufficiency follows directly from Proposition \ref{periodic_pro}, which completes the proof.
\end{IEEEproof} 

\begin{remark}	
Note that the hierarchical sequence is a particular type of updating orders that results in a Markov matrix as the product of the corresponding updating matrices. We have identified another type of updating orders in our earlier work when $W$ is irreducible and periodic \cite{Qin2017}.  It is of great interest for future work to look for other updating mechanisms to enable the appearance of Markov matrices or scrambling matrix to guarantee asynchronous agreement.
\end{remark}
In the next section, we look into another application in solving linear algebraic equations.

\section{To Solve Linear Algebraic Equations}

Researchers have been quite interested in solving a system of linear algebraic equations in the form of $Ax=b$ in a distributed way \cite{Lu_j,Lu_j_1,Liu_Ji_2017,Mou_2016}. In this section we deal with the problem under the assumption that this system of equations has at least one solution. The set of equations is decomposed into smaller sets and distributed to a network of $n$ processors, referred to as agents, to be solved in parallel. Agents can receive information from their neighbors and the neighbor relationships are described by a time-varying $n$-vertex directed graph $\mathcal{G}(t)$ with self-arcs.  When each agent knows only the pair of real-valued matrices $(A_i^{n_i \times m},b_i^{n_i\times 1})$, the problem of interest is to devise local algorithms such that all $n$ agents can  iteratively compute the same solution to the linear equation $Ax=b$, where $A=[A_1^\top,A_2^\top,\dots,A_n^\top]^\top, b=[b^\top_1,b^\top_2,\dots,b^\top_n]^\top$ and $\sum_{i=1}^{n}n_i=m$. A distributed algorithm to solve the problem is introduced in \cite{equation_Mou}, where the iterative updating rule for each agent $i$ is described by
\begin{align}
x^ i_{k+1} = {x^i_k} - {\frac{1}{d^ i_k}} {P_i}\Big( d^ i_k x^ i_k - \sum\limits_{j \in {{\cal N}_i}(k)} x^j_k \Big), 
k \in \N,\label{solve_equation}
\end{align}
where $x^ i_k \in \mathbb{R}^{m}$, $d^ i_k$ is the number of neighbors of agent $i$ at time $k$, ${\cal N}_i(k)$ is the collection of $i$'s neighbors, $P_i$ is the orthogonal projection on the kernel of $A_i$, and the initial value $x^ i_1$ is any solution to the equations of $A_ix=b_i$.

The results in \cite{equation_Mou} have shown that all $x^i_k$ converge to the same solution  exponentially fast if the sequence of graphs $\mathcal{G}(t)$ is repeatedly jointly strongly connected. This condition is restrictive  since it is required that for some integer $l$, the composition of the sequence of graphs, $\{\mathcal{G}(k),\dots,\mathcal{G}(k+l-1)\}$, must be strongly connected for any $t$. By the composition of a directed graph $\mathcal{G}_1$ with the vertex set {$\mathcal V$} with another directed graph $\mathcal{G}_2$ with the same vertex set {$\mathcal V$}, denoted by $\mathcal{G}_2 \circ \mathcal{G}_1$, we mean the directed graph with the vertex set {$\mathcal V$} and edge set defined in such a way that $(i,j)$ is an arc of the composition just in case there is a vertex $i_1$ such that $(i,i_1)$ is an edge in $\mathcal{G}_1$ and meanwhile $(i_1,j)$ is an edge in $\mathcal{G}_2$.
It is not so easy to satisfy this condition if the network is changing randomly. Now assume that the evolution of the sequence of graphs $\{\mathcal{G}(1),\dots,\mathcal{G}(k),\dots\}$ is driven by a random process. In this case, results in Theorem \ref{stability_finite} and Corollary \ref{Coroll_invariant} can be applied to relax the condition in \cite{equation_Mou} to achieve the following more general result.

\begin{theorem}\label{theorem_solving-equation}
	Suppose each agent updates its state $x^i_k$ according to the rule \eqref{solve_equation}. All states $x^i_k$ converge to the same solution to $Ax=b$ almost surely if the following two conditions are satisfied
	\begin{itemize}
		\item[a)] there exists an integer $l$ such that the composition of any sequence of randomly changing graphs $\{\mathcal{G}(k),\mathcal{G}(k+1),\dots,\mathcal{G}(k+l-1)\}$ is strongly connected with positive probability $p(k)>0$ for any $k \in \N$;
		\item[b)] there holds $\sum\nolimits_{i = 0}^\infty  {p\left( {{k} + il} \right)}  = \infty, \forall k.$
	\end{itemize}
\end{theorem}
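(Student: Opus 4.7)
The plan is to mirror the strategy of the proof of Theorem \ref{theorem_geral}, replacing the role of the scrambling coefficient of ergodicity with the quantitative contraction established for the iteration \eqref{solve_equation} in the deterministic analysis of \cite{equation_Mou}. First I would stack the individual updates into a collective linear recursion $\mathbf{x}_{k+1} = M(k)\mathbf{x}_k$ on $\R^{nm}$, where $\mathbf{x}_k$ is the column-stacking of $x^1_k,\dots,x^n_k$ and $M(k)$ is a random matrix depending only on the graph $\mathcal{G}(k)$ and the fixed projectors $P_i$. Since only finitely many directed graphs on $n$ vertices are possible and the $P_i$'s are fixed, only finitely many single-step matrices $M(k)$ arise, and therefore only finitely many products of length $l$ are realizable; this finiteness is the structural input that allows uniform constants to be extracted later.

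Next I would import the Lyapunov function from \cite{equation_Mou}: a continuous non-negative function $V:\R^{nm}\to\R_{\ge 0}$ that vanishes exactly when every $x^i$ coincides with a common solution of $Ax=b$, and which satisfies (i) $V(M(k)\mathbf{x})\le V(\mathbf{x})$ for every admissible single-step matrix, so that $\{V(\mathbf{x}_k)\}$ is a non-negative supermartingale with respect to $\mathcal{F}_k$; and (ii) whenever the $l$-length composition $\mathcal{G}(k+l-1)\circ\cdots\circ\mathcal{G}(k)$ is strongly connected, the product $M(k+l-1)\cdots M(k)$ satisfies $V(M(k+l-1)\cdots M(k)\mathbf{x})\le (1-\eta)V(\mathbf{x})$ for some $\eta\in(0,1)$ independent of $k$. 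Property (ii) is precisely the quantitative exponential contraction already established in the deterministic setting of \cite{equation_Mou}; uniformity of $\eta$ in $k$ is secured by taking the minimum of the realization-wise rates over the finite set of strongly connected length-$l$ products. Combining (i) and (ii) with condition a) of the theorem yields
\begin{align*}
\mathbb{E}\bigl[V(\mathbf{x}_{k+l})\mid \mathcal{F}_k\bigr]-V(\mathbf{x}_k)\le -\eta\,p(k)\,V(\mathbf{x}_k),
\end{align*}
which is exactly the finite-step drift condition used in Theorem \ref{stability_finite}, with $T=l$ and $\varphi_k(\mathbf{x})=\eta p(k)V(\mathbf{x})$.

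The endgame then proceeds as in the proof of Theorem \ref{theorem_geral}: iterating the displayed inequality along the subsequence indexed by multiples of $l$ and taking expectations produces the telescoping estimate $V(\mathbf{x}_0)-\mathbb{E}V(\mathbf{x}_{Nl})\ge\eta\sum_{i=0}^{N-1}p(il)\,\mathbb{E}V(\mathbf{x}_{il})$; non-negativity of $V$ bounds the left-hand side uniformly in $N$, while condition b) forces $\sum_{i} p(il)=\infty$, so by contradiction $\mathbb{E}V(\mathbf{x}_{il})\to 0$. Supermartingale convergence (Lemma \ref{supermartingale}) together with (i) gives $V(\mathbf{x}_k)\stackrel{a.s.}{\longrightarrow}\bar V$ for some random $\bar V\ge 0$, and Fatou's lemma applied to the subsequence forces $\mathbb{E}\bar V=0$, hence $\bar V=0$ almost surely. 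Thus $V(\mathbf{x}_k)\stackrel{a.s.}{\longrightarrow} 0$, which by construction of $V$ is equivalent to all $x^i_k$ converging to a common solution of $Ax=b$. The main obstacle is property (ii): one must verify that the contraction bound of \cite{equation_Mou}, originally stated under a deterministic repeated-connectivity assumption, can be replayed realization-by-realization for any $l$-length strongly connected composition, and then that the rates obtained admit a uniform positive lower bound. Finiteness of the set of realizable length-$l$ matrix products reduces this last step to a minimum over finitely many strictly positive numbers, so once positivity of the deterministic rate is confirmed per realization the stochastic extension is essentially automatic.
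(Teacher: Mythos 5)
Your scaffolding (error system, block norm as a finite-step Lyapunov function, supermartingale property, telescoping plus condition b)) matches the paper's proof, but your key step (ii) has a genuine gap: it is not true, and it is not what \cite{equation_Mou} provides, that strong connectivity of a single $l$-length composition forces the corresponding product to contract. The contraction criterion established in \cite{equation_Mou} and used in the paper is that $\BlockNorm{\Phi(k+T,k)}<1$ when for every pair $i,j$ there is a \emph{route} $j=i_0,i_1,\dots,i_T=i$ over the graph sequence with $\bigcup_{s=0}^{T}\{i_s\}=\mathcal V$, i.e.\ a single walk threading through \emph{all} $n$ vertices. Strong connectivity of the composition of $\{\mathcal G(k),\dots,\mathcal G(k+l-1)\}$ only gives, for each pair, some route of length $l$, which need not visit every vertex. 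Concretely, take $l=1$, $n=3$, $m=3$, each $A_i$ a single row, so each $P_i$ projects onto a plane in $\R^3$; then $\|P_iP_j\|_2=1$ for all $i,j$ (the planes intersect pairwise in lines) even though $Ax=b$ has a unique solution. Since the $(i,j)$ block of $\Phi(k+1,k)$ is $W_{ij}(k)P_iP_j$, every row of $\langle\Phi(k+1,k)\rangle$ sums to exactly $\sum_j W_{ij}(k)=1$, so $\BlockNorm{\Phi(k+1,k)}=1$ for \emph{every} strongly connected graph: no single-block contraction exists, and no uniform $\eta>0$ can be extracted, finitely many realizations or not. Replaying the deterministic analysis of \cite{equation_Mou} "realization-by-realization" on one block cannot repair this, because that analysis also only produces contraction over concatenations of many jointly strongly connected blocks, precisely for this reason.

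The paper closes exactly this gap with a combinatorial construction, and the finite-step horizon ends up being $q=(n-1)^2l$, not $l$. It partitions $\{\mathcal G(1),\dots,\mathcal G(q)\}$ into $r=n-1$ subsequences of length $rl$, each further split into $r$ sub-subsequences of length $l$; by condition a) all $r^2$ sub-subsequence compositions are simultaneously strongly connected with positive probability; since the composition of $n-1$ strongly connected graphs with self-arcs is complete, each subsequence then composes to a complete graph $\mathbb C_z$; and over the $r$ complete graphs $\mathbb C_1,\dots,\mathbb C_r$ one can thread a route $i_1,\dots,i_n$ visiting all $n$ vertices, which expands to a covering route over the full sequence of $q$ graphs. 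This yields $\mathbb E\BlockNorm{\Phi(k+q,k)}=\rho(k)<1$ and hence the drift inequality $\mathbb E[V(e_{k+q})|\mathcal F_k]-V(e_k)\le(\rho(k)-1)V(e_k)$ with $T=q$; from there your endgame (non-negativity, condition b) forcing $\sum_k(1-\rho(k))=\infty$, supermartingale convergence, positive invariance of the sublevel set, and Corollary \ref{Coroll_invariant}) goes through as you describe. So the proposal's overall architecture is right, but without the $(n-1)^2l$-step covering-route argument the central drift inequality is unjustified and, as stated with $T=l$, false.
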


To prove the theorem, we define an error system. Let $x^*$ be any solution to $Ax=b$, so $A_ix^*=b_i$ for any $i$. Then, we define 
\[
e^i_k=x^i_k-x^*,i \in \mathcal V, k\in \N,
\]
which, as is done  in \cite{equation_Mou}, can be simplified into
\begin{equation}\label{error_dis}
e^i_{k+1}=\frac{1}{d^ i_k}P_i\sum_{j\in\mathcal{N}_i(k)}P_j e^j_{k}.
\end{equation}
Let $e_k=[{e^1_k}^ \top,\dots,{e^n_k}^ \top]^\top$, $A(k)$ be the adjacency matrix of the graph $\mathcal{G}(k)$, $D(k)$ be the diagonal matrix whose $i$th diagonal entry is $d^ i_k$, and $W(k)=D^{-1}(k)A^\top(k)$. It is clear that $W(k)$ is a stochastic matrix, and $\{W(k)\}$ is a stochastic process. Now we  write equation \eqref{error_dis} into a compact form
\begin{equation}\label{compact_error}
e_{k+1}=P(W(k)\otimes I)Pe_k,k\in \N,
\end{equation}
where $\otimes$ denotes the Kronecker product, $P:={\rm diag} \{P_1,P_2,$ $\dots$, $P_n\}$, and $\{W(k)\}$ is a random process. We will show this error system is {globally a.s. asymptotically stable}. Define the transition matrix of this error system by 
\[
\Phi(k+T,k)=P(W(k+T-1)\otimes I)P\cdots P(W(k)\otimes I)P.
\]
In order to study the stability of the error system \eqref{compact_error}, we define a mixed-matrix norm for an $n \times n$ block matrix $Q=[Q_{ij}]$ whose $ij$th entry is a matrix $Q_{ij} \in \mathbb{R}^ {m \times m}$, and 
\[
\BlockNorm{Q} = {\left| {\left\langle Q \right\rangle } \right|_\infty },\]
where ${\left\langle Q \right\rangle }$ is the matrix in $\mathbb{R}^{n\times n}$ whose $ij$th entry is $|Q_{ij}|_2$. Here  $\| \cdot \|_2$ and $\| \cdot \|_\infty$ denote the induced 2 norm and infinity norm, respectively. It is easy to show that $\BlockNorm{\cdot}$ is a norm. Since $\|Ax\|_2\le \|A\|_2\|x\|_2$ for $x\in \mathbb{R}^{nm \times nm}$, it follows straightforwardly that $\BlockNorm{Ax} \le \BlockNorm{A}\BlockNorm{x}$. It has been proven in \cite{equation_Mou} that $\Phi(k+T,k)$ is non-expansive for any $k>0,T\ge0$. In other words, it holds that $\BlockNorm{\Phi(k+T,k)}\le1$. Moreover, the transition matrix is a contraction, i.e., $\BlockNorm {\Phi(k+T,k)}< 1$, if there exists a ``route" $j=i_0,i_1,\dots,i_T=i$ over the sequence $\{\mathcal{G}(k),\dots,\mathcal{G}(k+T-1)\}$ for any $i,j\in \mathcal V$ that satisfies $\bigcup\nolimits_{k = 0}^T {\left\{ {{i_k}} \right\}}  = \mathcal V$; here by a \emph{route} over a given sequence of graphs $\{\mathcal{G}(1),\mathcal{G}(2),\dots,\mathcal{G}(k)\}$, we mean a sequence of vertices $i_0,i_1,\dots,i_k$ such that $(i_{j-1},i_j)$ is an edge in $\mathcal{G}(z)$ for all $1\le z \le k$. Now we are ready to prove Theorem \ref{theorem_solving-equation}.
\begin{IEEEproof}[Proof of Theorem \ref{theorem_solving-equation}]
	Let $V(e_k)=\BlockNorm{e_k}$ be a finite-step stochastic Lyapunov function candidate. Let $\{\mathcal F_k\}$, where $\mathcal{F}_k= \sigma(\mathcal{G}(1), \cdots, \mathcal{G}(k), \cdots)$, be an increasing sequence of $\sigma$-fields. We first show that $V(e_k)$ is a supermartingale with respect to $\mathcal F_k$ by observing
	\begin{align*}
	\mathbb{E}\big[ {\left. {V\big( {e_{k + 1} } \big)} \right|{\mathcal{F}_k}} \big] = \mathbb{E}  \BlockNorm{{{\Phi _k}e_k}} 
	\le \mathbb{E}\BlockNorm{{\Phi _k}}\BlockNorm{e_k}\le \BlockNorm{e_k},
	\end{align*}
	where $\Phi_k=\Phi(k,k)=P(W(k)\otimes I)Pe_k$. The last inequality follows from the fact that $\mathbb{E}\BlockNorm{\Phi_k}\le 1$ since all the possible $\Phi_k$ are non-expansive. Consider the sequence of randomly changing graphs $\{\mathcal{G}(1),\mathcal{G}(2),\cdots,\mathcal{G}(q)\}$, where $q=(n-1)^2l$. Let $r=n-1$, and partition this sequence into $r$ successive subsequences $\mathcal{G}_1=\{\mathcal{G}(1),\dots,\mathcal{G}(rl)\}$, $\mathcal{G}_2=\{\mathcal{G}(rl+1),\dots,\mathcal{G}(2rl)\}$,$\cdots$, $\mathcal{G}_r=\{\mathcal{G}((r-1)l+1),\dots,\mathcal{G}(r^2l)\}$. Let $\mathbb{C}_z$ denote the composition of the graphs in the $z$th subsequence, i.e., $\mathbb{C}_z = \mathcal{G}\left( {zl} \right) \circ  \cdots  \circ \mathcal{G}\left( {(z - 1)l + 2} \right) \circ \mathcal{G}\left( {(z - 1)l + 1} \right),z=1,2,\dots,r$. Since all the subsequences have the length $rl$, each can be further partitioned into $r$ successive sub-subsequences of length $l$. From the condition of Theorem \ref{theorem_solving-equation}, one knows that the composition of the graphs in any sub-subsequence has positive probability to be strongly connected. The event that the composition of the graphs in each of the $r$ sub-subsequences in $\mathcal{G}_z$  is strongly connected also has  positive probability. This holds for all $z$. We know that the composition of any $r$ or more strongly connected graphs, within which each vertex has a self-arc, results in a complete graph \cite{Reaching_consensus}. It follows straightforwardly that the graphs $\mathbb{C}_1,\cdots,\mathbb{C}_r$ have positive probability to be all complete. Therefore, for any pair $i,j\in \mathcal V$, there exists a route from $j$ to $i$ over the graph $\mathbb{C}_z$ for any $z$. It is easy to check that there exists a route $i_1,i_2,\dots,i_n$ over the graphs $\mathbb{C}_1,\cdots,\mathbb{C}_r$, where $i_1,i_2,\dots,i_n$ can be any reordered sequence of $\{1,2,\dots,n\}$. Similarly, for any $x$ there must exist a route of length $rl$, $i_z=i_z^1,i_z^2,\dots,i_z^{rl}=i_{z+1}$, over $\mathcal{G}_z$. Thus there is a route $i_1^1,i_1^2, \ldots ,i_1^{rl},i_2^2, \ldots ,i_2^{rl} \ldots ,i_r^{rl}$ over the graph sequence $\{\mathcal{G}(1),\mathcal{G}(2),\cdots,\mathcal{G}(q)\}$ so that $\bigcup\nolimits_{\delta=1} ^r {\bigcup\nolimits_{\theta  = 1}^{rl} {\left\{ {i_\delta ^\theta } \right\}} }  = {\mathcal V}$. This implies that the probability that $\Phi(q,1)$ being a contraction is positive. Since all $\Phi(q,1)$ are non-expansive,  there is a number $\rho(1)<1$ such that $\mathbb{E} \BlockNorm{\Phi(q,1)}=\rho(1)$. Straightforwardly, it also holds $\mathbb{E}\BlockNorm{\Phi(k+q,k)}=\rho(k)<1$ for all $k<\infty$. Thus there a.s. holds that
	\begin{align*}
	\mathbb{E}\big[ {\left. {V\left( {e_{k+q}} \right)} \right|{\mathcal{F}_k}} \big] &- V(e_k)= 	\mathbb{E} \BlockNorm{{\Phi \left( {k + q,k} \right)}e_k}  - V(e_k)\\
	&\le \mathbb{E} \BlockNorm{\Phi \left( {k + q,k} \right)} \cdot \BlockNorm {e_k} - V(e_k)\\
	&= ( \rho (k)-1)V(e_k).
	\end{align*}	
	Similarly as in the proof of Theorem  \ref{theorem_geral}, the condition b) in Theorem \ref{theorem_solving-equation} ensures that $\sum\nolimits_{i=1}^{\infty}(1-\rho(k))=\infty$. It follows that $V\left(e_k\right)\stackrel{a.s.}{\longrightarrow}0$ as $t\to \infty$ since $V(e_0)-\mathbb{E}\big[ {\left. {V\left( {e_{nq}} \right)} \right|{\mathcal{F}_k}} \big] < \infty $ for any $N$. Define the set $\mathcal{Q}:=\{e:V(e)\le V(e_1)\}$ for any initial $e_1$ corresponding to $x_1$. For any random sequence $\{\mathcal{G}(k)\}$, it follows from the system dynamics \eqref{compact_error} that 
	\begin{equation}
	V(e_k)\le V(e_{k-1})\cdots \le V(e_{2}) \le V(e_{1}),\nonumber
	\end{equation}
	and thus $e_k$ will stay within the set $\mathcal{Q}$ with probability $1$. From Theorem \ref{stability_finite} and Corollary \ref{Coroll_invariant}, it follows that $e_k$  asymptotically converges to  $\{e:V(e)=0\}$ almost surely.  Moreover, since $V(e)$ is a norm of $e$, it can be concluded  {from Corollary \ref{Coroll_invariant}} that the error system \eqref{compact_error} is {globally a.s. asymptotically stable}. The proof is complete.	
\end{IEEEproof}
 It is worth mentioning  that the error system is {globally a.s. exponentially stable} under the assumption that the probability of the composition of any sequence of randomly-changing graphs, $\{\mathcal{G}(k),\dots,\mathcal{G}(k+1),\mathcal{G}(k+l-1)\}$, for any $k\ge 0$, being strongly connected is lower bounded by some positive number. This can be proven with the help of Theorem \ref{stability_expone_finite} and Corollary \ref{Coroll_invariant_expo}. 

\section{Concluding Remarks}
We have established the tool of finite-step stochastic Lyapunov functions, using which one can study the {convergence and stability} of a stochastic system together with its convergence rate. As  applications, we investigate the convergence of the products of a random sequence of stochastic matrices. The asynchronous agreement problem and the distributed algorithm for solving linear algebraic equations have also been studied. Conditions in the existing results on both of these problems have been relaxed. One of our future research directions is to apply finite-step stochastic Lyapunov functions to the study of stochastic distributed optimization. 


\section{Acknowledgement}
We thank Prof. Tobias M\"uller from  Bernoulli Institute, University of Groningen, for constructive discussions.

\appendices
\section{An Alternative Proof of Corollary \ref{stationary_ergodic}}\label{appendix_1}
%

For ergodic stationary sequences, the following important property is the key to construct the convergence rate.

\begin{lemma}[{Birkhoff's Ergodic Theorem, see \cite[Th. 7.2.1]{prob_book}}] \label{Birkhoff's} 
	For an ergodic sequence $\{X_k\},k \in \N_{\ge0}$, of random variables, it holds that 
	\begin{equation}
	\lim\limits_{m \to \infty}\frac{1}{m}\sum_{k=0}^{m-1}X_k \stackrel{a.s.}{\longrightarrow} E(X_0)
	\end{equation}
\end{lemma}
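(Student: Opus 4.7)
The plan is to give the classical proof via Hopf's Maximal Ergodic Lemma. Since $\{X_k\}$ is assumed stationary ergodic and $X_0\in L^1$ (implicit in $\E(X_0)$ being well-defined), there is a measure-preserving map $T$ on $(\Omega,\mathcal F,\Pr)$ with $X_k=X_0\circ T^k$, and every $T$-invariant event has probability $0$ or $1$. Writing $S_n(f)=\sum_{k=0}^{n-1} f\circ T^k$ and $\mu=\E(X_0)$, it suffices (by replacing $X_k$ with $X_k-\mu$) to prove $S_n(X_0)/n \xrightarrow{a.s.} 0$ in the centered case $\mu=0$.

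The first main step is to establish the Maximal Ergodic Lemma: for any $f\in L^1$, if $M_n=\max(0,S_1(f),\ldots,S_n(f))$, then $\int_{\{M_n>0\}} f\, d\Pr \ge 0$. The proof uses the pointwise identity $S_k(f)=f+S_{k-1}(f)\circ T \le f + M_n\circ T$ for $1\le k\le n$, hence $M_n \le f+M_n\circ T$ on $\{M_n>0\}$; integrating over $\{M_n>0\}$ and using the $T$-invariance $\int M_n\circ T\, d\Pr=\int M_n\, d\Pr$ yields the inequality.

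Next, set $X^{*}=\limsup_n S_n(X_0)/n$ and $X_{*}=\liminf_n S_n(X_0)/n$. Since $(S_{n+1}(X_0)/(n+1)) - (S_n(X_0)\circ T)/(n+1) = X_0/(n+1)\to 0$, both $X^{*}$ and $X_{*}$ are $T$-invariant, hence almost-sure constants by ergodicity. For rationals $\alpha<\beta$, the set $E:=\{X_{*}<\alpha\}\cap\{X^{*}>\beta\}$ is $T$-invariant. Applied to the induced system on $E$ with $f=X_0-\beta$, the Maximal Ergodic Lemma gives $\int_E (X_0-\beta)\, d\Pr \ge 0$, since $\limsup S_n(X_0-\beta)/n>0$ on $E$ forces $M_n>0$ eventually on $E$. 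Symmetrically, with $f=\alpha-X_0$, it gives $\int_E(\alpha-X_0)\,d\Pr\ge 0$. Combining, $\beta\Pr(E)\le \int_E X_0\,d\Pr \le \alpha\Pr(E)$, which forces $\Pr(E)=0$. A countable union over rational pairs $\alpha<\beta$ shows $X^{*}=X_{*}$ almost surely, so $S_n(X_0)/n$ converges almost surely to some constant $c$.

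The final step is to identify $c=\mu$. Direct dominated convergence fails since $|S_n(X_0)/n|$ need not be dominated, so the plan is to truncate: let $X_0^{M}=X_0\cdot\mathbbm{1}_{\{|X_0|\le M\}}$, apply the almost-sure convergence just established to the bounded sequence $\{X_0^{M}\circ T^k\}$ (whose limit equals $\E(X_0^{M})$ by bounded convergence), and use the $L^1$ bound $\E|S_n(X_0-X_0^{M})/n| \le \E|X_0-X_0^{M}|\to 0$ together with $\E(X_0^{M})\to\mu$ to pass $M\to\infty$. The main obstacle is the Maximal Ergodic Lemma together with the sandwich step on the invariant set $E$: making the lemma rigorous on the induced subsystem requires care, and everything else reduces to standard truncation and limit arguments.
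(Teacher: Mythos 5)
This lemma is stated in the paper purely as a quoted result, with a citation to the probability textbook (Th.~7.2.1 of the cited reference), and no proof of it is given in the paper, so there is no in-paper argument to compare against. Your proposal is the classical Hopf--Garsia proof --- the maximal ergodic lemma, $T$-invariance of $\limsup_n S_n/n$ and $\liminf_n S_n/n$, the sandwich argument on the invariant set $E$ to force $\Pr(E)=0$ for each rational pair $\alpha<\beta$, and truncation plus Fatou to identify the limiting constant as $\E(X_0)$ --- which is correct in outline (the steps you flag as needing care, namely passing from $\{M_n>0\}$ to $E$ by monotone convergence and the restriction of the maximal lemma to the invariant set, are standard) and is essentially the same proof as in the reference the paper cites.
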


For the product given in \eqref{product}, we say $W(k,0)$ converges to a rank-one matrix $W=1\xi^\top$ a.s. as $k\to\infty$ if  $\tau(W(k,0))\to 0$ as $k\to \infty$, where $\tau(\cdot)$ is defined in \eqref{coeffi_ergo}. According to Definition \ref{def:convergence}, if there exist $\beta>1$ such that 
\begin{equation}
\beta^k\tau\big(W(k,0)\big) \stackrel{a.s.}{\longrightarrow} 0, k\to\infty,
\end{equation} 
then the convergence rate is said to be exponential at the rate no slower than $\beta^{-1}$. We are now ready to present the proof of Corollary \ref{stationary_ergodic}.

\begin{proof}[Proof of Corollary \ref{stationary_ergodic}]
	Let $h$ be the same as that in Assumption \ref{assumption_2}. There is an integer $\theta\in\N$ such that $W(t+\theta h,t)$ is scrambling with positive probability. Let $T=\theta h$. Consider a sufficiently large $r$, and then $W(r,0)$ can be written as
	\[W(r,0)) = \bar W \cdot W\big( {mT,\left( {m - 1} \right)T} \big) \cdots W\left( {T,{0}} \right),\]
	where $m$ is the largest integer such that $mT\le r$, $W \left( {kT+T,kT}\right),k=0,\cdots,m-1$, are the matrix products defined by \eqref{product}, and ${\bar W}=W(r,mT)$ is the remaining part, which is obviously a stochastic matrix. To study the limiting behavior of $W(r,0)$, we compute its coefficients of ergodicity
	\begin{align*}
	\tau \big( {W\left( {r,0} \right)} \big) \le \tau\left( {\bar W} \right)&\prod\limits_{k = 0}^{m-1} \tau \big( W\left( {kT+T,kT } \right) \big)\\
	&\le \prod\limits_{k = 0}^{m-1} \tau \big( W\left( {kT+T,kT } \right) \big),	
\end{align*}
	where the property \eqref{coeffi_submulti} has been used. The last inequality follows from the property of coefficients of ergodicity, i.e., $\tau(A)\le 1$ for a stochastic matrix $A$. 
	Taking logarithms yields that
	\begin{equation}\label{log_coeffi}
	\log \tau \big( {W\left( {r,0} \right)} \big) \le \sum\limits_{k = 0}^{m-1} {\log \tau\big( W\left( {KT+T,kT} \right) \big)}.
	\end{equation}
	Since  the sequence $\{W(k)\}$ is ergodic, it is easy to see that the sequence of products $\{W\left( {kT+T,kT} \right)\}$, $k=0,\cdots,m-1$, over non-overlapping intervals of length $T$, is also ergodic. It follows in turn that $\{\log \tau\big( W\left( {kT+T,kT} \right) \big)\}$ is ergodic. From Lemma \ref{Birkhoff's}, one can further obtain
	\begin{align*}
	\lim\limits_{m\to\infty}\frac{1}{m}\sum\limits_{k = 0}^{m - 1} &\log \tau \big( W\left( {kT+T,kT } \right) \big)\\
	&\stackrel{a.s.}{\longrightarrow} \mathbb{E}\Big[ {\log \tau\big( {W \left( {{T},0} \right)} \big)} \Big]
	\le \log \mathbb{E}\big[\tau\big( {W \left( {{T},0} \right)} \big)\big].
	\end{align*}
	The last inequality follows from Jensen's inequality (see {\cite[Th. 1.5.1]{prob_book}}) since $\log(\cdot)$ is concave. According to Assumption \ref{assumption_1}, one knows that $W(t+h,t)$ is scrambling with positive probability, and thus it follows that $0<\mathbb{E}\big[\tau\left( {W \left( {{T},0} \right)} \right)\big]<1$. Taking a positive number $\lambda$ satisfying $\lambda<-\log \mathbb{E}\big[\tau\big( {W \left( {{T},0} \right)} \big)\big]$, one obtains 
	\[m\lambda + \sum\limits_{k = 0}^{m - 1} {\log \tau\big( {W \left( {{KT+T},k{T}} \right)} \big)}  \stackrel{a.s.}{\longrightarrow}  - \infty. \]
	Adding $m\lambda$ to both sides of \eqref{log_coeffi} yields that
	\begin{align*}
	m \lambda + \log \tau &\big( {W\left( {r,0} \right)} \big)\\
	&\le m \lambda + \sum\limits_{k = 0}^{m - 1} {\log \tau\big( {W \left( kT+T,kT \right)} \big)} \stackrel{a.s.}{\longrightarrow}  - \infty .
	\end{align*}
	 It follows straightforwardly that
	\[{\left( {{e^\lambda}} \right)^m}\tau \big( {W\left( {r,0} \right)} \big) \stackrel{a.s.}{\longrightarrow} 0.\]
	Let $\beta=e^\lambda$, which apparently satisfies $\beta>1$. From Definition \ref{def:convergence}, one can conclude that the product $W(k,0)$ almost surely converges to a rank-one stochastic matrix exponentially at a rate no slower than $\beta^{-1}$, which completes the proof.
\end{proof}

\bibliographystyle{IEEEtran}
\bibliography{IEEEabrv,stochastic_Lyapunov_arXiv}

\begin{thebibliography}{10}
\providecommand{\url}[1]{#1}
\csname url@samestyle\endcsname
\providecommand{\newblock}{\relax}
\providecommand{\bibinfo}[2]{#2}
\providecommand{\BIBentrySTDinterwordspacing}{\spaceskip=0pt\relax}
\providecommand{\BIBentryALTinterwordstretchfactor}{4}
\providecommand{\BIBentryALTinterwordspacing}{\spaceskip=\fontdimen2\font plus
\BIBentryALTinterwordstretchfactor\fontdimen3\font minus
  \fontdimen4\font\relax}
\providecommand{\BIBforeignlanguage}[2]{{%
\expandafter\ifx\csname l@#1\endcsname\relax
\typeout{** WARNING: IEEEtran.bst: No hyphenation pattern has been}%
\typeout{** loaded for the language `#1'. Using the pattern for}%
\typeout{** the default language instead.}%
\else
\language=\csname l@#1\endcsname
\fi
#2}}
\providecommand{\BIBdecl}{\relax}
\BIBdecl

\bibitem{Kushner_1972}
H.~J. Kushner, \emph{Stochastic Stability and Control}.\hskip 1em plus 0.5em
  minus 0.4em\relax New York, NY, USA: Academic Press, 1967.

\bibitem{Kushner_1971}
------, \emph{Introduction to Stochastic Control}.\hskip 1em plus 0.5em minus
  0.4em\relax New York: Holt, Rinehart and Winston, Inc., 1971.

\bibitem{Kushner_1965}
------, ``On the stability of stochastic dynamical systems,'' \emph{Proceedings
  of the National Academy of Sciences}, vol.~53, no.~1, pp. 8--12, 1965.

\bibitem{beutler1973two}
F.~J. Beutler, ``On two discrete-time system stability concepts and
  supermartingales,'' \emph{Journal of Mathematical Analysis and Applications},
  vol.~44, no.~2, pp. 464--471, 1973.

\bibitem{Khasminskii}
R.~Khasminskii, \emph{Stochastic Stability of Differential Equations}.\hskip
  1em plus 0.5em minus 0.4em\relax Springer Science \& Business Media, 2011.

\bibitem{porfiri2007consensus}
M.~Porfiri and D.~J. Stilwell, ``Consensus seeking over random weighted
  directed graphs,'' \emph{{IEEE} Trans. Autom. Control}, vol.~52, no.~9, pp.
  1767--1773, 2007.

\bibitem{Tahbaz-Salehi_2010}
A.~Tahbaz-Salehi and A.~Jadbabaie, ``Consensus over ergodic stationary graph
  processes,'' \emph{{IEEE} Trans. Autom. Control}, vol.~55, no.~1, pp.
  225--230, 2010.

\bibitem{lee2017stochastic}
S.~Lee, A.~Nedi{\'c}, and M.~Raginsky, ``Stochastic dual averaging for
  decentralized online optimization on time-varying communication graphs,''
  \emph{{IEEE} Trans. Autom. Control}, vol.~62, no.~12, pp. 6407--6414, 2017.

\bibitem{nedic2016stochastic}
A.~Nedi{\'c} and A.~Olshevsky, ``Stochastic gradient-push for strongly convex
  functions on time-varying directed graphs,'' \emph{{IEEE} Trans. Autom.
  Control}, vol.~61, no.~12, pp. 3936--3947, 2016.

\bibitem{nilsson1998stochastic}
J.~Nilsson, B.~Bernhardsson, and B.~Wittenmark, ``Stochastic analysis and
  control of real-time systems with random time delays,'' \emph{Automatica},
  vol.~34, no.~1, pp. 57--64, 1998.

\bibitem{yang2011filtering}
R.~Yang, P.~Shi, and G.-P. Liu, ``Filtering for discrete-time networked
  nonlinear systems with mixed random delays and packet dropouts,''
  \emph{{IEEE} Trans. Autom. Control}, vol.~56, no.~11, pp. 2655--2660, 2011.

\bibitem{wu2011consensus}
J.~Wu and Y.~Shi, ``Consensus in multi-agent systems with random delays
  governed by a {M}arkov chain,'' \emph{Systems \& Control Letters}, vol.~60,
  no.~10, pp. 863--870, 2011.

\bibitem{tsitsiklis1986distributed}
J.~Tsitsiklis, D.~Bertsekas, and M.~Athans, ``Distributed asynchronous
  deterministic and stochastic gradient optimization algorithms,'' \emph{{IEEE}
  Trans. Autom. Control}, vol.~31, no.~9, pp. 803--812, 1986.

\bibitem{lee2016asynchronous}
S.~Lee and A.~Nedi{\'c}, ``Asynchronous gossip-based random projection
  algorithms over networks,'' \emph{{IEEE} Trans. Autom. Control}, vol.~61,
  no.~4, pp. 953--968, 2016.

\bibitem{Aeyels}
D.~Aeyels and J.~Peuteman, ``A new asymptotic stability criterion for nonlinear
  time-variant differential equations,'' \emph{{IEEE} Trans. Autom. Control},
  vol.~43, no.~7, pp. 968--971, 1998.

\bibitem{Geiselhart}
R.~Geiselhart, R.~H. Gielen, M.~Lazar, and F.~R. Wirth, ``An alternative
  converse {L}yapunov theorem for discrete-time systems,'' \emph{Systems \&
  Control Letters}, vol.~70, pp. 49--59, 2014.

\bibitem{Gielen}
R.~H. Gielen and M.~Lazar, ``On stability analysis methods for large-scale
  discrete-time systems,'' \emph{Automatica}, vol.~55, pp. 66--72, 2015.

\bibitem{optimization}
A.~Nedi{\'c} and A.~Olshevsky, ``Distributed optimization over time-varying
  directed graphs,'' \emph{{IEEE} Trans. Autom. Control}, vol.~60, no.~3, pp.
  601--615, 2015.

\bibitem{optimization_1}
A.~Nedi{\'c} and A.~Ozdaglar, ``Distributed subgradient methods for multi-agent
  optimization,'' \emph{{IEEE} Trans. Autom. Control}, vol.~54, no.~1, pp.
  48--61, 2009.

\bibitem{Reaching_consensus}
M.~Cao, A.~S. Morse, and B.~D.~O. Anderson, ``Reaching a consensus in a
  dynamically changing environment: A graphical approach,'' \emph{SIAM J.
  Control Optim.}, vol.~47, no.~2, pp. 575--600, 2008.

\bibitem{Agreeing_asynchronously}
------, ``Agreeing asynchronously,'' \emph{{IEEE} Trans. Autom. Control},
  vol.~53, no.~8, pp. 1826--1838, 2008.

\bibitem{Tahbaz-Salehi_2008}
A.~Tahbaz-Salehi and A.~Jadbabaie, ``A necessary and sufficient condition for
  consensus over random networks,'' \emph{{IEEE} Trans. Autom. Control},
  vol.~53, no.~3, pp. 791--795, 2008.

\bibitem{Wu_Chai}
C.~W. Wu, ``Synchronization and convergence of linear dynamics in random
  directed networks,'' \emph{{IEEE} Trans. Autom. Control}, vol.~51, no.~7, pp.
  1207--1210, 2006.

\bibitem{Hendrickx}
J.~M. Hendrickx, G.~Shi, and K.~H. Johansson, ``Finite-time consensus using
  stochastic matrices with positive diagonals,'' \emph{{IEEE} Trans. Autom.
  Control}, vol.~60, no.~4, pp. 1070--1073, 2015.

\bibitem{Liu_ji}
A.~Nedi{\'c} and J.~Liu, ``On convergence rate of weighted-averaging dynamics
  for consensus problems,'' \emph{{IEEE} Trans. Autom. Control}, vol.~62,
  no.~2, pp. 766--781, 2017.

\bibitem{Touri}
B.~Touri and A.~Nedi{\'c}, ``Product of random stochastic matrices,''
  \emph{{IEEE} Trans. Autom. Control}, vol.~59, no.~2, pp. 437--448, 2014.

\bibitem{Touri_1}
------, ``On ergodicity, infinite flow, and consensus in random models,''
  \emph{{IEEE} Trans. Autom. Control}, vol.~56, no.~7, pp. 1593--1605, 2011.

\bibitem{Liu_Ji_2017}
J.~Liu, A.~S. Morse, A.~Nedi{\'c}, and T.~Ba{\c{s}}ar, ``Exponential
  convergence of a distributed algorithm for solving linear algebraic
  equations,'' \emph{Automatica}, vol.~83, pp. 37--46, 2017.

\bibitem{Mou_2016}
S.~Mou, Z.~Lin, L.~Wang, D.~Fullmer, and A.~S. Morse, ``A distributed algorithm
  for efficiently solving linear equations and its applications ({S}pecial
  {I}ssue {JCW}),'' \emph{Systems \& Control Letters}, vol.~91, pp. 21--27,
  2016.

\bibitem{equation_Mou}
S.~Mou, J.~Liu, and A.~S. Morse, ``A distributed algorithm for solving a linear
  algebraic equation,'' \emph{{IEEE} Trans. Autom. Control}, vol.~60, no.~11,
  pp. 2863--2878, 2015.

\bibitem{reif1999stochastic}
K.~Reif, S.~Gunther, E.~Yaz, and R.~Unbehauen, ``Stochastic stability of the
  discrete-time extended {K}alman filter,'' \emph{{IEEE} Trans. Autom.
  Control}, vol.~44, no.~4, pp. 714--728, 1999.

\bibitem{Gallager}
R.~G. Gallager, \emph{Discrete Stochastic Processes}.\hskip 1em plus 0.5em
  minus 0.4em\relax Springer Science \& Business Media, 2012.

\bibitem{Bitmead}
R.~Bitmead and B.~D.~O. Anderson, ``Lyapunov techniques for the exponential
  stability of linear difference equations with random coefficients,''
  \emph{{IEEE} Trans. Autom. Control}, vol.~25, no.~4, pp. 782--787, 1980.

\bibitem{nishimura2016conditions}
Y.~Nishimura, ``Conditions for local almost sure asymptotic stability,''
  \emph{Systems \& Control Letters}, vol.~94, pp. 19--24, 2016.

\bibitem{kushner2014partial}
H.~J. Kushner, ``A partial history of the early development of continuous-time
  nonlinear stochastic systems theory,'' \emph{Automatica}, vol.~50, no.~2, pp.
  303--334, 2014.

\bibitem{prob_book}
R.~Durrett, \emph{Probability: Theory and Examples}.\hskip 1em plus 0.5em minus
  0.4em\relax Cambridge {U}niversity {P}ress, 2010.

\bibitem{Nonnegative_Matrices}
E.~Seneta, \emph{Non-negative Matrices and {M}arkov chains}.\hskip 1em plus
  0.5em minus 0.4em\relax Springer Science \& Business Media, 2006.

\bibitem{blondel2014decide}
V.~D. Blondel and A.~Olshevsky, ``How to decide consensus? {A} combinatorial
  necessary and sufficient condition and a proof that consensus is decidable
  but {NP}-hard,'' \emph{SIAM J. Control Optim.}, vol.~52, no.~5, pp.
  2707--2726, 2014.

\bibitem{xia_2}
W.~Xia, J.~Liu, M.~Cao, K.~H. Johansson, and T.~Ba\c{s}ar, ``Products of
  generalized stochastic sarymsakov matrices,'' \emph{arXiv:1509.03900}, 2015.

\bibitem{Wolfowitz}
J.~Wolfowitz, ``Products of indecomposable, aperiodic, stochastic matrices,''
  \emph{Proceedings of the American Mathematical Society}, vol.~14, no.~5, pp.
  733--737, 1963.

\bibitem{Chen_yao}
Y.~Chen, W.~Xia, M.~Cao, and J.~L{\"u}, ``Asynchronous implementation of
  distributed coordination algorithms: conditions using partially scrambling
  and essentially cyclic matrices,'' \emph{{IEEE} Trans. Autom. Control},
  vol.~63, no.~6, pp. 1655--1662, 2018.

\bibitem{xia_scrambling}
W.~Xia and M.~Cao, ``Sarymsakov matrices and asynchronous implementation of
  distributed coordination algorithms,'' \emph{{IEEE} Trans. Autom. Control},
  vol.~59, no.~8, pp. 2228--2233, 2014.

\bibitem{Qin2017}
Y.~Qin, M.~Cao, and B.~D.~O. Anderson, ``Asynchronous agreement through
  distributed coordination algorithms associated with periodic matrices,'' in
  \emph{Proc. 20th IFAC Congress, Toulouse, France, 2017}.

\bibitem{Cassandras}
C.~G. Cassandras and S.~Lafortune, \emph{Introduction to Discrete Event
  Systems}.\hskip 1em plus 0.5em minus 0.4em\relax Springer Science \& Business
  Media, 2009.

\bibitem{Bertsekas}
D.~P. Bertsekas and J.~N. Tsitsiklis, \emph{Parallel and Distributed
  Computation: Numerical Methods}.\hskip 1em plus 0.5em minus 0.4em\relax
  Englewood Cliffs, NJ: Prentice Hall, 1989.

\bibitem{Lu_j}
J.~Lu and C.~Y. Tang, ``Distributed asynchronous algorithms for solving
  positive definite linear equations over networks-part {I}: {A}gent
  networks,'' \emph{IFAC Proceedings Volumes}, vol.~42, no.~20, pp. 252--257,
  2009.

\bibitem{Lu_j_1}
------, ``Distributed asynchronous algorithms for solving positive definite
  linear equations over networks—part {II}: wireless networks,'' \emph{IFAC
  Proceedings Volumes}, vol.~42, no.~20, pp. 258--263, 2009.

\end{thebibliography}

\end{document}